\def\RE{\ensuremath{\mathrm{RE}}}
\def\P{\ensuremath{\mathrm{P}}}
\def\QP{\ensuremath{\mathrm{QP}}}
\def\NP{\ensuremath{\mathrm{NP}}}
\def\NQP{\ensuremath{\mathrm{NQP}}}
\def\E{\ensuremath{\mathrm{E}}}
\def\EE{\ensuremath{\mathrm{EE}}}
\def\NE{\ensuremath{\mathrm{NE}}}
\def\NEE{\ensuremath{\mathrm{NEE}}}
\def\FP{\ensuremath{\mathrm{FP}}}
\def\UP{\ensuremath{\mathrm{UP}}}
\def\DisjNP{\ensuremath{\mathrm{DisjNP}}}
\def\DisjCoNP{\ensuremath{\mathrm{DisjCoNP}}}
\def\coNP{\ensuremath{\mathrm{coNP}}}
\def\coNQP{\ensuremath{\mathrm{coNQP}}}
\def\coNE{\ensuremath{\mathrm{coNE}}}
\def\coNEE{\ensuremath{\mathrm{coNEE}}}
\def\TFNP{\ensuremath{\mathrm{TFNP}}}
\def\SPARSE{\ensuremath{\mathrm{SPARSE}}}
\def\TAUT{\ensuremath{\mathrm{TAUT}}}
\def\SAT{\ensuremath{\mathrm{SAT}}}
\def\QBF{\ensuremath{\mathrm{QBF}}}
\def\TFNP{\ensuremath{\mathrm{TFNP}}}
\def\BPP{\ensuremath{\mathrm{BPP}}}
\def\ZPP{\ensuremath{\mathrm{ZPP}}}
\def\AM{\ensuremath{\mathrm{AM}}}
\def\PSPACE{\ensuremath{\mathrm{PSPACE}}}
\def\PH{\ensuremath{\mathrm{PH}}}
\def\NTIME{\ensuremath{\mathrm{NTIME}}}
\def\UTIME{\ensuremath{\mathrm{UTIME}}}
\def\N{\ensuremath{\mathrm{\mathbb{N}}}}
\def\R{\ensuremath{\mathrm{\mathbb{R}}}}
\def\impl{\ensuremath{\Rightarrow}}
\def\leqmp{\ensuremath{\leq_\mathrm{m}^\mathrm{p}}}
\def\psim{\ensuremath{\leq^\mathrm{p}}}
\DeclareMathOperator{\dom}{dom}
\DeclareMathOperator{\ran}{ran}
\DeclareMathOperator{\supp}{supp}
\DeclareMathOperator{\enc}{\mathrm{enc}}
\DeclarePairedDelimiter\ceil{\lceil}{\rceil}
\DeclarePairedDelimiter\floor{\lfloor}{\rfloor}
\def\sqsubsetneq{\mathrel{\sqsubseteq\kern-0.92em\raise-0.15em\hbox{\rotatebox{313}{\scalebox{1.1}[0.75]{\(\shortmid\)}}}\scalebox{0.3}[1]{\ }}}
\def\sqsupsetneq{\mathrel{\sqsupseteq\kern-0.92em\raise-0.15em\hbox{\rotatebox{313}{\scalebox{1.1}[0.75]{\(\shortmid\)}}}\scalebox{0.3}[1]{\ }}}
\newcommand{\card}[1]{\##1}
\def\runtime{\ensuremath{\mathrm{time}}}
\newcommand{\ls}[1]{\mathrm{ls}(#1)}
\newcommand{\xTrueThenElse}[2]{\if\x1{#1}\else{#2}\fi}
\newcommand{\lqq}{\lq\lq}   
\newcommand{\rqq}{\rq\rq}   
\newcommand{\elqq}{\lqq}    
\newcommand{\erqq}{\rqq}    
\newcommand{\eqq}[1]{\elqq #1\erqq}
\newcommand{\tn}[1]{\textnormal{#1}}    
\newcommand{\hhat}[1]{#1}
\title{Optimal Proof Systems for Complex Sets are Hard to Find}
\titlerunning{Optimal Proof Systems for Complex Sets are Hard to Find}
\author{Fabian Egidy}{University of Würzburg, Germany}{fabian.egidy@uni-wuerzburg.de}{https://orcid.org/0000-0001-8370-9717}{supported by the German Academic Scholarship Foundation.}
\author{Christian Glaßer}{University of Würzburg, Germany}{christian.glasser@uni-wuerzburg.de}{https://orcid.org/0009-0006-0572-8748}{}
\authorrunning{Fabian Egidy and Christian Glaßer}
\keywords{Computational Complexity, Oracles, Relativization, Proof Complexity, Optimal Proof Systems}
\begin{document}
\maketitle
\begin{abstract}
We provide the first evidence for the inherent difficulty of finding complex sets with optimal proof systems.
For this, we construct oracles $O_1$ and $O_2$ with the following properties,
where $\RE$ denotes the class of recursively enumerable sets and
$\NQP$ the class of sets accepted in non-deterministic quasi-polynomial time.
\medskip
\begin{itemize}
    \item[] $O_1$: No set in $\PSPACE \setminus \NP$ has optimal proof systems and $\PH$ is infinite
    \item[] $O_2$: No set in $\RE \setminus \NQP$ has optimal proof systems and $\NP \neq \coNP$
\end{itemize}
\medskip
\noindent
Oracle $O_2$ is the first relative to which
complex sets with optimal proof systems do not exist.
By oracle $O_1$, no relativizable proof can show that
there exist sets in $\PSPACE \setminus \NP$ with optimal proof systems, even when assuming an infinite $\PH$.
By oracle $O_2$, no relativizable proof can show that
there exist sets outside $\NQP$ with optimal proof systems, even when assuming $\NP \neq \coNP$.
This explains the difficulty of the following longstanding open questions raised by
Krajíček and Pudlák in 1989, Sadowski in 1997, Köbler and Messner in 1998, and Messner in 2000.
\medskip
\begin{itemize}
    \item[] Q1: Are there sets outside $\NP$ with optimal proof systems?
    \item[] Q2: Are there arbitrarily complex sets outside $\NP$ with optimal proof systems?
\end{itemize}
\medskip
\hspace{1.5em}
Moreover, relative to $O_2$,
there exist arbitrarily complex sets $L \notin \NQP$ having almost optimal algorithms,
but none of them has optimal proof systems.
This explains the difficulty of Messner's approach to translate almost optimal algorithms into optimal proof systems.

\end{abstract}

\section{Introduction}

The study of proof systems was initially motivated by the $\NP$ versus $\coNP$ question.
Cook and Reckhow \cite{cr79} define a proof system for a set $L$
as a polynomial-time computable function $f$ whose range is $L$.
They show that $\NP = \coNP$ if and only if there exists a proof system
with polynomial-size proofs for the set of propositional tautologies ($\TAUT$).
Hence, in order to show $\NP \neq \coNP$, it suffices to rule out polynomial-size proofs for increasingly powerful proof systems for $\TAUT$. This approach to the $\NP$ versus $\coNP$ question is known as the Cook-Reckhow program.

Krajíček and Pudlák \cite{kp89} capture the idea of the {\em most powerful proof system}
with their notions of optimal and p-optimal proof systems.
A proof system for a set $L$ is optimal if its proofs are at most polynomially longer than proofs in any other proof system for $L$. A proof system $f$ for a set $L$ is p-optimal if proofs from any other proof system for $L$ can be translated in polynomial-time into proofs in $f$. Krajíček and Pudlák \cite{kp89} raise the question of whether $\TAUT$ and other sets have (p-)optimal proof systems.
It is still one of the major open questions
in proof complexity 
with deep relations
to other fields like structural complexity theory and finite model theory (cf.~the optimality problem in Krajíček's book \cite{kra19a}).
For the Cook-Reckhow program,
the existence of an optimal proof system $f$ for $\TAUT$ means that
in order to show $\NP \neq \coNP$
it suffices to prove a superpolynomial lower bound for proofs in $f$.

It is known that
all nonempty sets in $\NP$ have optimal proof systems.
Outside $\NP$ this is different:
By Messner \cite{mes99,mes00},
all $\coNE$-hard sets and even all $\coNQP$-hard sets
have no optimal proof systems.
Hence there exist arbitrarily complex sets without optimal proof systems.
In contrast, we do not know a single set outside $\NP$ {\em with} optimal proof systems. The following are longstanding open questions.
\begin{spreadlines}{2ex}
\begin{align*}
&\text{\textbf{Q1}: {\em Are there sets outside $\NP$ with optimal proof systems?}}
\\
&\text{\textbf{Q2}: {\em Are there arbitrarily complex sets outside $\NP$ with optimal proof systems?}}
\end{align*}
\end{spreadlines}
Krajíček and Pudlák \cite{kp89} and Sadowski \cite{sad97} implicitely raise Q1
by asking whether $\TAUT$ and $\QBF$ (the set of true quantified Boolean formulas) have optimal proof systems.
Köbler and Messner \cite{km98} and Messner \cite{mes00} explicitly raise Q1 and Q2.
Over the past decades, these questions have been investigated across multiple contexts. 
We contribute to these questions by providing evidence for the inherent difficulty of finding optimal proof systems for complex sets. Before making this precise, we give an overview on optimal proof systems in complexity theory and summarize the current state of Q1 and Q2.

\subparagraph*{Sufficient conditions.}
Krajíček and Pudlák \cite{kp89} were the first to study sufficient conditions for the existence of optimal proof systems outside $\NP$. They prove that $\NE = \coNE$ implies the existence of optimal proof systems for $\TAUT$, and $\E = \NE$ implies the existence of p-optimal proof systems for $\TAUT$. Köbler, Messner, and Torán \cite{kmt03} improve this result to $\NEE = \coNEE$ for optimal and $\EE = \NEE$ for p-optimal proof systems for $\TAUT$. Köbler and Messner \cite{km98} and Messner \cite{mes00} show that the existence of (p-)optimal proof systems also follows from other unlikely assumptions. Chen, Flum, and Müller \cite{cfm14} show that if $\NP$ does not have measure $0$ in $\E$ (the Measure Hypothesis \cite{lut97}), then there exist sets in $\NE \setminus \NP$ with optimal proof systems.

Krajíček and Pudlák \cite{kp89} show that $\TAUT$ has p-optimal proof systems if and only if there exists an almost optimal algorithm (an algorithm that is optimal on positive instances, also called optimal acceptor) for $\TAUT$. Sadowski \cite{sad99} proves the same relation for $\SAT$ (the set of satisfiable Boolean formulas) and Messner \cite{mes99,mes00} generalizes both results to all paddable sets. Messner uses this and a related result for optimal proof systems to obtain that $\NP = \coNP$ (resp., $\P = \NP$) implies the existence of arbitrarily complex sets with optimal (resp., p-optimal) proof systems.

\subparagraph*{Necessary conditions.}
Optimal proof systems have shown to be closely related to promise classes. Initiated by Razborov \cite{raz94}, who shows that the existence of optimal proof systems for $\TAUT$ implies the existence of complete sets in the class $\DisjNP$ (the class of disjoint $\NP$-pairs),
further work reveals similar implications for other promise classes. Sadowski \cite{sad97} shows that the existence of p-optimal proof systems for $\QBF$ implies complete sets for $\NP \cap \coNP$. Köbler, Messner, and Torán \cite{kmt03} improve and generalize this result by showing that the existence of optimal and p-optimal proof systems for various sets of the polynomial-time hierarchy implies complete sets for promise classes such as $\UP, \NP \cap \coNP$, $\NP \cap \SPARSE$ and probabilistic classes such as $\ZPP$, $\BPP$, and $\AM$ \cite{kmt03}. Beyersdorff, Köbler, and Messner \cite{bkm09} and Pudlák \cite{pud17} reveal connections between proof systems and function classes. They show that p-optimal proof systems for $\SAT$ imply complete sets for $\TFNP$.

\subparagraph*{Other variants of proof systems.}
Modified definitions of {\em proof system} and {\em optimality}
yield positive answers to the corresponding questions Q1 and Q2.
Cook and Krajíček \cite{ck07} show that (p-)optimal proof systems for $\TAUT$ exist if we allow one bit of non-uniform advice. Beyersdorff, Köbler, and Müller \cite{bkm11} generalize this to arbitrary languages. In a heuristic setting, Hirsch, Itsykson, Monakhov, and Smal \cite{hims12} show that almost optimal algorithms exist, but it is open whether this translates into optimal heuristic proof systems. Pitassi and Santhanam \cite{ps10} show that there exists an optimal proof system for $\QBF$ under a weak notion of simulation.

\subparagraph*{Current state of Q1 and Q2.} In addition to the reasons mentioned so far, both questions are especially interesting, because in contrast to other open questions in complexity theory,
it is unclear whether a negative or positive answer is to be expected \cite{bs11,kra19a}. In particular, no widely believed structural assumption (like $\NP \neq \coNP$) is known to imply the (non)-existence of optimal proof systems \cite{hir10,bs11}.

So far, no evidence for the difficulty of proving a positive answer to Q1 or Q2 was known. But for a negative answer such evidence was known:
A negative answer to Q1 or Q2 implies $\NP \neq \coNP$, since Messner \cite{mes00} shows that $\NP = \coNP$ implies that outside $\NP$ there exist arbitrarily complex sets with optimal proof systems. There exist also several oracles showing that a negative answer to Q1 can not have a relativizable proof, even under rather likely assumptions (e.g., assuming 
that $\NP \cap \coNP$ has no complete sets \cite{dos20a} or that $\DisjCoNP$ has no complete pairs \cite{kha22}).

\subsection{Our Contribution}
We construct the following oracles $O_1$ and $O_2$, where we consider $O_2$ as our main result,
because of its significance and difficulty.
\begin{spreadlines}{2ex}
\begin{align*}
    &O_1\colon \text{ No set in $\PSPACE \setminus \NP$ has optimal proof systems and $\PH$ is infinite}
    \\
    &O_2\colon \text{ No set in $\RE \setminus \NQP$ has optimal proof systems, $\NP \neq \coNP$, and the}
    \\[-2 	ex]
    \vspace{-100em}&\phantom{O_2\colon} \text{ Measure Hypothesis fails}
\end{align*}
\end{spreadlines}
Recall that $\NQP$ stands for non-deterministic quasi-polynomial time\footnote{Note that the term \eqq{NQP} is not used consistently in the literature. We use it for non-deterministic quasi-polynomial time, other papers sometimes for quantum classes.}.
\subparagraph{Evidence for the difficulty of Q1 and Q2.}
Our oracles provide the first evidence for the difficulty of proving {\em positive} answers for these questions.
For Q2, a positive answer is difficult to prove,
since it requires non-relativizable methods (even if we assume $\NP \neq \coNP$).
Regarding Q1,
our oracles provide two pieces of information:
First, proving the existence of sets outside $\NQP$ with optimal proof systems
requires non-relativizable methods (even if we assume $\NP \neq \coNP$).
Second, proving the existence of sets in $\PSPACE \setminus \NP$ with optimal proof systems
requires non-relativizable methods (even if we assume an infinite $\PH$). In particular, relativizable methods can not prove that an infinite $\PH$ implies the existence of optimal proof systems for $\TAUT$ or $\QBF$, which was not known before.
Hence, statements slightly stronger than a positive answer to Q1
require non-relativizable methods (even under widely believed assumptions).
This hints at the difficulty of proving a positive answer for this question.

Overall, our oracles explain why the search for sets outside $\NP$
with optimal proof systems has so far been unsuccessful.
Moreover, they show the inherent difficulty of finding optimal proof systems for complex sets outside $\NP$, 
even when assuming $\NP \neq \coNP$.

    \subparagraph*{Oracles that widely exclude optimal proof systems.}
    Previous oracles achieve that specific sets $A$ have no optimal proof systems
    (e.g., $\coNP$-complete $A$).
    Although this implies no optimal proof systems for all $B$ with $A \leqmp B$,
    it does not imply that {\em all} sets in a class like $\PSPACE \setminus \NP$
    have no optimal proof systems.
    Hence, these oracles only exclude optimal proof systems for {\em some sets} of a class.
    In contrast, our oracles are the first that exclude optimal proof systems
    for {\em whole classes}, namely $\PSPACE \setminus \NP$ and $\RE \setminus \NQP$.
    Therefore, these oracles provide insights
    on the concept of optimal proof systems as a whole,
    whereas previous oracles tended to provide insights
    on optimal proof systems for specific languages.

As explained before, it is unclear
which answer to Q2 is to be expected.
Nevertheless, it seems that researchers
tend to favor a negative answer \cite{bs11,kra19a}.
$O_2$ is the first oracle showing
that such a situation is possible in principle.

\subparagraph*{Oracle where almost optimal algorithms do not imply optimal proof systems.}
Messner \cite{mes00} aims to construct 
complex sets with optimal proof systems.
His approach consists of proving the following statements:
\begin{enumerate}
    \item\label{enum:opt-algo1} There exist arbitrarily complex sets $L$ having almost optimal algorithms.
    \item\label{enum:opt-algo2} $\tn{$L$ has an almost optimal algorithm} \;\impl\; \tn{$L$ has an optimal proof system}$.
\end{enumerate}
While Statement \ref{enum:opt-algo1} holds in general \cite{mes00},
Statement \ref{enum:opt-algo2} is still an open question \cite{bhks15, cf14}.
Our oracle $O_2$ provides the first evidence for the difficulty of proving statement 2, i.e., it cannot be proved by relativizable methods.
More precisely, as Statement \ref{enum:opt-algo1} relativizes, the following situation holds relative to our oracle:
There exist arbitrarily complex sets $L \notin \NQP$ having almost optimal algorithms,
but none of them has optimal proof systems.
So in this relativized world, Statement \ref{enum:opt-algo2} holds {\em nowhere} outside $\NQP$.

\subsection{Key Challenges and Conceptual Ideas}

We give a first impression of our oracle constructions.
Proof sketches and detailed proofs can be found in
sections~\ref{sec:thin-oracle} and \ref{sec:thick-oracle}.

\subparagraph{Towards $\text{O}_\text{1}$.}
The construction consists of two parts.

Part 1: We construct a sparse oracle without optimal proof systems for sets in $\PSPACE \setminus \NP$.
All $f \in \FP$ and $L \in \PSPACE$ are treated as follows.
First try to add sparse information to the oracle such that $f$ is not a proof system for $L$.
If this is possible, we are done with $f$ and $L$.
Otherwise, we can rely on $f$ being always a proof system for $L$.

Now try the following.
For infinitely many $y \in L$ with superpoly-size $f$-proofs
add poly-size proofs (not noticed by $f$) to the oracle.
If this is possible, then $f$ is not an optimal proof system for $L$.
Otherwise, no matter which short proofs we add to the oracle,
$f$ notices this and develops a short proof.
Hence $f$ can {\em improve itself}\/ by guessing short proofs and
assuming that they are contained in the oracle.
Note that $f$ cannot make mistakes (i.e., produce elements outside $L$),
since by assumption, $f$ is always a proof system for $L$. This yields a polynomially bounded proof system for $L$ showing $L \in \NP$.

Part 2: Do the same, but now with an infinite $\PH$.
Start with an oracle where $\PH$ is infinite and repeat the construction from part 1.
This adds sparse information to the oracle and
prevents optimal proof systems in $\PSPACE \setminus \NP$.
By a relativized version of the generalized Karp-Lipton theorem \cite{kl80,ls86,bbs86,bcs95},
the sparse information we added cannot collapse the $\PH$.

\subparagraph*{Towards $\text{O}_\text{2}$.}
 The main difficulties in preventing optimal proof systems for sets in $\RE \setminus \NQP$ arise from dealing with arbitrarily complex computations. These computations may depend on large sections of the oracle and may change their behavior at any moment of the construction, making sound, complete, and robust encodings of information a particular challenge. Additionally, in this general setting we cannot exploit any special property of our machines (e.g., runtime bounds or promise conditions) that are typically used to rule out certain behaviour.
This generality makes our construction difficult.

One crucial point is the right amount of information that is encoded in the oracle:
sparse enough to keep the number of dependencies that encodings induce small,
but dense enough to allow $\NQP$-machines to use these encodings. Another challenge is the encoding of short facts via short code words whose associated proof is long.
We follow the idea that a computation of runtime $t$ can not query all words of length $n > \log t$,
which leads to code words that are independent of certain computations,
but significantly shorter than their runtimes.
At the beginning of section \ref{sec:thick-oracle} we describe in more detail how these code words
help to tackle the main challenges.

\section{Basic Definitions and Notations}\label{sec:prelim}
\subparagraph*{Words and sets.}
Let $\Sigma \coloneqq \{0,1\}$ be the default alphabet and $\Sigma^*$ be the set of finite words over $\Sigma$. For $a \in \Sigma$ and $L \subseteq \Sigma^*$ define $aL \coloneqq \{aw \mid w \in L\}$. We denote the length of a word $w \in \Sigma^*$ by $|w|$. The empty word has length $0$ and is denoted by $\varepsilon$. The $(i+1)$-th letter of a word $w$ for $0 \leq i < |w|$ is denoted by $w(i)$, i.e., $w=w(0)w(1)\cdots w(|w|-1)$. If $v$ is a (strict) prefix of $w$, we write $v \sqsubseteq w$ ($v \sqsubsetneq w$) or $w \sqsupseteq v$ ($w \sqsupsetneq v$).

The set of all natural numbers is denoted by $\N$, of all positive natural numbers by $\N^+$, and of all natural numbers starting at $i \in \N$ by $\N_i$. For $a,b \in \N$, we define $a\N+b \coloneqq \{a \cdot n + b \mid n \in \N\}$. We denote the set of real numbers by $\R$. We write the empty set as $\emptyset$.  The cardinality of a set $A$ is denoted by $\card{A}$. For a set $A \subseteq \Sigma^*$ and a number $n \in \N$, we define $A^{\leq n} \coloneqq \{w \in A \mid |w| \leq n\}$. We define this notation for $<$, $=$, $\geq$, and $>$ analogous. For a clearer notation we use $\Sigma ^{\leq n}$ for ${\Sigma^*}^{\leq n}$ (analogous for other inequality symbols) and $\Sigma^n$ for ${\Sigma^*}^{=n}$. The operators $\cup$, $\cap$, and $\setminus$ denote the union, intersection and set-difference. We denote the marked union of sets $A,B$ by $A \oplus B \coloneqq 2A \cup (2B+1)$. We say that two sets $A$ and $B$ agree on $C$, if for all $x \in C$ it holds that $x \in A$ if and only if $x \in B$. A set $A$ is $\mathcal{C}$-immune for a complexity class $\mathcal{C}$ if no infinite subset of $A$ is in $\mathcal{C}$.

\subparagraph*{Functions.}
The domain and range of a function $f$ are denoted by $\dom (f)$ and $\ran (f)$. The support of $f$ is defined as $\supp (f) \coloneqq \{x \in \dom (f) \mid f(x) \not = 0\}$. A function $f'$ is an extension of a function $f$, denoted by $f \sqsubseteq f'$ or $f' \sqsupseteq f$, if $\dom(f) \subseteq \dom(f')$ and $f(x) = f'(x)$ for all $x \in \dom (f)$. If $x \notin \dom (f)$, then $f \cup \{x \mapsto y\}$ denotes the extension $f'$ of $f$ such that $f'(z) = f(z)$ for $z \not = x$ and $f'(x)=y$. We denote the composition of two functions $f$ and $g$ by $f \circ g$.

We identify $\Sigma^*$ with $\N$ through the polynomial time computable and invertible bijection $\enc \colon \Sigma^* \to \N$ defined as $\enc(w) = \sum _{i<|w|}(1+w(i))2^i$. This is a variant of the dyadic representation. Thus, we can treat words from $\Sigma^*$ as numbers from $\N$ and vice versa, which allows us to use notations, relations and operations of words for numbers and vice versa (e.g., we can define the length of a number by this). Note that $x < y$ if and only if $x < _{\text{lex}} y$ for $x,y \in \Sigma ^*$, where $<_{\text{lex}}$ is the quasi-lexicographic ordering relation for words. Expressions like $0^i$ and $1^i$ are ambiguous, because they can be interpreted as words from $\Sigma ^*$ or numbers. We resolve this ambiguity by the context or explicitly naming the interpretation. 

The logarithm to the base $2$ is denoted by $\log \colon \R^+ \to \R$. The iterated logarithm is defined as 
\[\log^* n \coloneqq \begin{cases}
0 & \text{if } n \leq 1\\
1+\log^*(\log n) & \text{if } n > 1
\end{cases}\]
Furthermore, we define polynomial functions $p_k \colon \N \to \N$ for $k \in \N^+$ by $p_k(n) \coloneqq n^k$. The function computing the maximum element of a finite subset of $\N$ is denoted by $\max$. We use the operators $\ceil{\cdot}$ and $\floor{\cdot}$ to round values up and down to integers.

\subparagraph*{Machines.}
We use the default model of a Turing machine in both the deterministic and non-deterministic variant. The language decided by a Turing machine $M$ is denoted by $L(M)$. We use Turing transducer to compute functions. For a Turing transducer $F$ we
write $F(x)=y$ when on input $x$ the transducer outputs $y$. Hence, a Turing transducer $F$ computes a function and we may denote ''the function computed by $F$'' by $F$ itself. We denote the number of steps the longest path of a computation $M(x)$ takes by $\runtime (M(x))$. 

\subparagraph*{Complexity classes.}
$\P$, $\NP$, and $\coNP$ denote standard complexity classes \cite{pap81}. We denote the set of polynomial-time computable functions by $\FP$, the polynomial hierarchy by $\PH$, and the set of recursively enumerable sets by $\RE$. A set $S \subseteq \Sigma^*$ is sparse if the there is a polynomial $p$ such that $\card{S^{\leq n}} \leq p(n)$ for all $n \in \N$. We denote the class of all sparse sets as $\SPARSE$. For some total function $t$, we define the complexity class $\NTIME (t)$ as the class of languages that can be decided by a non-deterministic Turing machine $M$ such that $\runtime(M(x)) \leq t(|x|)$ for all $x \in \Sigma^*$. The class $\UTIME (t)$ is defined the same, but on every input, $M$ has at most one accepting path. We define the class of sets decidable in non-deterministic quasi-polynomial time as $\NQP \coloneqq \bigcup_{c \in \N} \NTIME(2^{(\log n)^c})$.

\subparagraph*{Proof systems.}
We use proof systems for sets defined by Cook and Reckhow \cite{cr79}. They define a function $f \in \FP$ to be a proof system for $\ran (f)$. Furthermore:
\begin{itemize}
\item A proof system $g$ is (p-)simulated by a proof system $f$, denoted by $g \leq f$ (resp., $g \psim f$), if there exists a total function $\pi$ (resp., $\pi \in \FP$) and a polynomial $p$ such that $|\pi(x)| \leq p(|x|)$ and $f(\pi(x)) = g(x)$ for all $x \in \Sigma^*$. In this context the function $\pi$ is called simulation function. Note that $g \psim f$ implies $g \leq f$.
\item A proof system $f$ is (p-)optimal for $\ran(f)$, if $g \leq f$ (resp., $g \psim f$) for all $g \in \FP$ with $\ran (g) = \ran (f)$.
\end{itemize}

If $y \in \ran (f)$ for some proof system $f$, then we denote the shortest $f$-proof for $y$ by $\hhat{y}_f$, i.e., $f(\hhat{y}_f)=y$ and for all $x < \hhat{y}_f$ it holds that $f(x) \not = y$. If $y \notin \ran (f)$, then $\hhat{y}_f$ is not defined. 

\subparagraph*{Oracle specific definitions and notations.} We relativize the concept of Turing machines and Turing transducers by giving them access to a write-only oracle tape. We relativize complexity classes, proof systems, machines, and (p-)simulation by defining them over machines with oracle access, i.e., whenever a Turing machine or Turing transducer is part of a definition, we replace them by an oracle Turing machine or an oracle Turing transducer. We indicate the access to some oracle $O$ in the superscript of the mentioned concepts, i.e., $\mathcal{C}^O$ for a complexity class $\mathcal{C}$ and $M^O$ for a Turing machine or Turing transducer $M$. We sometimes omit the oracles in the superscripts, e.g., when sketching ideas in order to convey intuition, but never in actual proofs. We also transfer all notations to the respective oracle concepts. 

For any oracle $O$ let $\{F_k^O\}_{k \in \N^+}$ be a standard enumeration of polynomial time oracle Turing transducers, also called $\FP^O$-machines, where $F_k^O$ has running time exactly $p_k$. Similarly, for any oracle $O$ let $\{M_k^O\}_{k \in \N^+}$ denote a standard enumeration of polynomial space oracle Turing machines, also called $\PSPACE^O$-machines, where $M_k^O$ has space bound exactly $p_k$. We utilize the oracle Turing machine model presented by Simon \cite{sim77}, in which the oracle tape also underlies the polynomial space bound restriction. 
For a deterministic machine $F$, an oracle $O$ and an input $x$, we define $Q(F^O(x))$ as the set of oracle queries of the computation $F^O(x)$. 

A word $w \in \Sigma ^*$ can be interpreted as a set $\{i \in \N \mid w(i) = 1\}$\footnote{Note that we interpret this set as a subset of $\Sigma^*$ via $\enc$.} and subsequently as a partial oracle, which is defined for all words up to $|w|-1$ (more precisely up to $\enc^{-1}(|w|-1)$). So $|w|$ is the first word that $w$ is not defined for, i.e., $|w| \in w1$ and $|w| \notin w0$. 

A computation $F^w(x)$ is called definite, if $w$ is defined for all words of length $\leq \runtime(F^w(x))$.
This means that $F^w(x)$ is definite, if $||w|| > \runtime(F^w(x))$, because $|w|$ denotes the first undefined word of $w$ and the length of the word $|w|$ (i.e., $||w||$) has to be greater than $\runtime(F^w(x))$. Additionally, a space-bounded computation $M^w(x)$ with bound $p(|x|)$ is definite, if $w$ is defined for all words of length $p(|x|)$. A computation definitely accepts (resp., rejects), if it is definite and accepts (resp., rejects). We may combine computations to more complex expressions and call those definite, if every individual computation is definite, e.g., ``$F^O(x) \in L(M^O)$ is definite'' means $F^O(x)$ outputs some $y$ and is definite, and $M^O(y)$ is definite.

During the oracle construction we often use words from $\Sigma ^*$ to denote partially defined oracles. In particular, oracle queries for undefined words of a partial oracle are answered negatively.
\begin{observation}\label{obs:oracle-questions}
Let $x \in \Sigma^*$ and $C \subseteq \Sigma^*$. For an oracle $w$ and a Turing transducer $F$, it holds that
\[C \cap Q(F^w(x)) = \emptyset \Rightarrow F^w(x) = F^{w\cup C}(x).\]
\end{observation}
\begin{observation}\label{obs:definite}
If $F^w(x)$ is definite, then $F^w(x) = F^v(x)$ for any extension $v \sqsupseteq w$.
\end{observation}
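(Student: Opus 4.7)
The plan is to reduce the statement to Observation~\ref{obs:oracle-questions} by extending $w$ to $v$ one freshly-defined bit at a time. First I would fix $k$ such that $F = F_k$; then $F^w(x)$ runs in at most $p_k(|x|)$ steps and can therefore only query strings of length at most $p_k(|x|)$. By the definition of definiteness we have $\|w\| > p_k(|x|)$, so every string of length $\leq p_k(|x|)$ already has its membership fixed by $w$, and the first undefined position (the string $|w|$ itself, interpreted as a number) has length strictly greater than $p_k(|x|)$ and hence cannot appear in $Q(F^w(x))$.

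I would then induct on $|v| - |w|$. The base case $v = w$ is trivial. For the inductive step, pick $w'$ with $w \sqsubseteq w' \sqsubsetneq v$ and $|w'| = |v| - 1$; by the induction hypothesis $F^w(x) = F^{w'}(x)$, and $F^{w'}(x)$ remains definite because $\|w'\| \geq \|w\| > p_k(|x|)$. If the bit by which $w'$ is extended to reach the next intermediate oracle is $0$, then that intermediate oracle and $w'$ denote the same set under the convention that undefined positions read as negatives, so the computation is literally unchanged. If the bit is $1$, the intermediate oracle equals $w' \cup \{|w'|\}$, and since $|w'| \notin Q(F^{w'}(x))$ by the previous paragraph, Observation~\ref{obs:oracle-questions} yields that the computation again does not change. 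Iterating this one-bit extension along the path from $w$ to $v$ gives $F^w(x) = F^v(x)$.

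There is essentially no obstacle here; the only point worth spelling out is the convention that undefined oracle positions are read as negatives, which makes the $0$-extension step automatic and reduces the entire argument to repeated invocations of Observation~\ref{obs:oracle-questions} at the never-queried boundary positions $|w'|$.
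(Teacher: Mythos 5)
The proposal is correct. The paper does not actually spell out a proof of this observation, but your reduction—extend $w$ to $v$ one bit at a time, noting that the freshly-defined boundary word $|w'|$ has length $\|w'\|\ge\|w\|>p_k(|x|)$ and hence is never queried by a $p_k(|x|)$-time computation, then invoke Observation~\ref{obs:oracle-questions} for $1$-bits and the negative-default convention for $0$-bits—is a valid packaging of the intended argument. A slightly more direct version would skip the induction: since every query of $F^w(x)$ has length $\le p_k(|x|)$, definiteness forces $w$ and any $v\sqsupseteq w$ to answer all of them identically, so the two computations are step-by-step the same. Your inductive route buys a clean invocation of a stated observation in exchange for a little extra bookkeeping (and the case split on the appended bit, which in fact always lands in the ``not queried'' situation), but either way the reasoning is sound and there is no gap.
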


\section{Oracle with an Infinite PH and No Set in PSPACE\textbackslash NP Has Optimal Proof Systems}\label{sec:thin-oracle}
\subparagraph{Idea of the construction.} A relativized version of the generalized Karp-Lipton theorem \cite{kl80,ls86,bbs86,bcs95} shows that if $\PH$ is infinite relative to an oracle $A$, it is also infinite relative to $A \oplus B$ if $B \in \SPARSE$. Hence, it suffices to construct for any $A$ a sparse $B$ such that relative to $A \oplus B$ no set in $\PSPACE \setminus \NP$ has optimal proof systems. For simplicity, here we restrict to the task of preventing a proof system $f$ to be optimal for $L \in \PSPACE$. We try to achieve this as follows:
\begin{enumerate}[{Case} 1]
\item\label{case:idea-1} There is a sparse extension of the oracle such that $\ran(f) \not \subseteq L$ (hence, $f$ is not a proof system for $L$).
\item\label{case:idea-2} For all sparse extensions of the oracle, $\ran(f) \subseteq L$.
\begin{enumerate}[(a)]
\item\label{case:idea-2a} It is possible to sparsely encode infinitely many facts $y \in L$ by code words $c(y)$ of polynomial lengths while $f$ has only superpolynomially long proofs for the facts $y$. Here $f$ is not optimal (the gap between the length of the $c(y)$ and the shortest $f$-proofs for $y$ is superpolynomial).
\item\label{case:idea-2b} Such an encoding of infinitely many facts is not possible. If $\ran(f) \neq L$, then $f$ is not a proof system for $L$ and we are done. Otherwise, $\ran(f) = L$, which we exploit as follows to show $L \in \NP$.
\end{enumerate}
\end{enumerate}
Assume $\ran(f) = L$. Since we are in Case \ref{case:idea-2b}, whenever we encode some fact $y \in L$ into the oracle, $f$ develops a polynomially long proof for $y$.
This gives the impression that by encoding all these facts we can achieve polynomially bounded proofs for $f$ and hence $L \in \NP$. But this adds too many words to the oracle such that it is not sparse.

Surprisingly, we can show $L \in \NP$ \emph{without} encoding anything! For this, we define a proof system $H$, which on inputs $(x,c(y))$ simulates $f(x)$ while assuming that there is a code word $c(y)$ inside the oracle. If $f(x)=y$ relative to the modified oracle, $H$ outputs $y$. Otherwise $H$ outputs some fixed element from $L$.

$H$ has polynomially bounded proofs, since it simulates the computation of $f$ with an assumed code word in the oracle (and $f$ has polynomially bounded proofs in this situation, since we are in Case \ref{case:idea-2b}). 
Moreover, $H$ can not output elements outside $L$, because otherwise $f$ does the same relative to some sparse oracle, which shows that Case \ref{case:idea-1} holds, a contradiction. 
Hence, $H$ is a polynomially bounded proof system for $L$.

\subsection{Definitions for the Oracle Construction}

For the rest of section \ref{sec:thin-oracle}, let $A \subseteq 2\N$ be arbitrary. We construct a sparse oracle $B \subseteq 2\N + 1$, such that all sets in $\PSPACE^{B \cup A} \setminus \NP^{B \cup A}$ do not have optimal proof systems relative to $B \cup A$.

In the following, we discuss computations relative to $w \cup A$ for a partially defined oracle $w$. Note that since $A$ is already fully defined, a computation relative to $w \cup A$ is definite, if $w$ is defined for sufficiently long words.

\subparagraph*{Code words.} For machines $F_i$, we want to encode facts $y \in \ran(F_i)$ into the oracle. We also want to pad the encodings to length at least $p_k(|y|)$ such that $M_k(y)$ can not query such a padded code word. 
\begin{definition}[Code words]\label{def:codewords1}
Let $i,k \in \N^+$ and $y \in \Sigma^*$. We define a code word as $c(i,k,y) \coloneqq 00^i1^k0^{p_k(|y|)}1y$.
\end{definition}
\begin{observation}\label{obs:codeword-property}
For fixed $i,k$, the function $y \mapsto c(i,k,y)$ is polynomial time computable and polynomial time invertible with respect to $|y|$.
\end{observation}
Note that because of the leading $0$, the interpretation of code words as numbers via $\enc$ is always an odd number. So the set of all code words is a subset of $2\N + 1$.

\subparagraph*{Valid oracles.}
During the construction of the oracle, we successively add requirements that we maintain. These are specified by a function $r \colon \N^+ \times \N^+ \times \N \mapsto \N$, called \textit{requirement function}. An oracle $w$ is $r$-valid for a requirement function $r$, if it satisfies the following requirements (let $i,k \in \N^+$):
\begin{enumerate}[V1]
\item\label{thin-V1} $\varepsilon \notin w$ and for all $n \in \N^+$ and $a \in \{0,1\}$, $\card{(w \cap a\Sigma^{n-1})} \leq 1-a$.
\\
(Meaning: $w \subseteq 2\N + 1$ and $w$ is sparse.)
\item\label{thin-V2} If $r(i,k,0)=0$, then there is some $x \in \Sigma^*$ such that $F_i^{w \cup A}(x) \notin L(M_k^{w \cup A})$ is definite.
\\
(Meaning: $F_i$ is no proof system for $L(M_k)$ relative to $w \cup A$ and all its extensions.)
\item\label{thin-V3} If $r(i,k,0)>0$, then for $y \in \Sigma^{>r(i,k,0)}$ it holds that $c(i,k,y) \in w$ implies $y \in L(M_k^{w \cup A})$.
\\
(Meaning: Code words encode sufficiently long facts correctly.)
\item\label{thin-V4} If $r(i,k,j) = 0$ for $j \in \N^+$, then there is some $y \in \Sigma^{>r(i,k,0)}$ such that $c(i,k,y) \in w$ and for all $x \in \Sigma^{\leq p_j(|y|)}$ it holds that $F_i^{w \cup A}(x) \neq y$ is definite. 
\\
(Meaning: There is a gap of certain size between the length of a code word $c(i,k,y) \in w$ and the shortest $F_i^{w \cup A}$-proof for $y$.)
\end{enumerate}
We will prove that V\ref{thin-V1} to V\ref{thin-V4} are satisfied for various oracles and requirement functions. To prevent confusion, we use the notation V\ref{thin-V1}($w$), V\ref{thin-V2}($w,r$), V\ref{thin-V3}($w,r$), and V\ref{thin-V4}($w,r$) to clearly state the referred oracle $w$ and requirement function $r$. 

The following lemma shows how to extend $r$-valid oracles such that they remain $r$-valid.
\begin{lemma}\label{lemma:extension}
If $w$ is an $r$-valid oracle, then
\begin{itemize}
\item $w0$ is $r$-valid.
\item $w1$ is $r$-valid if V\ref{thin-V1}($w1$) and V\ref{thin-V3}($w1,r$) are satisfied.
\end{itemize}
\end{lemma}
\begin{proof}
Let $b \in \{0,1\}$ be arbitrary. We show that both $w0$ and $w1$ are $r$-valid, when assuming that V\ref{thin-V1}($w1$) and V\ref{thin-V3}($w1,r$) are satisfied.
\smallskip
\\
To V\ref{thin-V1}($w0$): Since no words are added to $w$ and V\ref{thin-V1}($w$) is satisfied, also V1($w0$) is satisfied. 
\smallskip
\\
To V\ref{thin-V2}($wb,r$): Let $i,k$ be arbitrary such that $r(i,k,0)=0$. By V\ref{thin-V2}($w,r$), there is some $x$ such that $F_i^{w \cup A}(x) \notin L(M_k^{w \cup A})$ is definite. Since the computations $F_i^{w \cup A}(x)=y$ and $M_k^{w \cup A}(y)$ are definite, they do not recognize any difference between $w$ and $wb$, because they can not query words of length $||w||$. Hence, $F_i^{wb \cup A}(x) \notin L(M_k^{wb \cup A})$. 
\smallskip
\\
To V\ref{thin-V3}($w0,r$): Let $i,k,y$ be arbitrary such that $r(i,k,0) > 0$ and $c(i,k,y) \in w0$ with $y \in \Sigma^{>r(i,k,0)}$. Then also $c(i,k,y) \in w$, because the oracle $w0$ contains the exact same elements as $w$. By V\ref{thin-V3}($w,r$), from $c(i,k,y) \in w$ follows $y \in L(M_k^{w \cup A})$. The computation $M_k^{w \cup A}(y)$ is definite, because $w$ is defined for words of length $\geq |c(i,k,y)| > p_k(|y|)$. Hence, $M_k^{wb \cup A}(y)$ also accepts.
\smallskip
\\
To V\ref{thin-V4}($wb,r$): Let $i,k,j$ be arbitrary such that $r(i,k,j) = 0$ with $j \in \N^+$. By V\ref{thin-V4}($w,r$), there is some $y \in \Sigma^{>r(i,k,0)}$ such that $c(i,k,y) \in w$ and for all $x \in \Sigma^{\leq p_j(|y|)}$ it holds that $F_i^{w \cup A}(x) \neq y$ is definite. From $wb \sqsupseteq w$ it holds that $c(i,k,y) \in wb$ and it holds that $F_i^{wb \cup A}(x) \neq y$ for all $x \in \Sigma^{\leq p_j(|y|)}$, because $F_i^{w \cup A}(x) \not = y$ is definite.
\end{proof}

\subsection{Oracle Construction}
The oracle construction will take care of the following set of tasks $\{\tau _{i,k,j} \mid i,k \in \N^+, j\in \N\}$. Let $\mathcal{T}$ be an enumeration of these tasks with the property that $\tau _{i,k,j}$ appears earlier than $\tau _{i,k,j+1}$. In each step we treat the smallest task in the order specified by $\mathcal{T}$, and after treating a task we remove it
from $\mathcal{T}$. 

In step $s=0$ we define $w_0 \coloneqq \varepsilon$ and $r_0$ as the nowhere defined requirement function. 
In step $s > 0$ we define $w_s \sqsupsetneq w_{s-1}$ and $r_s \sqsupsetneq r_{s-1}$ such that $w_s$ is $r_s$-valid by treating the earliest task $\tau $ in $\mathcal{T}$ and removing it from $\mathcal{T}$. We do this according to the following procedure:

\begin{itemize}
\item[] {\bf Task} $\tau_{i,k,j}$: Let $r^0 \coloneqq r_{s-1} \cup \{(i,k,j) \mapsto 0\}$. If there exists an $r^0$-valid partial oracle $v \sqsupsetneq w_{s-1}$, then assign $r_s \coloneqq r^0$ and $w_s \coloneqq v$. 

Otherwise, let $w_s \coloneqq w_{s-1}0$ and $r_s \coloneqq r_{s-1} \cup \{(i,k,j) \mapsto ||w_{s}||+1\}$. 
\medskip
\\
(Meaning: First, try to rule out $F_i$ as a proof system for $L(M_k)$ or try to create a gap of certain size between the length of an encoded code word for a fact and its shortest $F_i$-proof (i.e., $r_s(i,k,0) = 0$ or $r_s(i,k,j) = 0$ for all $j \in \N^+$). If this is not possible (i.e., $r_s(i,k,0) > 0$ and $r_s(i,k,j) > 0$ for some $j \in \N^+$), intuitively, $F_i$ is inherently a proof system for $L(M_k)$ and $F_i$ always has short proofs for a fact when encoding the fact by a code word. As described at the start of section \ref{sec:thin-oracle}, we will show $\ran(F_i) \in \NP$ without encoding anything.)
\end{itemize}

From this construction, we define the oracle with our desired properties as $B \coloneqq \bigcup _{s \in \N} w_s$. Also define $r \coloneqq \bigcup_{s\in \N} r_s$, which helps to show these properties for $B$. 

The following claims show that $B$ and $r$ are both well-defined (Claim \ref{claim:strict-extension-each-step}), every task is treated because $r$ is a total function (Claim \ref{claim:r-total}), $w_s$ is $r_s$-valid for all $s \in \N$ (Claim \ref{claim:ws-is-rs-valid}), $B$ is $r$-valid (Claim \ref{claim:O-is-rs-valid}), and every intermediate extension $w$ from $w_s$ to $B$ (i.e., $w_s \sqsubseteq w \sqsubseteq B$) is $r_s$-valid (Claim \ref{claim:valid-downward}).

\begin{claim}\label{claim:strict-extension-each-step}
For every $s \in \N$ it holds that $w_s \sqsubsetneq w_{s+1}$ and $r_s \sqsubsetneq r_{s+1}$.
\end{claim}
\begin{claimproof}
Let $s \in \N$ be arbitrary. When treating any task $\tau_{i,k,j}$ at step $s+1$, $w_s$ gets extended by at least one bit and $r_s$ gets defined for $(i,k,j)$. The enumeration $\mathcal{T}$ does not have duplicate tasks and whenever a task was treated, it got removed from $\mathcal{T}$. Hence, $r_s$ can not be defined for $(i,k,j)$ before treating the task $\tau_{i,k,j}$. So, $w_s$ and $r_s$ get both extended.
\end{claimproof}
\begin{claim}\label{claim:r-total}
$\dom(r) = \N^+ \times \N^+ \times \N$.
\end{claim}
\begin{claimproof}
Every task $\tau_{i,k,j} \in \mathcal{T}$ gets treated once and $r$ gets defined for $(i,k,j)$ when treating $\tau_{i,k,j}$.
\end{claimproof}
\begin{claim}\label{claim:ws-is-rs-valid}
For every $s \in \N$ it holds that $w_s$ is $r_s$-valid.
\end{claim}
\begin{claimproof}
Observe that V\ref{thin-V1}($w_0$) is satisfied and that V\ref{thin-V2}($w_0,r_0$) to V\ref{thin-V4}($w_0,r_0$) do not require anything. So $w_0$ is $r_0$-valid. 

Let $s \in \N^+$ be arbitrary. By induction hypothesis, $w_{s-1}$ is $r_{s-1}$-valid. Let $\tau_{i,k,j}$ be the task treated at step $s$.

Consider $j=0$. If $\tau_{i,k,0}$ is treated such that $r_s(i,k,0)=0$, then $w_s$ is chosen to be $r_s$-valid. Otherwise, by Lemma \ref{lemma:extension}, $w_s = w_{s-1}0$ is $r_{s-1}$-valid. Furthermore, $r_s(i,k,0) = ||w_{s}||+1$, which affects the requirement V\ref{thin-V3}($w_s,r_s$). But for $(i,k,0)$, this requirement can only be violated if a code word $c(i,k,y)$ of length at least $r(i,k,0) > ||w_s||$ is inside $w_s$, which is impossible. Hence $w_s$ is $r_s$-valid.

Consider $j > 0$. If $\tau_{i,k,j}$ is treated such that $r_s(i,k,j) = 0$, then $w_s$ is chosen to be $r_s$-valid. Otherwise, by Lemma \ref{lemma:extension}, $w_s = w_{s-1}0$ is $r_{s-1}$-valid. Since $r_s(i,k,j) > 0$ does not affect any requirement V\ref{thin-V1} to V\ref{thin-V4}, $w_s$ is also $r_s$-valid.
\end{claimproof}

\begin{claim}\label{claim:O-is-rs-valid}
$B$ is $r$-valid.
\end{claim}
\begin{claimproof}
Intuitively, any violation of V\ref{thin-V1} to V\ref{thin-V4} for $B$ and $r$ leads to a contradiction that $w_s$ is $r_s$-valid for all $s \in \N$.
\medskip
\\
Assume that V\ref{thin-V1}($B$) is violated. Either $\varepsilon \in B$, which implies $\varepsilon \in w_1$ and contradicts that V\ref{thin-V1}($w_1$) is satisfied. Or there are $n \in \N^+$ and $a \in \{0,1\}$ such that $\card{(B \cap a\Sigma^{n-1})} > 1-a$. Let $s$ be the first step such that $||w_s|| > n$. Since $B$ is an extension of $w_s$ and $w_s$ is defined for all words of length $n$, $w_s^{=n} = B^{=n}$. Hence, 
\[\card{w_s \cap a\Sigma^{n-1}} = \card{B \cap a\Sigma^{n-1}} > 1-a\] 
and V\ref{thin-V1}($w_s$) is not satisfied, a contradiction to Claim \ref{claim:ws-is-rs-valid} that $w_s$ is $r_s$-valid.
\medskip
\\
Let $i,k$ with $r(i,k,0)=0$ be arbitrary. Let $s$ be the step that defines $r_s(i,k,0)=0$. By Claim~\ref{claim:ws-is-rs-valid}, $w_s$ is $r_s$-valid. By V\ref{thin-V2}($w_s,r_s$), there is some $x$ such that $F_i^{w_s \cup A}(x) \notin L(M_k^{w_s \cup A})$ is definite. Since $B$ is an extension of $w_s$, also $F_i^{B \cup A}(x) \notin L(M_k^{B \cup A})$. Thus, V\ref{thin-V2}($B,r$) is satisfied for $r(i,k,0)$.
\medskip
\\
Assume that V\ref{thin-V3}($B,r$) is violated. Then there are $i,k$ such that $r(i,k,0)>0$ and there is some $y \in \Sigma^{>r(i,k,0)}$ such that $c(i,k,y) \in B$, but $y \notin L(M_k^{B \cup A})$. Let $s$ be the first step such that $c(i,k,y) \in w_s$ and $(i,k,0) \in \dom(r_s)$. Then $y \notin L(M_k^{w_s \cup A})$, because $w_s$ is defined for all words of length $\leq p_k(|y|)$ and hence $w_s^{\smash{\leq p_k(|y|)}} = B^{\leq p_k(|y|)}$. Thus, V\ref{thin-V3}($w_s,r_s$) is not satisfied, a contradiction to Claim \ref{claim:ws-is-rs-valid} that $w_s$ is $r_s$-valid.
\medskip
\\
Let $i,k,j$ with $r(i,k,j) = 0$ and $j \in \N^+$ be arbitrary. Let $s$ be the first step such that $r_s(i,k,j) = 0$. From Claim \ref{claim:ws-is-rs-valid}, $w_s$ is $r_s$-valid. By V\ref{thin-V4}($w_s,r_s$), there is some $y \in \Sigma^{>r(i,k,0)}$ such that $c(i,k,y) \in w_s $ and for all $x \in \Sigma^{\leq p_j(|y|)}$ it holds that $F_i^{w_s \cup A}(x) \not = y$ is definite. Since $B$ is an extension of $w_s$, we have that $c(i,k,y) \in B$ and the definite computations $F_i^{w_s \cup A}(x) \neq y$ stay the same relative to $B$. Thus, \ref{thin-V4}($B,r$) is satisfied for $r(i,k,j)$.
\end{claimproof}

\begin{claim}\label{claim:valid-downward}
Let $s \in \N$ and $w$ be an oracle. If $w_s \sqsubseteq w \sqsubseteq B$, then $w$ is $r_s$-valid.
\end{claim}
\begin{claimproof}
We want to invoke Lemma \ref{lemma:extension} repeatedly to extend $w_s$ to $w$ such that $w$ is $r_s$-valid. For this, we have to make sure that the necessary extensions by $1$ are possible. Let $v$ be $r_s$-valid such that $w_s \sqsubseteq v \sqsubseteq v1 \sqsubseteq w$, i.e., we have to extend $v$ by $1$ in order to get to $w$. We show that V\ref{thin-V1}($v1$) and V\ref{thin-V3}($v1,r_s$) are satisfied. Then Lemma \ref{lemma:extension} can be invoked when extending $w_s$ to $w$ whenever an extension by $1$ is necessary, showing that $w$ is $r_s$-valid. 

If V\ref{thin-V1}($v1$) is not satisfied, then also V\ref{thin-V1}($B$) would not be satisfied, because $B \sqsupseteq w \sqsupseteq v1$, a contradiction to Claim \ref{claim:O-is-rs-valid}.

If V\ref{thin-V3}($v1,r_s$) is not satisfied, then there are $i,k$ such that $r_s(i,k,0) > 0$ and there is some $y \in \Sigma^{>r_s(i,k,0)}$ such that $c(i,k,y) \in v1$ and $M_k^{v1 \cup A}(y)$ rejects. Since $|c(i,k,y)| > p_k(|y|)$, $M_k(y)$ also rejects relative to all extensions of $v1 \cup A$. Since $B \sqsupseteq v1$ and $r \sqsupseteq r_s$, we have that $r(i,k,0) > 0$ and $c(i,k,y) \in B$, but $y \notin L(M_k^{B \cup A})$, implying that \ref{thin-V3}($B,r$) is not satisfied, contradicting Claim~\ref{claim:O-is-rs-valid}. 
\end{claimproof}

\begin{theorem}\label{thm:sparse-oracle}
Let $A \subseteq 2\N$ be arbitrary. There is a sparse oracle $B \subseteq 2\N+1$ such that all $L \in \PSPACE^{B \cup A} \setminus \NP^{B \cup A}$ have no optimal proof system relative to $B \cup A$.
\end{theorem}
\begin{proof}
Choose $B$ and $r$ as constructed in the oracle construction. By V\ref{thin-V1}($B$) and the definition of $\enc$, we get that $B \subseteq 2\N+1$ is a sparse oracle. Let $L \in \PSPACE^{B \cup A} \setminus \NP^{B \cup A}$ be arbitrary and let $k \in \N^+$ such that $L(M_k^{B \cup A})=L$.

\textbf{Assume} that $L$ has an optimal proof system $f \in \FP^{B \cup A}$. Since $\{F_i^{B \cup A}\}_{i \in \N^+}$ is a standard enumeration of $\FP^{B \cup A}$-machines, there is some $i \in \N^+$ such that $F_i^{B \cup A}(x) = f(x)$ for all $x \in \Sigma^*$. We prove the theorem by making a case distinction based on the definition of $r$ and derive a contradiction in each case, i.e.,
\begin{enumerate}[C1]
\item\label{case:thm-sparse-oracle-1} if $r(i,k,j) > 0$ for some $j \in \N^+$, then $L \in \NP^{B \cup A}$, contradicting $L \notin \NP^{B \cup A}$.
\item\label{case:thm-sparse-oracle-2} if $r(i,k,j) = 0$ for all $j \in \N^+$, then $F_i^{B \cup A}$ is not optimal for $L$.
\end{enumerate}
This shows that the assumption that $L$ has an optimal proof system must be false.
\medskip

It must hold that $r(i,k,0) > 0$. Otherwise, V2($B,r$) gives that there is some $x$ such that $F_i^{B \cup A}(x) \notin L(M_k^{B \cup A})$, which is a contradiction to the assumption that $F_i^{B \cup A}$ is an optimal proof system for $L(M_k^{B \cup A})$. So $r(i,k,0) > 0$ holds for the rest of this proof.

We define a proof system $H^{B \cup A}$, which is used to derive a contradiction for the case ``$r(i,k,j) > 0$ for some $j \in \N^+$''. Note that $i$ and $k$ are fixed. After that, we explain the idea of the definition.
\begin{algorithm}
\caption{$H^{B \cup A}$}
\begin{algorithmic}[1]
\State \textbf{Input:} $x \in \Sigma^*$ \label{alg:line1}
\If{$x = 0x'$} \Return $F_i^{B \cup A}(x')$ \label{alg:line2}
\EndIf\label{alg:line3}
\If{$x = 1c(i,k,y)x'$ and $|c(i,k,y)| > r(i,k,0)$ for suitable $x',y$}\label{alg:line4}
\State let $Q \coloneqq B^{<|c(i,k,y)|} \cup \{c(i,k,y)\}$\label{alg:line5}
\State simulate $F_i^{Q \cup A}(x')$\label{alg:line6}
\If{the simulation returns $y$} \Return $y$\label{alg:line7}
\EndIf\label{alg:line8}
\EndIf\label{alg:line9}
\State \Return the smallest element of $L$\label{alg:line10}
\end{algorithmic}
\end{algorithm}

\textbf{Idea:} Intuitively, either $F_i$ produces short proofs for $y$ whenever a suitable code word for $y$ is in the oracle (i.e., $r(i,k,j) > 0$ for some $j \in \N^+$), or $F_i$ is not significantly affected by code words (i.e., $r(i,k,j) = 0$ for all $j \in \N^+$). The proof system $H$ is useful for arguing in the former case. On specific inputs, $H$ checks whether $F_i$ outputs $y$ when assuming that a suitable code word for $y$ is in the oracle. If $F_i$ does so, $H$ also outputs $y$. Since $F_i$ obtains short proofs for any word whenever a suitable code word for it exists in the oracle, $H$ has short proofs for any word even \emph{without} suitable code words appearing inside the oracle. This shows that $L \in \NP^{B \cup A}$ in this case. Importantly, in Claim \ref{claim:Hi-is-proofsystem} we will show that $H$ can not output a word outside of $L$, because otherwise we can exploit this to obtain an oracle where $F_i$ outputs a word outside $L$ and hence $r(i,k,0)=0$ would have been possible.
\medskip

The following claims prove the ideas from the paragraph above. We start by showing that the oracle $Q$ defined in line \ref{alg:line5} is in some sense valid. We use this to show that $H^{B \cup A}$ is a polynomially bounded proof system for $L$ (Claims \ref{claim:Hi-is-proofsystem} and \ref{claim:polybounded-proofs}).

\begin{claim}\label{claim:extension}
Let $i,k \in \N^+$, $s \in \N$ and $y \in \Sigma^*$ such that $|c(i,k,y)| > ||w_s||$. If
\begin{enumerate}[(i)]
\item\label{claim:extension-i} $r_s(i,k,0)$ is not defined, or
\item\label{claim:extension-ii} $y \in L(M_k^{B \cup A})$,
\end{enumerate} 
then $w \coloneqq B^{<|c(i,k,y)|} \cup \{c(i,k,y)\}$, interpreted as the partial oracle defined for all words of length $|c(i,k,y)|$, is an $r_s$-valid extension of $w_s$. 
\end{claim}
\begin{claimproof}
Since $M_k^{B \cup A}(y)$ can only query words of length $p_k(|y|) < |c(i,k,y)|$ and $B^{<|c(i,k,y)|} = w^{<|c(i,k,y)|}$, the computation does not recognize any difference between $B$ and $w$. Hence, 
\begin{align}\label{claim:extension-tag1}
y \in L(M_k^{w \cup A}) \Longleftrightarrow y \in L(M_k^{B \cup A}).
\end{align}

Consider the partial oracle $v$ which is defined like $B$ up to words of length $< |c(i,k,y)|$. Then $w_s \sqsubseteq v \sqsubseteq B$ and by Claim \ref{claim:valid-downward}, $v$ is $r_{s}$-valid. We extend $v$ bitwise to $w$ such that $w$ is defined for words up to length $|c(i,k,y)|$ and only add $c(i,k,y)$ as additional word. To get that $w$ is $r_{s}$-valid via Lemma \ref{lemma:extension}, we have to make sure that we are allowed to add $c(i,k,y)$ (the extensions with $0$ are possible without any problems). Let $v'$ denote the partially extended $r_s$-valid oracle at the point of the single extension by $1$ (i.e., when $c(i,k,y)$ is added next).

V\ref{thin-V1}($v'1$) is satisfied, because $c(i,k,y)$ starts with $0$ and is the only word of its length in $v'1$. For V\ref{thin-V3}($v'1,r_s$), observe that when $c(i',k',y') \in v'$, then $y' \in L(M_{k'}^{v' \cup A})$ by V\ref{thin-V3}($v',r_s$). Since 
\[p_{k'}(|y'|) < |c(i',k',y')| < |c(i,k,y)| = ||v'||,\] 
it also holds that $y' \in L(M_{k'}^{v'1 \cup A})$. So V\ref{thin-V3}($v'1,r_s$) can only be violated for $c(i,k,y)$. If (\ref{claim:extension-i}) holds, then $r_s(i,k,0)$ is not defined and V\ref{thin-V3}($v'1,r_s$) can not be violated by adding $c(i,k,y)$. Otherwise, (\ref{claim:extension-ii}) holds and V\ref{thin-V3}($v'1,r_s$) is satisfied, because
\[y \in L(M_k^{B \cup A}) \overset{\text{(\ref{claim:extension-tag1})}}{\Longrightarrow} y \in L(M_k^{w \cup A}) = L(M_k^{v'1 \cup A}).\] 
This shows that $v'1$ is $r_s$-valid, from which follows that $w$ is $r_s$-valid.
\end{claimproof}
\begin{claim}\label{claim:Hi-is-proofsystem}
$H^{B \cup A}$ is a proof system for $L$.
\end{claim}
\begin{claimproof}
It holds that $H^{B \cup A} \in \FP^{B \cup A}$, because testing whether $x$ is of the form $x = 1c(i,k,y)x'$ can be done in polynomial time by Observation \ref{obs:codeword-property}, we only simulate $F_i^{Q \cup A}$ on words smaller than the input, and oracle queries to $Q \cup A$ can be simulated using $B \cup A$ and $\{c(i,k,y)\}$. Also $\ran(H^{B \cup A}) \supseteq L$ by line \ref{alg:line2} and the assumption that $F_i^{B \cup A}$ is a proof system for $L$.

Suppose there is some $x$ such that $H^{B \cup A}(x) = y \notin L$. Then this can only happen when $x$ is of the form described in line \ref{alg:line4}, i.e., $x = 1c(i,k,y)x'$, $|c(i,k,y)| > r(i,k,0)$, and $F_i^{Q \cup A}(x')=y$ with $Q \coloneqq B^{<|c(i,k,y)|} \cup \{c(i,k,y)\}$. Then, also $F_i^{Q \cup A}(x') = y \notin L = L(M_k^{B \cup A})$, because $H^{B \cup A}(x)=F_i^{Q \cup A}(x')$. Since $M_k^{B \cup A}(y)$ can only query words of length $p_k(|y|) < |c(i,k,y)|$ and $B^{<|c(i,k,y)|} = Q^{<|c(i,k,y)|}$, the computation does not recognize any difference between $B$ and $Q$. Hence, $M_k^{Q \cup A}(y)$ also rejects, which gives 
\begin{align}\label{claim:Hi-is-proofsystem-tag}
F_i^{Q \cup A}(x') = y \notin L(M_k^{Q \cup A}).
\end{align}
Consider the step $s$ where the task $\tau_{i,k,0}$ was treated. Note that 
\[||w_{s-1}|| + 1 \leq ||w_{s}|| + 1 = r(i,k,0) < |c(i,k,y)|.\]
Since $r_{s-1}(i,k,0)$ is not defined, by Claim \ref{claim:extension}, there is an $r_{s-1}$-valid extension $v \coloneqq Q$ of $w_{s-1}$, interpreted as partial oracle defined for all words of length $|c(i,k,y)|$.
We can further extend $v$ to $v'$ such that $v'$ is defined for all words of length $\max\{|c(i,k,y)|,p_i(|x'|)\}$ by appending zeros. Lemma \ref{lemma:extension} gives that $v'$ is $r_{s-1}$-valid.

In fact, $v'$ is even $r_{s-1} \cup \{(i,k,0) \mapsto 0\}$-valid, because from (\ref{claim:Hi-is-proofsystem-tag}) and $v' = Q$ it follows that 
\begin{align*}
F_i^{v' \cup A}(x') = F_i^{Q \cup A}(x') \notin L(M_k^{Q \cup A}) = L(M_k^{v' \cup A}).
\end{align*}
These computations are also definite, because $v'$ is defined for all words of length $p_i(|x'|)$ and $p_k(|y|) < |c(i,k,y)|$. Hence, at step $s$, the oracle construction does not choose $w_s$ and $r_s$ with $r_s(i,k,0) > 0$, but instead $v'$ as $r_{s-1} \cup \{(i,k,0) \mapsto 0\}$-valid extension of $w_{s-1}$, a contradiction to the construction of $r$. 

So for all $x$ it holds that $H^{B \cup A}(x) \in L$. Hence, $\ran(H^{B \cup A}) \subseteq L$ and $H^{B \cup A}$ is a proof system for $L$.
\end{claimproof}
\emph{Assume} case C\ref{case:thm-sparse-oracle-1}, i.e., $r(i,k,j) > 0$ for some $j \in \N^+$. Under this assumption, we prove that $H^{B \cup A}$ has polynomially bounded proofs and hence $L \in \NP^{B \cup A}$, which is a contradiction.
\begin{claim}\label{claim:polybounded-proofs}
$H^{B \cup A}$ has polynomially bounded proofs.
\end{claim}
\begin{claimproof}
Let $s$ be the step that treats the task $\tau_{i,k,j}$. Consider any $y \in L = L(M_k^{B \cup A})$ with $|y| > r(i,k,j)$. Only finitely many words in $L$ do not satisfy this. Note that 
\[||w_{s-1}|| + 1 < ||w_s||+1 = r(i,k,j) < |c(i,k,y)| \text{ and } r(i,k,0) < r(i,k,j).\] By Claim \ref{claim:extension}, there is an $r_{s-1}$-valid extension $v=B^{<|c(i,k,y)|} \cup \{c(i,k,y)\}$ of $w_{s-1}$, interpreted as partial oracle defined for all words of length $|c(i,k,y)|$. We can further extend $v$ to $w$ such that $w$ is defined for all words of length $\max\{|c(i,k,y)|,p_i(p_j(|y|))\}$ by appending zeros. Lemma \ref{lemma:extension} gives that $w$ is $r_{s-1}$-valid.

Recall that $\hhat{y}_{\smash{F_i^{w \cup A}}}$ denotes the shortest $F_i^{w \cup A}$-proof for $y$. Suppose that $\hhat{y}_{\smash{F_i^{w \cup A}}}$ does not exist or $|\hhat{y}_{F_i^{w \cup A}}| > p_j(|y|)$. Then $F_i^{w \cup A}(x) \neq y$ is definite for all $x \in \Sigma^{\leq p_j(|y|)}$, because $w$ is defined for all words of length $\leq p_i(p_j(|y|))$. Together with $c(i,k,y) \in w$ and $|y| > r(i,k,j) > r(i,k,0)$, we get that $w$ is $r_{s-1} \cup \{(i,k,j) \mapsto 0\}$-valid. Hence, at step $s$, the oracle construction does not choose $w_s$ and $r_s$ with $r_s(i,k,j)>0$ as assumed, but instead $w$ as $r_{s-1} \cup \{(i,k,j) \mapsto 0\}$-valid extension of $w_{s-1}$, a contradiction to the supposition. 

Hence, $|\hhat{y}_{F_i^{w \cup A}}| \leq p_j(|y|)$. It follows that 
\[H^{B \cup A}(1c(i,k,y)\hhat{y}_{F_i^{w \cup A}}) = F_i^{Q \cup A}(\hhat{y}_{F_i^{w \cup A}}) = F_i^{w \cup A}(\hhat{y}_{F_i^{w \cup A}}) = y,\] 
showing that $H^{B \cup A}$ has a $1+ |c(i,k,y)| + p_j(|y|)$ bounded proof for all but finitely many $y \in L$. Since by Observation \ref{obs:codeword-property}, $|c(i,k,y)|$ is polynomially bounded with respect to $|y|$ for fixed $i$ and $k$, $H^{B \cup A}$ has polynomially bounded proofs.
\end{claimproof}
Cook and Reckhow \cite[Prop.~1.4]{cr79} show relativizably that if a set has polynomially bounded proof systems, it is contained in $\NP$. By Claims \ref{claim:Hi-is-proofsystem} and \ref{claim:polybounded-proofs}, $H^{B \cup A}$ is a polynomially bounded proof system for $L$. Hence, $L \in \NP^{B \cup A}$, which is a contradiction to the choice of $L$ as $L \notin \NP^{B \cup A}$. This rules out case C\ref{case:thm-sparse-oracle-1} for the definition of $r$.
\medskip
\\
\emph{Assume} case C\ref{case:thm-sparse-oracle-2}, i.e., $r(i,k,j) = 0$ for all $j \in \N^+$. Under this assumption, we define a proof system $G^{B \cup A}$ for $L$ (Claim \ref{claim:Gi-is-proofsystem}), which is not simulated by $F_i^{B \cup A}$ (Claim \ref{claim:Fi-simulationfail-Gi}). Recall that $i$ and $k$ are fixed.
\[G^{B \cup A}(x) = 
\begin{cases}
F_i^{B \cup A}(x') & \text{if } x=1x'\\
y & \text{if } x = 0y \land |y| > r(i,k,0) \land c(i,k,y) \in B \cup A\\
\text{smallest element of $L$} & \text{else}
\end{cases}
\]
\begin{claim}\label{claim:Gi-is-proofsystem}
$G^{B \cup A}$ is a proof system for $L$.
\end{claim}
\begin{claimproof}
It holds that $G^{B \cup A} \in \FP^{B \cup A}$, because we only simulate $F_i$ on words smaller than the input and computing $c(i,k,y)$ from $y$ can be done in polynomial time by Observation \ref{obs:codeword-property}. Also, $\ran(G^{B \cup A}) \supseteq L$ by line 1 of the definition and the assumption that $F_i^{B \cup A}$ is a proof system for $L$. Line 2 never outputs some $y \notin L$, because $A \subseteq 2\N$ contains no code word and by V\ref{thin-V3}($B,r$), whenever $c(i,k,y) \in B$ with $|y| > r(i,k,0)$, it holds that $y \in L(M_k^{B \cup A})=L$. Hence, $\ran(G^{B \cup A}) \subseteq L$.
\end{claimproof}
\begin{claim}\label{claim:Fi-simulationfail-Gi}
$F_i^{B \cup A}$ does not simulate $G^{B \cup A}$.
\end{claim}
\begin{claimproof}
Suppose that $F_i^{B \cup A}$ simulates $G^{B \cup A}$ via $\pi$ whose output-length is restricted by the polynomial $p_{j-1}$ for $j \in \N_2$. By V\ref{thin-V4}($B,r$) and $r(i,k,j)=0$ (case C\ref{case:thm-sparse-oracle-2}), there is some $y \in \Sigma^{>r(i,k,0)}$ such that $c(i,k,y) \in B$ and for all $x \in \Sigma^{\leq p_j(|y|)}$ it holds that $F_i^{B \cup A}(x) \neq y$, i.e., $|\hhat{y}_{F_i^{B \cup A}}| > p_j(|y|)$. We have that $G^{B \cup A}(0y) = y$, so $G^{B \cup A}$ has a proof for $y$ of length $|y|+1$. The simulation function $\pi$ translates the $G^{B \cup A}$-proof $0y$ to some $F_i^{B \cup A}$-proof $x$ of length $\geq |\hhat{y}_{F_i^{B \cup A}}| >p_j(|y|)$. This is not possible, because the $F_i^{B \cup A}$-proof is too long for $\pi$:
\[|\pi(0y)| \leq p_{j-1}(|y|+1) \leq p_j(|y|) < |\hhat{y}_{F_i^{B \cup A}}| \leq |x|.\]
This is a contradiction to the supposition that $F_i^{B \cup A}$ simulates $G^{B \cup A}$ via $\pi$.
\end{claimproof}
The Claims \ref{claim:Gi-is-proofsystem} and \ref{claim:Fi-simulationfail-Gi} show that $G^{B \cup A}$ is a proof system for $L$ which is not simulated by $F_i^{B \cup A}$. This contradicts the assumption that $F_i^{B \cup A}$ is optimal. Hence, we can rule out case C\ref{case:thm-sparse-oracle-2} for the definition of $r$.

In total, we get a contradiction to our main assumption that $F_i^{B \cup A}$ is an optimal proof system for $L$, because the requirement function $r$ can be neither of the type $r(i,k,j) = 0$ for all $j \in \N^+$, nor of the type $r(i,k,j) > 0$ for some $j \in \N^+$. But $r$ is a total function with domain $\N^+ \times \N^+ \times \N$ (Claim \ref{claim:r-total}) and as such one of these two cases holds. Hence, $L$ can not have an optimal proof system.
\end{proof}
\begin{corollary}\label{cor:sparse-oracle}
There is a sparse oracle $B$ such that all $L \in \PSPACE^B \setminus \NP^B$ have no optimal proof system relative to $B$.
\end{corollary}
\begin{proof}
Follows by Theorem \ref{thm:sparse-oracle} for $A=\emptyset$.
\end{proof}
\subsection{Infinite Polynomial Hierarchy and Sparse Oracles}

Long and Selman \cite{ls86} and independently Balcázar, Book, and Schöning \cite{bbs86} show the following generalization of the Karp-Lipton theorem \cite{kl80}.
\begin{theorem}[\cite{bbs86,ls86}]\label{thm:bbs86}
If there exists a sparse set $S$ such that the polynomial-time hierarchy relative
to $S$ collapses, then the (unrelativized) polynomial-time hierarchy collapses.
\end{theorem}
As Bovet, Crescenzi, and Silvestri \cite{bcs95} point out, an analysis of the corresponding proofs shows that Theorem \ref{thm:bbs86} relativizes, which leads to the following result.
\begin{theorem}\label{thm:ph-trick}
Let $A \subseteq \N$ and $B \in \SPARSE$. If\/ $\PH^{A \oplus B}$ collapses, then $\PH^A$ collapses.
\end{theorem}
\begin{corollary}\label{cor:ph-trick}
Let $A \subseteq 2\N$ and $B \subseteq 2\N+1$ such that $B \in \SPARSE$. If\/ $\PH^A$ is infinite, then $\PH^{A \cup B}$ is infinite.
\end{corollary}
Corollary \ref{cor:ph-trick} shows how we can combine an oracle where the $\PH$ is infinite with a sparse oracle, such that the $\PH$ is infinite relative to the combined oracle. 
We use Corollary \ref{cor:ph-trick} to get the main result of section \ref{sec:thin-oracle}.
\begin{theorem}\label{thm:sparse-and-ph-oracle}
There exists an oracle $O_1$ such that $\PH^{O_1}$ is infinite and all $L \in \PSPACE^{O_1} \setminus \NP^{O_1}$ have no optimal proof system relative to $O_1$.
\end{theorem}
\begin{proof}
Let $A \subseteq 2\N$ be an oracle such that $\PH^A$ is infinite (exists by Yao \cite{yao85}), let $B$ be according to Theorem \ref{thm:sparse-oracle} invoked with $A$, and let $O_1 \coloneqq A \cup B$. Then $B \subseteq 2\N+1$ is sparse and all $L \in \PSPACE^{O_1} \setminus \NP^{O_1}$ have no optimal proof system relative to $O_1$. By Corollary \ref{cor:ph-trick}, $\PH^{O_1}$ is infinite.
\end{proof}

\section{Oracle without Optimal Proof Systems Outside NQP}\label{sec:thick-oracle}
\subparagraph{Idea of the construction.} The main approach is similar to the previous oracle. For an arbitrary proof system $f$, we distinguish two cases:
\begin{enumerate}[{Case} 1]
\item\label{idea2:case-1} It is possible to encode infinitely many facts $y \in \ran(f)$ by code words $c(y)$ while $f$ has only superpolynomial long proofs in $|c(y)|$ for the facts $y$. Here $f$ is not optimal (the gap between the length of $c(y)$ and the shortest $f$-proof for $y$ is superpolynomial).
\item\label{idea2:case-2} Such an encoding of infinitely many facts is not possible. Then encode almost all facts $y \in \ran(f)$ by quasi-polynomial long code words into the oracle to achieve $\ran(f) \in \NQP$ (note that contrary to case \ref{case:idea-2b} in section \ref{sec:thin-oracle}, $\ran(f) \in \NQP$ does not follow without encoding).
\end{enumerate}
The second case involves several challenges, because $\ran(f)$ has arbitrary complexity, but we have to encode $\ran(f)$ via quasi-polynomial long code words. For simplicity, let us focus on two main challenges:
\begin{enumerate}[{Challenge} 1]
\item\label{challenge:1} We have to encode every fact $y \in \ran(f)$. How can we extend the oracle while maintaining existing encodings and their dependencies?
\item\label{challenge:2} How do we handle short facts with long $f$-proofs emerging late in the oracle construction?
\end{enumerate}
We solve Challenge \ref{challenge:1} by encoding facts with a suitable padding. The padding for a fact $y$ must be large enough such that the number of dependencies (i.e., reserved words) the oracle has when encoding $y$ is less than the number of available paddings. At the same time, the padding must be small enough such that the resulting code word is at most quasi-polynomially longer than $y$. To keep the number of dependencies of a code word small, we must bound the length of $f$-proofs for $y$ in the length of code words for $y$. This is difficult, since we are dealing with arbitrarily complex sets. Nevertheless, it is possible when applying the principle which solves Challenge \ref{challenge:2}.

We solve Challenge \ref{challenge:2} by avoiding this situation based on the following principle: let there be a long $f$-proof $x$ for a short fact $y$. Then we encode $y$ by a code word that is superpolynomially shorter than $x$ (but not much shorter). Since we are in Case \ref{idea2:case-2}, $f$ notices this and develops a shorter proof, because otherwise we make progress towards Case \ref{idea2:case-1}. By repeating this, we can force $f$ to have increasingly shorter proofs. Finally, $f$ ends up having a rather short proof, which can only depend on short code words. Hence, when defining the membership of short code words in the oracle construction, we check if we can force $f$-proofs for $y$ to be rather short. If we can not, then Challenge \ref{challenge:2} can not appear for $f$ and $y$. 

\bigskip

The following oracle construction is quite long and requires the maintenance of many properties during the construction. Therefore, we will additionally explain the definitions and proof ideas on an intuitive level.

\subsection{Preparation of the Oracle Construction}

\subparagraph{Basic estimates.}
This section collects basic arithmetic and algebraic results regarding logarithmic and exponential functions.
\begin{observation}\label{obs:algebra}
\begin{romanenumerate}
\item $\floor{\log(\log^* 3)} = 1$ and $\floor{\log(\log^* 17)}=2$.\label{obs:algebra-i}
\item $\log^*(i+1) \leq \log^*(i) + 1$ and $\log(i+1) \leq \log (i) + 1$ for $i \geq 1$, $i \in \R$.\label{obs:algebra-ii}
\item $\log i \leq i$ for $i \in \R^+$.\label{obs:algebra-iii}
\item $2i \leq 2^{i-2}$ for $i \geq 6$, $i \in \R$.\label{obs:algebra-iv}
\item $i/2 \geq \log i$ for $i \geq 4$, $i \in \R$.\label{obs:algebra-v}
\end{romanenumerate}
\end{observation}

\begin{claim}\label{claim:algebra}
\begin{romanenumerate}
\item $\sum_{n=3}^{19} \floor{\log (\log^* n )} = 20$ and $\sum_{n=3}^{i} \floor{\log (\log^* n )} \geq i+1$ for $i \in \N_{19}$.\label{claim:algebra-ii}
\item $2^i \geq \floor{\log(\log^* i)} \cdot i + 4i$ for $i \in \N_5$.\label{claim:algebra-iii}
\end{romanenumerate}
\end{claim}
\begin{claimproof}
To (\ref{claim:algebra-ii}): Base case for $i = 19$: 
\[\sum_{n=3}^{19} \floor{\log (\log^* n )} = \underbrace{\floor{\log(2)} \cdot 2}_{n=3,4} + \underbrace{\floor{\log(3)} \cdot 12}_{n=5,\dots,16} + \underbrace{\floor{\log(4)} \cdot 3}_{n=17,\dots,19} = 2 + 12 + 6 = 20\]
Induction step: Let $i \in \N_{19}$ be arbitrary.
\begin{align*}
\sum_{n=3}^{i+1} \floor{\log (\log^* n )} &= \floor{\log (\log^* (i+1) )} + \sum_{n=3}^{i} \floor{\log (\log^* n )}\\ 
&\overset{\text{IH}}{\geq} \floor{\log (\log^* (i+1) )} + i + 1 \overset{\text{\ref{obs:algebra}.\ref{obs:algebra-i}}}{\geq} i + 2
\end{align*}
\medskip
\\
To (\ref{claim:algebra-iii}): Base case for $i = 5$ and $i= 6$:
\[\floor{\log(\log^* 5)} \cdot 5 + 4 \cdot 5 = 25 < 32 = 2^5 \text{ and } \floor{\log(\log^* 6)} \cdot 6 + 4 \cdot 6 = 30 < 64 = 2^6 .\]

Induction step: Let $i \in \N_6$ be arbitrary.
\begin{align*}
\floor{\log(\log^* (i+1))} \cdot (i+1) + 4 (i+1) &\overset{\text{\ref{obs:algebra}.\ref{obs:algebra-ii}}}{\leq} (\floor{\log(\log^* i)}+1) \cdot (i+1) + 4 (i+1)\\
 &= \floor{\log(\log^* i)} \cdot i + \floor{\log(\log^* i)} + i + 4i + 5\\
 &= \floor{\log(\log^* i)} \cdot i + 4i + \floor{\log(\log^* i)} +5 +i\\
 &\leq 2(\floor{\log(\log^* i)} \cdot i + 4i)\\
 &\overset{\text{IH}}{\leq} 2 \cdot 2^i = 2^{i+1}\tag*{\claimqedhere}
\end{align*}
\end{claimproof}

\subparagraph{Stages.}
We use the following notations to describe the lengths of words in our oracle. 
\begin{definition}[Stages]
\ 
\begin{itemize}
\item $t \colon \R^+ \to \R^+,~n \mapsto n^{\log^* n}$. 
\item $h \colon \R^+ \to \R^+,~n \mapsto 2^{(\log n)^{40}}$. 
\item $t_0 \coloneqq 16$, $t_i \coloneqq t(t_{i-1})$ for $i \in \N^+$. 
\item $s_0 \coloneqq 2t_0$, $s_i \coloneqq t_i + t_{i + \floor{\log i}}$ for $i \in \N^+$. 
\item $T \coloneqq \{t_i \mid i \in \N\}$.
\item $S \coloneqq \{s_i \mid i \in \N\}$.
\end{itemize}
\end{definition}

The oracle contains only words of lengths from $S$. Hence, we consider it divided into \emph{stages} $s \in S$ consisting of the words of length $s$. We denote the largest stage a partial oracle $w$ is defined for as $\ls{w}$. The function $h$ will be used as a runtime bound, namely to express that optimal proof systems can only exist for sets in $\NTIME (h^c)$ for some constant $c$.

As described in the sketch of the oracle construction, we have to choose a padding length of the right size. The following claim will be the basis that $t_i$ and $s_i$ have suitable properties to define the correct padding length. We give a brief intuition behind the most important properties: 
\begin{enumerate}[left=0.]
\item[\ref{claim:stages2}.\ref{claim:stages2-vi}] We construct the oracle such that facts of length $\approx t_{i-2}$ from sets with optimal proof systems are encoded by code words of length $\approx t_{i+\floor{\log i}}$. Hence, we can decide such facts by searching for code words of length $\approx t_{i+\floor{\log i}}$. Since $h$ grows faster than the distance between these two values (Claim \ref{claim:stages2}.\ref{claim:stages2-vi}), we can decide these facts in approximately non-deterministic time $h$. 
\item[\ref{claim:stages2}.\ref{claim:stages2-vii}] This property shows that the distance between stages grows faster than any polynomial. Thus, if a proof system has a proof of length $\mathrm{poly}(s_i)$, then in general, it can not pose queries of length $s_{i+1}$. This makes the encodings at stage $s_i$ robust to changes at stage $s_{i+1}$.
\item[\ref{claim:stages2}.\ref{claim:stages2-viii}] The left-most term of the inequality will be the number of available paddings to encode a fact by a code word of length $s_{i+1}$. The right hand side will be an upper bound of the number of dependencies the code words of length $\leq s_{i+\floor{\log i}}$ in the oracle can have. In total, the inequality shows that we can choose suitable paddings to encode facts by code words of length $s_{i+1}$ without disturbing existing code words of length $\leq s_{i+\floor{\log i}}$.

\end{enumerate}
\begin{claim}\label{claim:stages2} 
\begin{romanenumerate}
\item\label{claim:stages2-i} $S,T \subseteq \N^+$ and $S,T \in \P$.
\item\label{claim:stages2-ii} $t_i \geq 2^{2^{i \cdot \floor{\log(\log^* i)} / 4}}$ for $i \in \N_{19}$.
\item\label{claim:stages2-iii} $t_i \leq 2^{2^{i \cdot \floor{\log(\log^* i)} + 4i}}$ for $i \in \N_{5}$.
\item\label{claim:stages2-iv} $s_i \leq 2 t_{i+\floor{\log i}}$ for $i \in \N^+$.
\item\label{claim:stages2-v} $h$ is strictly monotone increasing.
\item\label{claim:stages2-vi} $h(t_{i-2}) \geq t_{i+\floor{\log i}}$ for $i \in \N_{21}$.
\item\label{claim:stages2-vii} For every $k \in \N^+$ there is some $i_0 \in \N$ such that $p_k(s_i) < s_{i+1}$ for all $i \geq i_0$.
\item\label{claim:stages2-viii} There is some $\hat{m} \in \N$ such that for all $i \in \N_{\hat{m}}$ the following holds:
\[2^{t_{i+1+\floor{\log(i+1)}}} - s_{i+4+\floor{\log i}} > 2^{t_{i+\floor{\log i}}+1} \cdot s_{i+2+\floor{\log i}} \cdot (i + \floor{\log i})\]
\end{romanenumerate}
\end{claim}
\begin{claimproof}
To (\ref{claim:stages2-i}): $S,T \in \N^+$ follows from the fact that $t_0 = 16$ and $\ran (\log^*) \subseteq \N$. Since the iterated logarithm can be computed in polynomial time, $t \in \FP$. Hence, when deciding whether $n \in T$, we can compute $t_0,t_1,t_2, \dots$ until we arrive at $t_i \geq n$. Observe that $|t_j| < |t_{j+1}|$ for $j \in \N$. So we have to compute $t(x)$ at most $|n|$ times for inputs $x$ of length $\leq |n|$, which can be done in polynomial time in $|n|$. So $T \in \P$. From this follows $S \in \P$.
\medskip
\\
To (\ref{claim:stages2-ii}): Let $i \in \N_{19}$. First, we show that 
\begin{align}\label{claim:stages2-ii-eq1}
t_i \geq 2^{2^{\sum_{n=3}^i \floor{\log (\log^* n )}}}
\end{align} 
holds by induction. Second, we argue that $\sum_{n=3}^i \floor{\log (\log^* n )}$ has $i/2$ elements of size at least $\floor{\log (\log^* i)}/2$. Hence, $\sum_{n=3}^i \floor{\log (\log^* n )} \geq i \cdot \floor{\log (\log^* i)}/4$ and the claimed statement follows.
\medskip
\\
Base case $i=19$: By Claim \ref{claim:algebra}.\ref{claim:algebra-ii}, it holds that $\sum_{n=3}^{19} \floor{\log (\log^* n )} = 20$. So, we have to show that $t_{19} \geq 2^{2^{20}}$. Since $\log^*(4) = 2$ and $\log^*$ is monotonically increasing, it holds that
\begin{align*}\label{claim:stages2-ii-al1}
t(n) = n^{\log^* n} \ge n^2 \text{ for } n \ge 4. \hfill \tag{$\ast$}
\end{align*}
With this, we can show that $t_j \geq 2^{2^{j+2}}$ for $j \in \N$, which gives $t_{19} \geq 2^{2^{20}}$. For $j=0$, we have $t_0 = 16 \geq 2^{2^{0+2}}$. For an arbitrary $j \in \N$, we get
\[ t_{j+1}=t(t_j) \overset{\text{(\ref{claim:stages2-ii-al1})}}{\geq} t_j^2 \overset{\text{IH}}{\geq} (2^{2^{j+1}})^2 = 2^{2^{j+2}}.\]
\\
Induction step: Let $i \in \N_{19}$ be arbitrary.
\begin{align*}
t_{i+1} &= t_i^{\log^* t_i} \overset{\text{IH}}{\geq} t_i^{\log^* \left(2^{2^{\sum_{n=3}^i \floor{\log(\log^* n)}}}\right)} = t_i^{\log^* \left(\sum_{n=3}^i \floor{\log(\log^* n)} \right)+2} \overset{\text{\ref{claim:algebra}.\ref{claim:algebra-ii}}}{\geq} t_i^{\log^* (i+1)+2}\\
 &\geq t_i^{\log^*(i+1)} \overset{\text{IH}}{\geq} \left (2^{2^{\sum_{n=3}^i \floor{\log(\log^* n)}}}\right )^{\log^*(i+1)} = 2^{2^{\sum_{n=3}^i \floor{\log(\log^* n)}} \cdot \log^*(i+1)}\\
 &= 2^{2^{\sum_{n=3}^i \floor{\log(\log^* n)}} \cdot 2^{\log(\log^*(i+1))}} = 2^{2^{\sum_{n=3}^i \floor{\log(\log^* n)} + \log(\log^*(i+1))}}\\
 &\geq 2^{2^{\sum_{n=3}^i \floor{\log(\log^* n)} + \floor{\log(\log^*(i+1))}}} = 2^{2^{\sum_{n=3}^{i+1} \floor{\log(\log^* n)}}} 
\end{align*}
This proves (\ref{claim:stages2-ii-eq1}) stated above. The following equation holds for all $i \in \N_{19}$. Note that $\log^*(i) \geq 4$.
\begin{align}\label{claim:stages2-ii-eq2}
\begin{aligned}
\floor{\log(\log^* (i/2))} &\overset{\text{\ref{obs:algebra}.\ref{obs:algebra-v}}}{\geq} \floor{\log(\log^* (\log i))} = \floor{\log(\log^* (i)-1)} \overset{\text{\ref{obs:algebra}.\ref{obs:algebra-ii}}}{\geq} \floor{\log(\log^* i) - 1} \\
 &= \floor{\log(\log^* i)} -1 \overset{\text{\ref{obs:algebra}.\ref{obs:algebra-i}}}{\geq} \floor{\log(\log^* i)}/2
\end{aligned}
\end{align}
Finally, for $i \in \N_{19}$, we get the desired upper bound of $t_i$.
\begin{align*}
t_i &\overset{\text{(\ref{claim:stages2-ii-eq1})}}{\geq} 2^{2^{\sum_{n=3}^{i} \floor{\log(\log^* n)}}} \geq 2^{2^{\sum_{n=\ceil{i/2}}^{i} \floor{\log(\log^* (i/2))}}} = 2^{2^{(i+1 - \ceil{i/2}) \cdot \floor{\log(\log^*(i/2))}}}\\
&\overset{\text{(\ref{claim:stages2-ii-eq2})}}{\geq}  2^{2^{(i+1 - \ceil{i/2}) \cdot \floor{\log (\log^* i)}/2}} \geq 2^{2^{(i/2) \cdot \floor{\log(\log^* i)}/2}} = 2^{2^{i \cdot \floor{\log(\log^* i)}/4}}
\end{align*}
\\
To (\ref{claim:stages2-iii}): Base case $i = 5$: It holds that
\[\floor{\log(\log^* 5)} \cdot 5 + 4 \cdot 5 = \floor{\log 3} \cdot 5 + 20 = 5 + 20 = 25.\]
The following calculation shows $t_5 = 2^{6000} < 2^{2^{13}} < 2^{2^{25}}$. 
    \begin{eqnarray*}
        t_0 &=& 2^4\\
        t_1 &=& t(2^4) = (2^4)^{\log^* 2^4} = (2^4)^{1+\log^* 4} = (2^4)^3 = 2^{12}\\
        t_2 &=& t(2^{12}) = (2^{12})^{\log^* 2^{12}} = (2^{12})^{1+\log^* 12} = (2^{12})^4 = 2^{48}\\
        t_3 &=& t(2^{48}) = (2^{48})^{\log^* 2^{48}} = (2^{48})^{1+\log^* 48} = (2^{48})^5 = 2^{240}\\
        t_4 &=& t(2^{240}) = (2^{240})^{\log^* 2^{240}} = (2^{240})^{1+\log^* 240} = (2^{240})^5 = 2^{1200}\\
        t_5 &=& t(2^{1200}) = (2^{1200})^{\log^* 2^{1200}} = (2^{1200})^{1+\log^* 1200} = (2^{1200})^5 = 2^{6000}
    \end{eqnarray*}
Induction step: Let $i \in \N_5$ be arbitrary.
\begin{align*}
t_{i+1} &= t_i^{\log^* t_i} \overset{\text{IH}}{\leq} t_i^{\log^* \left(2^{2^{\floor{\log(\log^* i)} \cdot i + 4i}} \right)} = t_i^{\log^* \left(\floor{\log(\log^* i)} \cdot i + 4i \right) + 2} \overset{\text{\ref{claim:algebra}.\ref{claim:algebra-iii}}}{\leq} t_i^{\log^* (2^i) + 2}\\
 &= t_i^{\log^* (i) + 3}\overset{\text{IH}}{\leq} \left(2^{2^{\floor{\log(\log^* i)} \cdot i + 4i}}\right)^{\log^* (i) + 3} = 2^{2^{\floor{\log(\log^* i)} \cdot i + 4i} \cdot (\log^*(i) + 3)}\\
 &= 2^{2^{\floor{\log(\log^* i)} \cdot i + 4i} \cdot 2^{\log(\log^*(i)+3)}} = 2^{2^{\floor{\log(\log^* i)} \cdot i + 4i + \log(\log^* (i) + 3)}}\\
 &\leq 2^{2^{\floor{\log(\log^* i)} \cdot i + 4i + \floor{\log(\log^* (i) + 3)} + 1}} \overset{\text{\ref{obs:algebra}.\ref{obs:algebra-ii}}}{\leq} 2^{2^{\floor{\log(\log^* i)} \cdot i + 4i + \floor{\log(\log^* i)} + 4}}\\
 &= 2^{2^{\floor{\log(\log^* i)} \cdot (i+1) + 4(i+1)}} \leq 2^{2^{\floor{\log(\log^* (i+1))} \cdot (i+1) + 4(i+1)}}
\end{align*}
To (\ref{claim:stages2-iv}): For $i \in \N^+$, from $t_{i+\floor{\log i}} \geq t_i$ follows $s_i = t_i + t_{i+\floor{\log i}} \leq 2t_{i+\floor{\log i}}$. 
\medskip
\\To (\ref{claim:stages2-v}): For $n \in \R^+$, $\log(n)$ is defined and $\log$ is strictly monotone increasing on $\R^+$.
\medskip
\\
To (\ref{claim:stages2-vi}): First, we provide an upper bound for $t_{i+\floor{\log i}}$. Let $i \in \N_{21}$ be arbitrary.
\begin{align}\label{claim:stages2-v-tag}
\begin{aligned}
t_{i+\floor{\log i}} &\overset{\text{(\ref{claim:stages2-iii})}}{\leq} 2^{2^{\floor{\log(\log^*(i+\floor{\log i}))} \cdot (i+\floor{\log i}) + 4(i + \floor{\log i})}} \overset{\text{\ref{obs:algebra}.\ref{obs:algebra-iii}}}{\leq} 2^{2^{\floor{\log(\log^*(2i))} \cdot 2i + 8i}}\\
 &\overset{\text{\ref{obs:algebra}.\ref{obs:algebra-iv}}}{\leq} 2^{2^{\floor{ \log(\log^*(2^{i-2}))} \cdot 2i + 8i}} = 2^{2^{\floor{ \log(\log^*(i-2)+1)} \cdot 2i + 8i}} \overset{\text{\ref{obs:algebra}.\ref{obs:algebra-ii}}}{\leq} 2^{2^{(\floor{ \log(\log^*(i-2))}+1) \cdot 2i + 8i}}\\
 &= 2^{2^{(\floor{ \log(\log^*(i-2))}+5) \cdot 2i}}
\end{aligned}
\end{align}
With this we can bound $h(t_{i-2})$:
\begin{align*}
h(t_{i-2}) &= 2^{(\log t_{i-2})^{40}} \overset{\text{(\ref{claim:stages2-ii})}}{\geq} 2^{\left(2^{(i-2) \cdot \floor{\log(\log^* (i-2))}/4} \right)^{40}} = 2^{2^{10 (i-2) \cdot \floor{\log(\log^* (i-2))}}}\\
&= 2^{2^{(10i-20) \cdot \floor{\log(\log^*(i-2))}}} \overset{i \geq 21}{\geq} 2^{2^{8i \cdot \floor{\log(\log^*(i-2))}}} = 2^{2^{2i \cdot 4 \cdot \floor{\log(\log^*(i-2))}}}\\
 &\overset{\text{\ref{obs:algebra}.\ref{obs:algebra-i}}}{\geq} 2^{2^{2i \cdot (\floor{\log(\log^*(i-2))}+5)}} \overset{\text{(\ref{claim:stages2-v-tag})}}{\geq} t_{i+\floor{\log i}}
\end{align*}
To (\ref{claim:stages2-vii}): Let $k \in \N^+$ be arbitrary. Choose $m \in \N$ such that for all $n\geq m$ it holds that $\log^*(n) > 2k$. Choose $i_0$ such that $t_{i_0 + \floor{\log i_0}} > m$. Then for all $i \geq i_0$ holds $\log^* t_{i+\floor{\log i}} > 2k$. Furthermore, we obtain
\begin{align*}
p_k(s_i) &= s_i^k \overset{\text{(\ref{claim:stages2-iv})}}{\leq} (2t_{i + \floor{\log i}})^k \leq (t_{i+\floor{\log i}})^{2k} \leq (t_{i+\floor{\log i}})^{\log^* t_{i+\floor{\log i}}} = t(t_{i+\floor{\log i}})\\
 &= t_{i+1+\floor{\log i}} < t_{i+1} + t_{i+1+\floor{\log(i+1)}} = s_{i+1}.
\end{align*}
To (\ref{claim:stages2-viii}): Observe that one can choose some $\hat{m}$ such that for all $i \geq \hat{m}$, it holds
\begin{alphaenumerate}
\item\label{claimproof2:mhat-i} $t(t_i) \geq 4t_i^2$,
\item\label{claimproof2:mhat-ii} $i > 19+\floor{\log i} + \floor{\log(i+4 + \floor{\log i})}$,
\item\label{claimproof2:mhat-iii} $2^i > \floor{\log(\log^*(3i))} \cdot 3i + 12i$, and
\item\label{claimproof2:mhat-iv} $\floor{\log(\log^* i)} \geq 4$.
\end{alphaenumerate}
Then for $i \in \N_{\hat{m}}$ it holds that:
\begin{align*}
&\ 2^{t_{i+\floor{\log i}}+1} \cdot s_{i+2+\floor{\log i}} \cdot (i+\floor{\log i})+ s_{i+4+\floor{\log i}}\\
& \leq 2^{t_{i+\floor{\log i}}+1} \cdot s_{i+4+\floor{\log i}} \cdot (i+1+\floor{\log i})\\
& \overset{(\ref{claim:stages2-iv})}{\leq} 2^{t_{i+\floor{\log i}}} \cdot 2 \cdot 2 \cdot t_{i+4+\floor{\log i} + \floor{\log(i+4 + \floor{\log i})}} \cdot (i+1+\floor{\log i})\\
& \leq 2^{t_{i+\floor{\log i}}} \cdot 4 (t_{i+4+\floor{\log i} + \floor{\log(i+4 + \floor{\log i})}})^2
\\
& \overset{(\ref{claimproof2:mhat-i})}{\leq} 2^{t_{i+\floor{\log i}}} \cdot t(t_{i+4+\floor{\log i} + \floor{\log(i+4 + \floor{\log i})}})\\
& = 2^{t_{i+\floor{\log i}}} \cdot t_{i+5+\floor{\log i} + \floor{\log(i+4 + \floor{\log i})}} \overset{(\ref{claimproof2:mhat-ii})}{\leq} 2^{t_{i+\floor{\log i}}} \cdot t_{2i} < 2^{t_{i + \floor{\log i}}} \cdot t_{3i}\\
&\overset{\text{(\ref{claim:stages2-iii}),(\ref{claimproof2:mhat-ii})}}{\leq}2^{t_{i + \floor{\log i}}} \cdot  2^{2^{\floor{\log(\log^*(3i))} \cdot 3i + 12i}} \overset{\text{(\ref{claimproof2:mhat-iii})}}{\leq} 2^{t_{i + \floor{\log i}}} \cdot 2^{2^{2^{i}}}\\
&\overset{\text{(\ref{claimproof2:mhat-iv})}}{\leq} 2^{t_{i + \floor{\log i}}} \cdot 2^{2^{2^{i \cdot \floor{\log(\log^* i)}/4}}} \overset{\text{(\ref{claim:stages2-ii}),(\ref{claimproof2:mhat-ii})}}{\leq} 2^{t_{i + \floor{\log i}}} \cdot 2^{t_i} \leq 2^{t_{i + \floor{\log i}}} \cdot 2^{t_{i+\floor{\log i}}}\\
&= 2^{2t_{i + \floor{\log i}}} \overset{t_i\geq 4}{\leq} 2^{(t_{i + \floor{\log i}})^{\log^* t_{i + \floor{\log i}}}} = 2^{t_{i + \floor{\log i} + 1}} \leq 2^{t_{i + 1 + \floor{\log (i+1)}}}\tag*{\claimqedhere}
\end{align*}
\end{claimproof}

For the rest of this paper, we fix a constant $\hat{m}$ satisfying the requirements of Claim \ref{claim:stages2}.\ref{claim:stages2-viii}. We also limit the concept of partial oracles. From now on we require that a partial oracle $w$ is either completely defined or completely undefined for a stage $s_i \in S$, i.e., for all words of length $s_i$. We denote the largest stage an oracle $w$ is defined for as $\ls{w}$. Furthermore, whenever we write $w' \sqsupsetneq w$ or $w \sqsubsetneq w'$, then $\ls{w} < \ls{w'}$, i.e., $w'$ must be fully defined for at least the first undefined stage of $w$.

\subparagraph*{Code words.}

We want to encode a fact $y \in \ran(F_k)$ into the oracle using a \emph{code word} $z$ of length $s_i$ for some $i \in \N^+$. For this, we encode $k$ and $y$ in the front part of $z$ and use the back part to pad $z$ to the desired length. For combinatorial reasons, the front part should have length $t_i$, whereas the back part has length $t_{i+\floor{\log i}}$. The following definition makes this precise. 
\begin{definition}[Code words]\label{def:codewords2}
Let $b \in \Sigma^*$ such that $|b| = t_{i + \floor{\log i}}$ for some $i \in \N^+$. Let $k \in \N^+$ and $y \in \Sigma^*$ such that $k+2+|y|  \leq t_i$. Define 
\[c(k,y,b) \coloneqq \underbrace{0^k10^j1y}_{t_i}\underbrace{\phantom{y^k}b\phantom{y^k}}_{t_{i + \floor{\log i}}},\]
where $j \coloneqq t_i-(k+2+|y|)$. Such $c(k,y,b)$ is called code word.

For $i,k \in \N^+$ and $y \in \Sigma^*$, we define 
\[C_{k,y}^i \coloneqq \{c(k,y,b) \mid b \in \Sigma^{t_{i+\floor{\log i}}} \land k+2+|y| \leq t_i\}\]
 as the set of code words for $k$ and $y$ of length $s_i$.  We define $C^i \coloneqq \bigcup _{k\in \N^+,y \in \Sigma^*} C_{k,y}^i$ as the set of all code words of length $s_i$ and $C \coloneqq \bigcup_{i \in \N^+} C^i$ as the set of all code words.
\end{definition}
Observe that $c(k,y,b)$ is well-defined as $i$ is uniquely determined by $|b|$. In the oracle construction, we want to encode facts as compactly as possible, which motivates the following definition.
\begin{definition}[Shortest code words]
Let $i \colon \N^+ \times \Sigma^* \to \N^+$ be the function that maps $(k,y)$ to the smallest number $i$ with $C_{k,y}^i \neq \emptyset$. We say that $c(k,y,b) \in C$ is a shortest code word for $k$ and $y$, if $|c(k,y,b)| = s_{i(k,y)}$.
\end{definition}
For the rest of this paper, if not stated otherwise, we use the notation $c(k,y,b)$ only for appropriate choices of $k,y,b$, i.e., whenever $c(k,y,b) \in C$.

\begin{observation}\label{observation:codeword}
For every code word $c(k,y,b)$ it holds that $|c(k,y,b)| \in S$.
\end{observation}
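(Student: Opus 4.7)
The plan is to argue directly from the definition of code words by a simple length computation. Let $c(k,y,b)$ be an arbitrary code word; by Definition~\ref{def:codewords}, there exists some $i \in \N^+$ such that $|b| = t_{i+\floor{\log i}}$, $k+2+|y| \leq t_i$, and $c(k,y,b) = 0^k 1 0^j 1 y b$ with $j = t_i - (k+2+|y|)$. I would first observe that $i$ is uniquely determined by $|b|$, so these parameters are well-defined.

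Next I would compute
\[
|c(k,y,b)| = k + 1 + j + 1 + |y| + |b| = (k+2+|y|) + j + |b|.
\]
Substituting $j = t_i - (k+2+|y|)$ collapses the front part to $t_i$, and then plugging in $|b| = t_{i+\floor{\log i}}$ gives
\[
|c(k,y,b)| = t_i + t_{i+\floor{\log i}} = s_i,
\]
where the final equality is the definition of $s_i$ (for $i \in \N^+$). Hence $|c(k,y,b)| \in S$.

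There is no real obstacle here: the observation is a book-keeping check that the padding length $j$ in Definition~\ref{def:codewords} was chosen precisely so that the front block has total length $t_i$, which combines with the back block $b$ of length $t_{i+\floor{\log i}}$ to land exactly on the stage $s_i$. The only mild subtlety worth recording is that the definition forces $i \in \N^+$ (so the case $i=0$, where $s_0 = t_0 + t_0$ would not equal $t_0 + t_{0+\floor{\log 0}}$, never arises); this is automatic from $|b| = t_{i+\floor{\log i}}$ requiring $i \geq 1$.
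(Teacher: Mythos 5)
Your proof is correct and is essentially the only natural argument: the paper states this as an unproved observation, and your direct length computation from Definition~\ref{def:codewords} (showing $|c(k,y,b)| = t_i + t_{i+\floor{\log i}} = s_i$ for the $i \in \N^+$ determined by $|b|$) is exactly what justifies it.
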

The following claim shows that $h$ grows as fast as the distance between a word and its shortest code word. Hence, a machine with runtime $O(h)$ is able to query shortest code words for $y$ on input $y$.
\begin{claim}\label{claim:smallest-cw-length-y}
If $|y| \geq k + 2$ for $k \in \N^+$ and $i(k,y) \geq 21$, then $h(|y|) \geq s_{i(k,y)}/2$.
\end{claim}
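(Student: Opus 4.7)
The plan is to chain together the characterization of $i(k,y)$ with the growth estimate in Claim~\ref{claim:stages}.\ref{claim:stages-vii} and the bound on $s_i$ in Claim~\ref{claim:stages}.\ref{claim:stages-viii}. Write $i^{\star} \coloneqq i(k,y)$ for brevity. By Definition~\ref{def:codewords}, $C_{k,y}^{i}\neq\emptyset$ is equivalent to $k+2+|y|\leq t_{i}$, so the minimality of $i^{\star}$ gives the strict inequality $t_{i^{\star}-1} < k+2+|y|$. Using the hypothesis $|y|\geq k+2$, this bound becomes $t_{i^{\star}-1} < 2|y|$, hence $|y| > t_{i^{\star}-1}/2$.

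Next I would argue that $t_{i^{\star}-1}/2 \geq t_{i^{\star}-2}$, which lets us conclude $|y|\geq t_{i^{\star}-2}$. This uses only the recursion $t_{i^{\star}-1} = t(t_{i^{\star}-2}) = t_{i^{\star}-2}^{\log t_{i^{\star}-2}}$: since $i^{\star}\geq 2$ implies $t_{i^{\star}-2}\geq t_0 = 4$, we get $\log t_{i^{\star}-2}\geq 2$ and therefore $t_{i^{\star}-1}\geq t_{i^{\star}-2}^{2}\geq 2\,t_{i^{\star}-2}$.

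With $|y|\geq t_{i^{\star}-2}$ in hand, the strict monotonicity of $h$ (Claim~\ref{claim:stages}.\ref{claim:stages-vi}) yields $h(|y|) \geq h(t_{i^{\star}-2})$, and Claim~\ref{claim:stages}.\ref{claim:stages-vii} (applicable because $i^{\star}\geq 2$) gives $h(t_{i^{\star}-2}) \geq t_{i^{\star}+\floor{\log i^{\star}}}$. Finally, Claim~\ref{claim:stages}.\ref{claim:stages-viii} provides $s_{i^{\star}} \leq 2\,t_{i^{\star}+\floor{\log i^{\star}}}$, i.e. $t_{i^{\star}+\floor{\log i^{\star}}} \geq s_{i^{\star}}/2$, and concatenating these inequalities yields the desired bound $h(|y|) \geq s_{i(k,y)}/2$.

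The step that requires the most care is the passage from $t_{i^{\star}-1} < 2|y|$ to $|y|\geq t_{i^{\star}-2}$, since it is the only place where we invoke the concrete form of $t$ rather than a previously stated property; everything else reduces to bookkeeping with Claim~\ref{claim:stages}. No new lemmas are needed.
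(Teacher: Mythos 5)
Your proof is correct and follows essentially the same route as the paper: derive $t_{i(k,y)-1} < k+2+|y| \leq 2|y|$ from minimality of $i(k,y)$ and the hypothesis $|y|\geq k+2$, observe $t_{i(k,y)-1}/2 \geq t_{i(k,y)-2}$, and then chain Claims~\ref{claim:stages}.\ref{claim:stages-vi}, \ref{claim:stages}.\ref{claim:stages-vii}, and \ref{claim:stages}.\ref{claim:stages-viii}. The only difference is that you spell out the justification of $t_{i(k,y)-1}\geq 2t_{i(k,y)-2}$ (via $t_0=4$ and $\log t_{i(k,y)-2}\geq 2$), which the paper compresses into ``follows from the definition of $t_{i(k,y)}$.''
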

\begin{claimproof}
Since $s_{i(k,y)}$ is the length of the shortest code word for $k$ and $y$, it holds that $t_{i(k,y)-1} < k + |y| + 2$, because otherwise code words of length $s_{i(k,y)-1}$ would be possible. Thus, 
\begin{align*}
|y| \geq (|y|+k+2)/2 > t_{i(k,y)-1}/2 \geq t_{i(k,y)-2},
\end{align*}
where the last inequality follows from the definition of $t_{i(k,y)}$. Consequently, 
\[h(|y|) \overset{\ref{claim:stages2}.\ref{claim:stages2-v}}{\geq} h(t_{i(k,y)-2}) \overset{\ref{claim:stages2}.\ref{claim:stages2-vi}}{\geq} t_{i(k,y)+\floor{\log i(k,y)}} \overset{\ref{claim:stages2}.\ref{claim:stages2-iv}}{\geq} s_{i(k,y)}/2.\tag*{\claimqedhere}\]
\end{claimproof}
Concluding this section, we bound the number of encodable facts per stage and the number of available code words per fact and stage.
\begin{claim}\label{claim:codewordDependency}
Let $i \in \N^+$.
\begin{romanenumerate}
\item\label{claim:codewordDependency-i} $\card{\{(k,y) \in \N^+ \times \Sigma^* \mid C^i_{k,y} \not = \emptyset\}} \leq 2^{t_i+1}$.

(Meaning: the number of encodable facts on stage $s_i$ is $\leq 2^{t_i + 1}$.)
\item\label{claim:codewordDependency-ii} For all $k \in \N^+$, $y \in \Sigma^*$, either $\card{C_{k,y}^i} = 2^{t_{i + \floor{\log i}}}$ or $C_{k,y}^i = \emptyset$.

(Meaning: the number of code words of length $s_i$ for a fact $(k,y)$ is $2^{t_{i+\floor{\log i}}}$ or $0$.)

\end{romanenumerate}
\end{claim}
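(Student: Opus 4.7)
The plan is to prove both parts by direct counting, exploiting the uniform decomposition $c(k,y,b) = 0^k 1 0^j 1 y b$ from Definition~\ref{def:codewords}, where the front $0^k 1 0^j 1 y$ has length exactly $t_i$ and the tail $b$ has length exactly $t_{i+\floor{\log i}}$, so $|c(k,y,b)| = s_i$.

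For part (\ref{claim:codewordDependency-i}) I would show that the map $(k,y) \mapsto 0^k 1 0^j 1 y$, with $j \coloneqq t_i - (k+2+|y|)$, is an injection from $\{(k,y) \in \N^+ \times \Sigma^* \mid C^i_{k,y} \neq \emptyset\}$ into $\Sigma^{t_i}$. Given such a front, one first recovers $k$ as the length of the leading run of zeros (well-defined since $k \geq 1$), then identifies the second anchor $1$ as the next occurrence of $1$ after position $k$ (the intermediate positions are forced to be the $0^j$ block), and finally reads $y$ off as the remaining suffix of the front. Injectivity then yields $\card{\{(k,y) \mid C^i_{k,y} \neq \emptyset\}} \leq 2^{t_i} \leq 2^{t_i+1}$, as required.

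For part (\ref{claim:codewordDependency-ii}) I would split on whether $k+2+|y| \leq t_i$. If the inequality fails, $C^i_{k,y} = \emptyset$ by Definition~\ref{def:codewords}. Otherwise $j$ is a well-defined nonnegative integer, the front $0^k 1 0^j 1 y$ is fixed, and $C^i_{k,y}$ is parameterized by $b \in \Sigma^{t_{i+\floor{\log i}}}$; distinct choices of $b$ yield distinct code words since $b$ occupies the last $t_{i+\floor{\log i}}$ bits, so $\card{C^i_{k,y}} = 2^{t_{i+\floor{\log i}}}$.

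There is essentially no combinatorial obstacle here; the only point requiring a moment of care is the uniqueness of the decoding in (\ref{claim:codewordDependency-i}) when $y$ may itself begin with $0$, which is handled by observing that the two anchor $1$'s are separated solely by the $0^j$ block, so the second anchor is unambiguously the first $1$ occurring after the leading zero run.
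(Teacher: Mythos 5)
Your proposal is correct, and for part~(\ref{claim:codewordDependency-ii}) it is essentially identical to the paper's argument: once $k+2+|y|\leq t_i$, the front of the code word is frozen and $b$ ranges freely over $\Sigma^{t_{i+\floor{\log i}}}$, giving exactly $2^{t_{i+\floor{\log i}}}$ code words. For part~(\ref{claim:codewordDependency-i}) you take a slightly sharper route than the paper. The paper simply observes that each encodable pair $(k,y)$ is determined by a string of length at most $t_i$ (a pair $0^k$, $y$ with $k+|y|+2\leq t_i$), and bounds by $\sum_{j=0}^{t_i}2^j\leq 2^{t_i+1}$. You instead inject each such pair into $\Sigma^{t_i}$ via the full front part $0^k10^j1y$ of the code word (with $j$ determined by the length constraint), and argue unique decodability by identifying the two anchor $1$'s; this gives the marginally tighter $2^{t_i}$, still within the stated $2^{t_i+1}$. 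Both arguments exploit the same decomposition from Definition~\ref{def:codewords}; yours is a bit more explicit about why the map is injective (in particular handling the case where $y$ begins with $0$), whereas the paper's count is cruder and doesn't bother to fix a particular injection. Either way the claim follows.
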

\begin{claimproof}
To (\ref{claim:codewordDependency-i}): For a code word $c(k,y,b)$ of length $s_i$ it must hold that $t_i \geq |y|+k+2$. Thus, there are at most $\sum _{j=0}^{t_i} 2^j \leq 2^{t_i + 1}$ possible choices for the strings $0^k$ and $y$.
\medskip
\\
To (\ref{claim:codewordDependency-ii}): Let $k \in \N^+$ and $y \in \Sigma^*$ be fixed. If $C_{k,y}^i \neq \emptyset$, then there exists some $c(k,y,b') \in C_{k,y}^i$. By Definition \ref{def:codewords2}, for a code word $c(k,y,b)$ of length $s_i$ it must hold that $t_i \geq k + |y| + 2$ and $|b| = t_{i+\floor{\log i}}$. Since $c(k,y,b') \in C_{k,y}^i$, the first condition is satisfied. There are no further restrictions on $b$, so there are exactly $2^{t_{i + \floor{\log i}}}$ possible values for $b$, i.e., paddings for encoding fixed $k$ and $y$. Hence $\card{C_{k,y}^i} = 2^{t_{i+\floor{\log i}}}$.
\end{claimproof}

\subparagraph*{Proof selectors.}
In general, we want to use code words to encode facts like $y \in \ran(F_k)$. A code word's membership to the oracle is justified by some computation $F_k(x)=y$. To associate this computation to the code word, we introduce \emph{proof selectors}, which basically map code words to some (not necessarily definite) computation ensuring that the code word encodes ``correct'' information.
\begin{definition}[Proof selector]
A function $\mathcal{\sigma} \colon \N^+ \times \Sigma^* \times \Sigma ^* \to \Sigma^*$ is called a \emph{proof selector} relative to an oracle $w$, if $\dom(\sigma) = \{(k,y,b) \mid c(k,y,b) \in w\}$ and $F_k^w(\sigma (k,y,b))=y$ for $(k,y,b) \in \dom(\sigma)$. 

Let $\mathcal{S} \coloneqq \{s \colon \N^+ \times \Sigma^* \times \Sigma^* \to \Sigma^*\}$ denote the set of all potential proof selectors. For an oracle $w$, let $\mathcal{S}^w \coloneqq \{s \mid s \mbox{ is a proof selector relative to } w\}$ denote the set of all proof selectors relative to $w$.
\end{definition}
\begin{observation}\label{obs:proof-selection}
For every oracle $w$, $\sigma \in \mathcal{S}^w$, and $(k,y,b) \in \dom(\sigma)$ it holds that $\hhat{y}_{F_k^w}$ is defined and $|\hhat{y}_{F_k^w}| \leq |\sigma (k,y,b)|$.
\\
(Meaning: Recall that $\hhat{y}_{F_k^w}$ denotes the shortest $F_k^w$-proof for $y$, see the section ``Proof systems''. Hence, selected $F_k^w$-proofs via $\sigma$ have at least the length of shortest $F_k^w$-proofs.)
\end{observation}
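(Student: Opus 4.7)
The plan is to unfold the two definitions directly. Since $\sigma \in \mathcal{S}^w$, by the defining conditions of a proof selector relative to $w$, every triple $(k,y,b) \in \dom(\sigma)$ satisfies $F_k^w(\sigma(k,y,b)) = y$. This immediately witnesses $y \in \ran(F_k^w)$, so $\sigma(k,y,b)$ is an $F_k^w$-proof of $y$.

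Next, I would invoke the definition of $\hat{y}_{F_k^w}$ from the ``Proof systems'' paragraph: whenever $y \in \ran(F_k^w)$, the shortest $F_k^w$-proof for $y$ exists and is denoted $\hat{y}_{F_k^w}$. The existence part of the observation follows immediately. For the length bound, the very minimality in the definition of $\hat{y}_{F_k^w}$ (it is the smallest $x$ with $F_k^w(x) = y$, and smaller numbers are lexicographically shorter or equal) gives $|\hat{y}_{F_k^w}| \leq |\sigma(k,y,b)|$, since $\sigma(k,y,b)$ is one particular $F_k^w$-proof of $y$.

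There is no real obstacle here; the statement is essentially a direct unfolding of definitions and serves as a notational remark to be used later. I would keep the proof to a single short paragraph and not belabor the two-step reasoning.
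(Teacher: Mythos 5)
Your proof is correct, and it matches what the paper implicitly relies on: the observation is stated without proof precisely because it is a direct unfolding of the definitions of proof selector and of $\hat{y}_f$, exactly the two steps you spell out. The only point worth double-checking — that minimality of $\hat{y}_{F_k^w}$ as a number implies the length bound — is covered by the paper's remark that its identification of $\Sigma^*$ with $\N$ is order-preserving for the quasi-lexicographic ordering, which you correctly invoke.
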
 
\begin{observation}\label{obs:well-defined}
For every oracle $w$, $X \subseteq \Sigma^*$, and $\sigma \in \mathcal{S}^w$, if $X \cap w = \emptyset$, then $\{(k,y,b) \mid c(k,y,b) \in X\} \cap \dom(\sigma) = \emptyset$.
\\
(Meaning: When extending an oracle $w$ by code words, we can extend $\sigma \in \mathcal{S}^w$ by corresponding triples without violating $\dom(\sigma)$.)
\end{observation}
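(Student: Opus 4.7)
The plan is to prove the observation by direct unfolding of definitions, so the argument amounts to a short contrapositive chain. I would start from the definition of a proof selector relative to $w$: by the definition preceding the observation, any $\sigma \in \mathcal{S}^w$ satisfies $\dom(\sigma) = \{(k,y,b) \mid c(k,y,b) \in w\}$. This is the only property of $\sigma$ that enters the argument; the condition $F_k^w(\sigma(k,y,b)) = y$ is not needed here.

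Next, I would argue by contradiction. Suppose there exists $(k,y,b) \in \{(k,y,b) \mid c(k,y,b) \in X\} \cap \dom(\sigma)$. Then, on one hand, $c(k,y,b) \in X$ by the description of the first set, and on the other hand, $c(k,y,b) \in w$ by the identity $\dom(\sigma) = \{(k,y,b) \mid c(k,y,b) \in w\}$ recalled above. Hence $c(k,y,b) \in X \cap w$, which contradicts the hypothesis $X \cap w = \emptyset$. Therefore the intersection must be empty, which is exactly the claim.

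There is no real obstacle here; the observation is essentially a bookkeeping remark that records why the later oracle construction is allowed to enlarge an oracle $w$ by new code words (drawn from some set $X$ disjoint from $w$) without breaking consistency with a previously chosen proof selector $\sigma$. The newly added triples $(k,y,b)$ with $c(k,y,b) \in X$ lie outside $\dom(\sigma)$, so $\sigma$ can be extended by freshly defined values on them. In particular, the proof does not use Claim \ref{claim:stages}, Observation \ref{observation:codeword}, or any property specific to code words beyond the syntactic membership $c(k,y,b) \in w$ versus $c(k,y,b) \in X$.
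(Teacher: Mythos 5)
Your proof is correct and matches the paper's own argument essentially verbatim: both assume a triple in the intersection, unfold $\dom(\sigma)=\{(k,y,b)\mid c(k,y,b)\in w\}$ to place the corresponding code word in $w\cap X$, and contradict $X\cap w=\emptyset$.
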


\subparagraph*{Dependency graphs.}
Whenever we encode a fact into the oracle via some code word, the correctness of the encoding depends on other words inside or outside the oracle. This is because the correctness depends on some computation and the computation can change its behavior when given different answers to its oracle queries. We capture these dependencies in a graph data structure. The full idea behind the following definition can only be understood after the next two sections, hence we discuss it there. However, since the next two sections require the definition of dependency graphs, we are already stating it. Recall that $Q(\cdot)$ denotes the set of oracle queries of a computation.
\begin{definition}[Dependency graphs]\label{def:dependency-graph}
Define the dependency graph $\mathcal{G}^w_{\sigma } \coloneqq (V,E)$ for an oracle $w$ and $\sigma \in \mathcal{S}$ as
\begin{align*}
V &\coloneqq \Sigma^*\\
E &\coloneqq \{(u,v) \mid u = c(k,y,b) \in w \cap C,\ \sigma(k,y,b) \mbox{ defined, } v \in Q(F_k^w(\sigma(k,y,b)))\}
\end{align*}
For a graph $\mathcal{G}^w_{\sigma } = (V,E)$, $q \in V$ and $Q \subseteq V$ we define the set of neighbors of $q$ (resp., of $Q$) as follows:
\begin{align*}
N_{\mathcal{G}^w_{\sigma }}(q) &\coloneqq \{v \mid (q,v) \in E\}\\
N_{\mathcal{G}^w_{\sigma }}(Q) &\coloneqq \bigcup _{q \in Q}N_{\mathcal{G}^w_\sigma}(q)
\end{align*}
\end{definition}
So the graph considers all words in $\Sigma ^*$, edges only start at code words inside the oracle and edges represent oracle queries from an associated $\FP$-computation. As we will see in the section ``Managing dependencies'', this graph contains the information on which words encoded code words depend. 

Most of the time we will consider the set $N_{\mathcal{G}^w_{\sigma }}(w)$ for an oracle $w$, which consists of all words in the graph with an indegree $\geq 1$. Usually, this can be interpreted as the set of all words all encoded code words depend on.

\subparagraph*{Requirement functions.}
During the construction of the oracle, we successively add requirements that we maintain. These are specified by a (partial) function $r \colon \N^+ \cup (\N^+ \times \N^+) \to \N$, called \emph{requirement function}. The set of all requirement functions is defined as
\[\mathcal{R} = \{r \colon \N^+ \cup (\N^+ \times \N^+) \to \N\}.\]
By Claim \ref{claim:stages2}.\ref{claim:stages2-vii}, for every $k \in \N^+$ there is some $i_0 \in \N$ such that $p_k(s_i) < s_{i+1}$ for all $i \geq i_0$. For $k \in \N^+$, let $i_0(k)$ denote the smallest such $i_0$. We call a requirement function $r \in \mathcal{R}$ \emph{valid}, if it satisfies the following requirements (let $k,j \in \N^+$):
\begin{enumerate}[R1]
\item \label{R1} If $r(k)=m$, then $m \geq s_{i+\floor{\log i}}+k+2$ for $i \in \N$ with $\floor{\log i} > 2k+2$ and $i \geq i_0(k) + \hat{m}$.

(Meaning: Words $y$ of length $\geq m$ are sufficiently long such that the prerequisites of Claim \ref{claim:smallest-cw-length-y} are satisfied, the distance between two consecutive stages $m\leq s_i < s_{i+1}$ is bigger than $p_k(s_i)$, there are more than $2k$ stages between $m \leq s_i$ and $s_{i+\floor{\log i}}$, and the Property \ref{claim:stages2}.\ref{claim:stages2-viii} holds.)

\item \label{R2} If $r(k,j) = m > 0$, then $m > r(k) + s_{i_0(k \cdot j)}$ with $k \in \dom(r)$.
\\
(Meaning: Besides R\ref{R1}, we additionally require that the distance between two consecutive stages $m \leq s_i < s_{i+1}$ is bigger than $p_k(p_j(s_i))$.)
\item\label{R3} If $r(k,j)=r(k,j')>0$, then $j=j'$.
\\
(Meaning: For each $k$, there is at most one $j$ with $r(k,j)>0$.)
\end{enumerate}
We will prove that R\ref{R1} to R\ref{R3} are satisfied for various requirement functions. To prevent confusion, we use the notation R\ref{R1}($r$), R\ref{R2}($r$), and R\ref{R3}($r$) to clearly state the referred requirement function.
\subparagraph*{Valid pairs.}
A pair ($w,\sigma $) consisting of a partial oracle $w \in \Sigma^*$ and a function $\sigma \in \mathcal{S}$ is called \emph{$r$-valid} for $r \in \mathcal{R}$, if $r$ is valid and the following requirements are satisfied (let $k,j \in \N^+$):
\begin{enumerate}[V1]
\item \label{V1} $\sigma \in \mathcal{S}^w$.

{(Meaning: $\sigma$ is a proof selector relative to $w$. In particular, code words inside $w$ encode correct information.)}

\item \label{V2} $w \subseteq C$.

{(Meaning: The oracle contains only code words.)}

\item \label{V3} $\card{(w \cap C^i_{k',y})} \leq 1$ for all $k',i \in \N^+$ and $y \in \Sigma^*$. 

{(Meaning: No fact $y \in \ran(F_{k'}^w)$ is encoded by more than one code word per stage.)}

\item \label{V4} If $c(k,y,b) \in w$ and $s_i \coloneqq |c(k,y,b)|$, then $p_k(|\sigma(k,y,b)|) < s_{i+2}$.

{(Meaning: For code words inside the oracle, the proof selector provides proofs whose lengths are bounded.)}

\item \label{V5} If $r(k,j) = 0$, then there is some $c(k,y,b)$ such that $\hhat{y}_{F_k^w}$ is defined, $F_k^w(\hhat{y}_{F_k^w})$ is definite, $c(k,y,b) \in w$, and $p_j(|c(k,y,b)|) < |\hhat{y}_{F_k^w}|$. 

{(Meaning: For every extension of $w$, $F_k$ has a shortest proof for some $y$ that is at least $p_j$ longer than an encoded code word. Since $F_k^w(\hhat{y}_{F_k^w})$ is definite, no shorter proof can emerge when $w$ is extended. This helps to diagonalize against simulation functions whose output-lengths are bounded by $p_j$.)}

\item \label{V6} If $r(k,j) = m > 0$, then for all $y \in \Sigma^{>m}$, if $\hhat{y}_{F_k^w}$ exists, $\ls{w} \geq i(k,y)$,  and $|\hhat{y}_{F_k^w}| \leq p_j(\ls{w})$, then $C_{\smash{k,y}}^{\smash{i(k,y)}} \cap w \neq \emptyset$.

{(Meaning: We require that some shortest code word for $y \in \ran(F_k^w)$ belongs to the oracle when $y$ is ``long enough'', $\hhat{y}_{F_k^w}$ is not too long, and the oracle is defined for code words for $k,y$. This helps to compute $\ran(F_k)$ efficiently relative to the final oracle, because almost all its facts will be encoded by shortest code words. Observe that $F_k^w(\hhat{y}_{F_k^w})$ is not required to be definite, but $|\hhat{y}_{F_k^w}|$ is bounded.)}

\item \label{V7} For all $i \in \N^+$ such that $s_i \leq \ls{w}$, there is no nonempty set $X$ with
\begin{enumerate}
\item\label{V7a} $X \subseteq \{c(k,y,b) \in C_{k,y}^i \mid \exists j \mbox{ with } r(k,j)=m>0,\ k \in \N^+,\ y \in \Sigma^{>m}\}$,
\item\label{V7b} $X \cap N_{\mathcal{G}_{\sigma }^{w}}(w) = \emptyset$,
\item\label{V7c} $X \cap w = \emptyset$,
\item\label{V7d} $\card{((X \cup w) \cap C_{k,y}^i) } \leq 1$ for all $k \in \N^+$ and $y \in \Sigma^*$,
\item\label{V7e} if $c(k,y,b) \in X$, then $\hhat{y}_{F_k^{w \cup X}}$ exists with $|\hhat{y}_{F_k^{w \cup X}}| \leq p_j(s_i)$ for $j$ with $r(k,j)>0$.
\end{enumerate}

{(Meaning: All stages of the oracle are maximally filled with code words having certain properties. Property V\ref{V7a} restricts $X$ to code words of the same length, V\ref{V7b} and V\ref{V7c} ensure that those code words do not interfere with encodings already inside the oracle, and V\ref{V7d} ensures V\ref{V3}. The key property is V\ref{V7e}, because it ensures that the code words from $X$ have a ``short'' proof.)}

\item \label{V8} $\ls{w} \geq \max\{r(k) \mid k \in \dom(r)\}$. 

{(Meaning: The combinatorial properties of R\ref{R1}(r) hold for $\ls{w}$.)}
\end{enumerate}
As in the previous section, we use the following notation for these requirements to unambiguously refer to the respective oracle $w$, proof selector $\sigma$, and requirement function $r$: V1($w,\sigma$), V2($w$), V3($w$), V4($w,\sigma$), V5($w,r$), V6($w,r$), V7($w,\sigma,r$), V8($w,r$). Furthermore, we use the notation V\ref{V7}($w,\sigma,r,i$) for $i \in \N^+$ to refer to the requirement V\ref{V7}($w,\sigma,r$) for a specific stage $s_i$. Note that V\ref{V7}($w,\sigma,r$) is satisfied if and only if V\ref{V7}($w,\sigma,r,i$) is satisfied for all $i \in \N^+$ such that $s_i \leq \ls{w}$.

These requirements can be divided into two groups. The requirements V\ref{V1}, V\ref{V2}, V\ref{V3}, and V\ref{V4} are restrictive requirements, because they restrict the membership of words to the oracle. The requirement V\ref{V1} additionally restricts the function $\sigma$. The requirements V\ref{V5}, V\ref{V6}, V\ref{V7} and V\ref{V8} are demanding requirements, because they demand words to belong to the oracle. In particular, V\ref{V5} is about diagonalizing against proof systems, V\ref{V6} is about encoding the range of proof systems into $\NQP$, and V\ref{V7} basically describes how we perform the oracle construction.

Furthermore, we can now grasp the intuition behind the definition of dependency graphs. In order for property V\ref{V1} to remain satisfied for an ever-growing oracle, the already selected proofs of $\sigma$ have to stay correct. Hence, we define the dependency graph such that exactly these dependencies, i.e., the oracle queries of $\sigma$-selected computations, are captured by the graph. Then we can use this graph to avoid changing the membership of these words and keep V\ref{V1} satisfied.

We call $(w,\sigma ,r) \in \Sigma^* \times \mathcal{S} \times \mathcal{R}$ a \emph{valid triple}, when $r$ is valid and ($w,\sigma $) is $r$-valid. We call a pair $(w,\sigma) \in \Sigma^* \times \mathcal{S}$ \emph{valid}, if V\ref{V1}($w,\sigma$), V\ref{V2}($w$), V\ref{V3}($w$), and V\ref{V4}($w,\sigma$) are satisfied. Intuitively, a valid pair ($w,\sigma $) may not be $r$-valid for some $r \in \mathcal{R}$, but only because it is lacking some mandatory code words. So, in a sense, the pair is ``correct'', because it satisfies all \emph{restrictive} requirements, but it may be ``incomplete'', because it does not satisfy all demanding properties. We can prove useful properties for the graphs of such pairs which help to identify candidates for mandatory code words we can ``safely'' add to $w$ in order to complete $(w,\sigma )$ step by step.

Finally, we introduce further abbreviations. For an oracle $w$, a proof selector $\sigma$ and $\circ \in \{\sqsupseteq, \sqsupsetneq, \supseteq, \supsetneq\}$, we write $(w',\sigma') \circ (w,\sigma)$ for $w' \circ w$ and $\sigma' \sqsupseteq \sigma$. 

\subparagraph{Extended sketch of the construction.} 
Having all main definitions of this section, we give an extended overview of the oracle construction. We construct an oracle $O$ such that relative to $O$, all sets in $\RE^O \setminus \NQP^O$ do not have an optimal proof system. For this, we construct an $r$-valid pair $(O,\sigma)$ such that for all $k \in \N^+$, either $r(k,j) = 0$ for all $j \in \N^+$ or $r(k,j) > 0$ for some $j \in \N^+$. Then $O$ has the desired properties. Consider any $k \in \N^+$. 
\begin{itemize}
    \item If $r(k,j) = 0$ for all $j \in \N^+$, then by requirement V\ref{V5}: For each $j \in \N^+$, there is a shortest proof $\hhat{y}_{\smash{F_k^O}}$ and a code word $c(k,y,b) \in O$ such that the proof is more than $p_j$ longer than the code word. By V\ref{V1}, we can prove that whenever $c(k,y,b) \in O$, also $y \in \ran(F_{\smash{k}}^{\smash{O}})$, i.e., the code words encode correct information. Hence, we can define a witness proof system for $\ran(F_{\smash{k}}^{\smash{O}})$ that queries the oracle for code words $c(k, \cdot, \cdot)$ and, if successful, returns the encoded proof. By this, the witness proof system has infinitely often superpolynomial shorter proofs than $F_{\smash{k}}^{\smash{O}}$. This shows that $F_{\smash{k}}^{\smash{O}}$ is not optimal.
    \item If $m \coloneqq r(k,j) > 0$ for some $j \in \N^+$, then by requirement V\ref{V6}: Whenever $F_{\smash{k}}^{\smash{O}}$ has a proof for $y \in \Sigma^{>m}$, then there is some shortest code word $c(k,y,b) \in O$. Having this, we can define a non-deterministic algorithm that on input $y$ searches for shortest code words $c(k, y, \cdot)$ inside the oracle and accepts, if the search is successful. As argued above, by V\ref{V1}, the algorithm accepts $\ran(F_{\smash{k}}^{\smash{O}})$. The runtime of this algorithm can be bounded by the length of shortest code words, which can be shown to be $h^c$ for some constant $c$. This implies $\ran(F_{\smash{k}}^{\smash{O}}) \in \NQP^O$.
\end{itemize}
It remains to construct the respective $r$-valid pair $(O,\sigma)$. Similar to the oracle in section \ref{sec:thin-oracle}, we construct $(O,\sigma)$ in steps. Each step treats a task $\tau_{k,j}$, where we either define $r(k,j) \coloneqq 0$ or $r(k,j) \coloneqq m > 0$ and extend the pair such that it remains $r$-valid. We desribe step $s$ of the oracle construction. Let $w_{s-1}$, $\sigma_{s-1}$, and $r_{s-1}$ denote the oracle, proof selector, and requirement function from step $s-1$ and let $\ls{w_{s-1}} = s_{i-1}$. We define $w_s$, $\sigma_s$ and $r_s$.
\begin{itemize}
    \item Let $r^0 \coloneqq r_{s-1} \cup \{(k,j) \mapsto 0\}$. If there is an $r^0$-valid extension $(w,\sigma)$ of $(w_{s-1},\sigma_{s-1})$, then define $r_s \coloneqq r^0$, $w_s \coloneqq w$, and $\sigma_s \coloneqq \sigma$.
        
    \item Otherwise define $r_s \coloneqq r_{s-1} \cup \{(k,j) \mapsto m\}$ for some $m \in \N^+$ with specific properties. Extend $(w_{s-1},\sigma_{s-1})$ to $(w_s,\sigma_s)$ such that the latter is $r_s$-valid. Extend $w_s$ by at least one stage.
        
\end{itemize}
In the first case, we choose an $r_s$-valid pair $(w_s,\sigma_s)$. In the second case however, we have to extend $(w_{s-1},\sigma_{s-1})$ ``manually''.  

The central idea of extending $w_{s-1}$ to $w_s$ is to satisfy V\ref{V7} for stage $s_i$, i.e., we make the stage $s_i$ as ``full as possible''. In fact, all requirements can be satisfied in this way, which we briefly explain as follows: 
    
The requirement V\ref{V1} is satisfied, because the dependencies of existing encodings are respected by V\ref{V7b}. Furthermore, by V\ref{V7e}, for all newly added words we can choose an associated proof for the proof selector. V\ref{V2} is satisfied, because of V\ref{V7a}. V\ref{V3} is satisfied, because of V\ref{V7d}. V\ref{V4} is satisfied, because by V\ref{V7e}, the newly added code words have bounded proofs. V\ref{V5} is satisfied, because it is about definite computations that are not influenced by the modifications at stage $s_i$. V\ref{V7} is satisfied for the stage $s_i$, because we define it accordingly. V\ref{V7} for the stages $< s_i$ is not influenced by the modifications at stage $s_i$, because the stages are too far apart.

In contrast, V\ref{V6} is much more difficult to satisfy. Intuitively, whenever $r(k,j) > 0$ for some $j$ and an $F_k$-proof for some word $y$ emerges, V\ref{V6} requires a shortest code word for $k$ and $y$ in the oracle. The problem is that an $F_k$-proof for $y$ can emerge when the oracle is already defined far beyond the stage containing shortest code words for $k$ and $y$. Adding such a shortest code word afterwards makes the previous construction invalid, as the added word can affect longer code words. We can show that our construction prevents this ``delayed'' emergence of $F_k$-proofs. Greatly simplified, if the shortest code words for $k$ and $y$ have length less than $s_i$ and an $F_k$-proof for $y$ emerges when we define the stage $s_i$ for $w_{s-1}$, then we obtain a contradiction as follows: We add a code word $c(k,y,b)$ at stage $s_{i-1}$ respecting the existing encodings. Either $F_k$ develops a significantly shorter proof, but then V\ref{V7} was not satisfied for $(w_{s-1},\sigma_{s-1}, r_{s-1})$, because the stage $s_{i-1}$ was not ``as full as possible'', a contradiction. Or the shortest $F_k$-proof remains ``long'', but then we could have achieved V\ref{V5} with $r(k,j)=0$ (i.e., long shortest proof with rather short encoding) instead of V\ref{V6} with $r(k,j)>0$, which the oracle construction would have preferred, a contradiction. The latter case requires that we provide a new triple that is valid and that the oracle construction prefers over some chosen triple. Hence, for this new triple, we also have to show that it satisfies V\ref{V6}. So we can show the satisfaction of V\ref{V6} for one triple by showing the satisfaction of V\ref{V6} for another triple. We formalize this relationship as a recursion and show that V\ref{V6} is satisfied in the base case, from which we obtain that $(w_s,\sigma_s,r_s)$ satisfies V\ref{V6}. This recursive relationship makes the proof significantly more technical.

\subparagraph*{Managing dependencies.}
The following claim shows that if we extend a pair $(w,\sigma)$ in a certain way, then the new set of dependencies is a superset of the previous one. This helps to argue that the dependencies for $(w,\sigma)$ are still maintained even when considering extensions of $(w,\sigma)$.
\begin{claim}\label{claim:graph-monotonicity}
Let $v,w \in \Sigma^*$ and $\sigma, \rho \in \mathcal{S}$. If $(v,\rho) \supseteq (w,\sigma)$, and $v$ and $w$ agree on $N_{\mathcal{G}^w_{\sigma }}(w)$, then $N_{\mathcal{G}^v_{\rho }}(v)  \supseteq N_{\mathcal{G}^w_{\sigma }}(w)$.
\end{claim}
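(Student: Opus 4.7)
The plan is to unpack the definitions and reduce everything to the claim that the computations $F_k^w(\sigma(k,y,b))$ and $F_k^v(\sigma(k,y,b))$ coincide.

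First I would fix an arbitrary $q \in N_{\mathcal{G}^w_\sigma}(w)$. By the definition of the edge set of $\mathcal{G}^w_\sigma$, there exists $u = c(k,y,b) \in w \cap C$ such that $\sigma(k,y,b)$ is defined and $q \in Q(F_k^w(\sigma(k,y,b)))$. My goal is to exhibit an edge in $\mathcal{G}^v_\rho$ from some vertex in $v$ to $q$. Since $w \subseteq v$, the code word $u$ still lies in $v \cap C$, and since $\rho \sqsupseteq \sigma$, the value $\rho(k,y,b) = \sigma(k,y,b)$ is defined as well. What remains is therefore to verify $q \in Q(F_k^v(\rho(k,y,b)))$.

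The main step is to prove $Q(F_k^w(\sigma(k,y,b))) = Q(F_k^v(\sigma(k,y,b)))$. Write $x \coloneqq \sigma(k,y,b)$. By the definition of $N_{\mathcal{G}^w_\sigma}$, every element of $Q(F_k^w(x))$ is a neighbor of $u \in w$, so $Q(F_k^w(x)) \subseteq N_{\mathcal{G}^w_\sigma}(w)$. The hypothesis that $v$ and $w$ agree on $N_{\mathcal{G}^w_\sigma}(w)$ then implies that they agree on $Q(F_k^w(x))$. Consequently, any $c \in v \setminus w$ (viewed as sets of positions) must satisfy $c \notin Q(F_k^w(x))$, since otherwise we would have $c \in v$ but $c \notin w$ on an element of $Q(F_k^w(x))$, contradicting agreement. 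Enumerating the finitely many elements of $v \setminus w$ as $c_1, \ldots, c_n$ and applying Observation \ref{obs:oracle-questions} iteratively, each single-element extension preserves the computation and hence the query set. Therefore $F_k^v(x) = F_k^w(x)$ with identical query sets, so $q \in Q(F_k^v(x)) = Q(F_k^v(\rho(k,y,b)))$, which together with $u \in v \cap C$ yields $q \in N_{\mathcal{G}^v_\rho}(v)$.

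The only subtlety is the iterated application of Observation \ref{obs:oracle-questions}: at each step one must verify that the next element $c_{i+1}$ added to the current partial oracle still lies outside the current query set. This is immediate, because each application of the observation keeps the query set equal to $Q(F_k^w(x))$, and we already know that every element of $v \setminus w$ is disjoint from that set. No further case distinction is needed.
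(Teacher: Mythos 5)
Your proof is correct and follows essentially the same route as the paper: trace an element of $N_{\mathcal{G}^w_\sigma}(w)$ back to its witnessing code word $c(k,y,b)\in w$, observe that $Q(F_k^w(\sigma(k,y,b)))\subseteq N_{\mathcal{G}^w_\sigma}(w)$ so that the agreement hypothesis forces $Q(F_k^w(\sigma(k,y,b)))=Q(F_k^v(\sigma(k,y,b)))$, and conclude via $c(k,y,b)\in v$ and $\rho(k,y,b)=\sigma(k,y,b)$. The only difference is that you make the ``queries answered identically, hence identical computation'' step explicit by enumerating $v\setminus w$ and applying Observation~\ref{obs:oracle-questions} iteratively, whereas the paper treats this as immediate; both are fine.
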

\begin{claimproof}
Let $u \in N_{\mathcal{G}^w_\sigma}(w)$ be arbitrary. Then there is $c(k,y,b) \in w$ with $(k,y,b) \in \dom(\sigma)$, and $u \in Q(F_k^w(\sigma(k,y,b))) \subseteq N_{\mathcal{G}^w_\sigma}(w)$. Since $v$ and $w$ agree on $N_{\mathcal{G}^w_\sigma}(w)$, it follows that $Q(F_k^w(\sigma(k,y,b))) = Q(F_k^v(\sigma(k,y,b)))$, because all oracle queries are answered in the same way relative to $w$ and $v$. Observe that $c(k,y,b) \in v$ as $v \supseteq w$ and $(k,y,b) \in \dom (\rho)$ with $\rho(k,y,b)=\sigma(k,y,b)$ as $\rho \sqsupseteq \sigma$. This implies 
\[u \in Q(F_k^v(\sigma(k,y,b))) = Q(F_k^v(\rho(k,y,b))) \subseteq N_{\mathcal{G}^v_\rho}(v).\] 
Hence, $N_{\mathcal{G}^v_{\rho }}(v)  \supseteq N_{\mathcal{G}^w_{\sigma }}(w)$.
\end{claimproof}
The next lemma shows that valid pairs do not contain too many dependencies. It is possible to find many ``independent'' spots for code words of encodable facts, even approximately $\log$ stages below the largest defined stage. Recall that the constant $\hat{m}$ is defined after Claim \ref{claim:stages2}.\ref{claim:stages2-viii}.
\begin{lemma}\label{lemma:graph-structure}
Let $k \in \N^+$, $y \in \Sigma^*$, $r \in \mathcal{R}$, $(w,\sigma)$ be a valid pair with V\ref{V8}($w,r$) being satisfied, $\kappa \in \N^+$ with $\kappa \coloneqq \max\{k \mid k \in \dom(r)\}$ and $\ls{w} \leq s_\ell$ for some $\ell \in \N^+$.
\[\text{For }j > \ell - 2\kappa\text{, if }C_{k,y}^j \not = \emptyset\text{, then }\card{(C_{k,y}^j \setminus N_{\mathcal{G}^w_{\sigma }}(w))} \geq s_{\ell+2}.\]
\end{lemma}
\begin{proof}
We first show a slightly different version of the lemma, which is more compatible with the estimates from Claim \ref{claim:stages2}. 
\begin{claim}\label{claim:graph-structure}
Let $k \in \N^+$, $y \in \Sigma^*$, $r \in \mathcal{R}$, $(w,\sigma)$ be a valid pair, and $i \in \N_{\hat{m}}$ with $\ls{w} \leq s_{i+\floor{\log i}}$.
\begin{romanenumerate}
\item\label{claim:graph-structure-i} $\card{N_{\mathcal{G}^w_{\sigma }}(w)} < 2^{t_{i+1+\floor{\log (i+1)}}} - s_{i+4+\floor{\log i}}$.

(Meaning: Bounds the number of all dependencies in $w$.)
\item\label{claim:graph-structure-ii} For $j > i$, if $C_{k,y}^j \not = \emptyset$, then $\card{(C_{k,y}^j \setminus N_{\mathcal{G}^w_{\sigma }}(w))} \geq s_{i+4+\floor{\log i}}$.

(Meaning: The number of code words of length $s_j$ (i.e., up to approximately $\log i$ stages below $s_{i+\floor{\log i}}$) for an encodable fact is far greater than the number of all dependencies in $w$.)
\end{romanenumerate}
\end{claim}
\begin{claimproof}
To (\ref{claim:graph-structure-i}): We show an upper bound for $\card{N_{\mathcal{G}^w_{\sigma }}(w)}$ by multiplying the maximum number of code words that $w$ contains with the maximum number of neighbors each code word can have in $\mathcal{G}^w_{\sigma }$. By Claim \ref{claim:codewordDependency}.\ref{claim:codewordDependency-i}, we can encode for at most $2^{t_{i+\floor{\log i}}+1}$ different pairs $(k,y)$ at stage $s_{i+\floor{\log i}}$. By V\ref{V3}($w$), $w$ contains at most one code word of length $s_{i+\floor{\log i}}$ for each such pair. These bounds also hold for all stages $< s_{i+\floor{\log i}}$. Hence, the number of code words inside $w$ is at most 
\[(i+\floor{\log i}) \cdot 2^{t_{i+\floor{\log i}}+1}.\]
The number of outgoing edges of a code word $c(k,y,b) \in w$ is bounded by the computation length of $F_k^w(\sigma(k,y,b))$, which is $p_k(|\sigma (k,y,b)|)$. Note that by V\ref{V1}($w,\sigma$), $\sigma(k,y,b)$ is defined. Since $w$ contains only code words of length $\leq s_{i+\floor{\log i}}$ and V\ref{V4}($w,\sigma$) holds, the number of outgoing edges of a code word $c(k,y,b)$ is bounded by $p_k(|\sigma (k,y,b)|) < s_{i+\floor{\log i} + 2}$. Thus, the number of nodes with indegree $1$ in $\mathcal{G}^w_{\sigma}$ is
\[\card{N_{\mathcal{G}^w_{\sigma }}(w)} \leq 2^{t_{i + \floor{\log i}}+1} \cdot s_{i+\floor{\log i} + 2} \cdot (i + \floor{\log i}).\]
By $i \in \N_{\hat{m}}$ and Claim \ref{claim:stages2}.\ref{claim:stages2-viii}, we obtain $\card{N_{\mathcal{G}^w_{\sigma }}(w)} < 2^{t_{i+1+\floor{\log(i+1)}}} - s_{i+4+\floor{\log i}}$.
\medskip
\\
To (\ref{claim:graph-structure-ii}): For $j > i$ and $C_{k,y}^j \not = \emptyset$, we obtain $\card{C_{k,y}^j} = 2^{t_{j+\floor{\log j}}} \geq 2^{t_{i+1+\floor{\log(i+1)}}}$ by Claim \ref{claim:codewordDependency}.\ref{claim:codewordDependency-ii}. Using (\ref{claim:graph-structure-i}), we obtain
\[\card{(C_{k,y}^j \setminus N_{\mathcal{G}^w_{\sigma }}(w))} \geq 2^{t_{i+1+\floor{\log(i+1)}}} - (2^{t_{i+1+\floor{\log(i+1)}}} - s_{i+4+\floor{\log i}}) = s_{i+4+\floor{\log i}}.\tag*{\claimqedhere}\]
\end{claimproof}
Now, we begin to prove the lemma. Choose $i \in \N_2$ such that 
\begin{equation}\label{eqn:graph-structure-1}
i-1 + \floor{\log (i-1)} \leq \ell < i + \floor{\log i}
\end{equation}
and hence
\begin{equation}\label{eqn:graph-structure-2}
i-2+\floor{\log i} \leq i - 1 + \floor{\log  (i-1)} \leq \ell.
\end{equation}
By V\ref{V8}($w,r$) it holds that $\ls{w} \geq \max\{r(k) \mid k \in \dom(r)\}$, implying $\ls{w} \geq r(\kappa)$. Since $\kappa \in \dom(r)$ exists, we have 
\[s_{i + \floor{\log i}} \overset{\text{(\ref{eqn:graph-structure-1})}}{>} s_\ell \geq \ls{w} \geq r(\kappa) \overset{\text{R\ref{R1}($r$)}}{\geq} s_{i' + \floor{\log i'}} + \kappa + 2 \geq s_{i' + \floor{\log i'}}\]
for some $i' \geq \hat{m}$ with $\floor{\log i'} > 2\kappa + 2$. Hence, $i > i' \geq \hat{m}$ and thus also $\floor{\log i} > 2\kappa + 2$. By Claim \ref{claim:graph-structure}.\ref{claim:graph-structure-ii} invoked with $r,w,\sigma,i$, it holds that for
\[j > \ell - 2\kappa \overset{\text{(\ref{eqn:graph-structure-2})}}{\geq} i - 2 + \floor{\log i} - 2\kappa \geq i - 2 + \floor{\log i} - (\floor{\log i} - 2) = i,\]
if $C_{k,y}^j \not = \emptyset$, then $\card{(C_{k,y}^j \setminus N_{\mathcal{G}^w_{\sigma }}(w))} \geq s_{i+4+\floor{\log i}} \overset{\text{(\ref{eqn:graph-structure-1})}}{\geq} s_{\ell+2}$.
\end{proof}
\subparagraph*{Valid extensions.}
We collect several arguments concerning the task to maintain the validity of a given pair. The first claim shows that $r$-validity is preserved for $r' \sqsubseteq r$.
\begin{claim}\label{claim:hereditary-r}
Let $X \in \{1,\dots,8\}$. If VX is satisfied relative to $(w,\sigma,r)$, then also relative to $(w,\sigma,r')$ for $r' \sqsubseteq r$.
\end{claim}
\begin{claimproof}
The requirements V\ref{V1}, V\ref{V2}, V\ref{V3}, and V\ref{V4} do not depend on a requirement function. For V\ref{V5} and V\ref{V6}, for each $(k,j) \in \dom(r)$ the required properties are independent from other values from $r$. For V\ref{V8} it holds that $\ls{w} \geq \max\{r(k) \mid k \in \dom(r)\} \geq \max\{r'(k) \mid k \in \dom(r')\}$. 

For V\ref{V7}, the requirement function appears in V\ref{V7a} and V\ref{V7e}. In V\ref{V7a}, every permitted choice of $X$ relative to $r'$ is also permitted relative to $r$. Similarly, any code word from $X$ satisfying V\ref{V7e} relative to $r'$ also satisfies V\ref{V7e} relative to $r$. Hence, if there is an $X$ satisfying V\ref{V7a}($w,\sigma,r',i)$ to V\ref{V7e}($w,\sigma,r',i$) for some $i \in \N^+$, $X$ also satisfies V\ref{V7a}($w,\sigma,r,i)$ to V\ref{V7e}($w,\sigma,r,i$). Thus, when V\ref{V7}($w,\sigma,r$) is satisfied, also V\ref{V7}($w,\sigma,r'$) is satisfied.
\end{claimproof}
The next claim shows that most requirements are ``local''. If they are satisfied for a pair $(w,\sigma)$ or some $r \in \mathcal{R}$, then they are also satisfied for some kind of extension of them.
\begin{claim}\label{claim:requirement-monotonicity}
Let $r \in \mathcal{R}$ be valid, $w \in \Sigma^*$ be a partial oracle, $\sigma \in \mathcal{S}^w$.
\begin{romanenumerate}
\item\label{claim:requirement-monotonicity-i} For $r' \sqsupseteq r$, R\ref{R1}($r'$) and R\ref{R2}($r'$) are not violated for any element from $\dom(r)$.
\item\label{claim:requirement-monotonicity-ii} If V\ref{V2}($w$) is satisfied, then for $(w',\sigma') \supseteq (w,\sigma)$, V\ref{V2}($w'$) is not violated for any $x \in w$.
\item\label{claim:requirement-monotonicity-iii} If V\ref{V3}($w$) is satisfied, then for $(w',\sigma') \sqsupseteq (w,\sigma)$, no word from $\Sigma^{\leq \ls{w}}$ is part of a violation of V\ref{V3}($w'$). 
\item\label{claim:requirement-monotonicity-iv} If V\ref{V4}($w,\sigma$) is satisfied, then for $(w',\sigma') \supseteq (w,\sigma)$, V\ref{V4}($w',\sigma'$) is not violated for any $x \in w$.
\item\label{claim:requirement-monotonicity-v} If V\ref{V5}($w,r$) is satisfied, then for $w' \sqsupseteq w$, V\ref{V5}($w',r$) is satisfied.
\item\label{claim:requirement-monotonicity-vi}If V\ref{V7}($w,\sigma,r,i$) is satisfied, then for $(w',\sigma') \supseteq (w,\sigma)$ with $w'^{<s_{i+1}} = w^{<s_{i+1}}$ and $N_{\mathcal{G}^{\smash{w'}}_{\smash{\sigma'}}}(w') \supseteq N_{\mathcal{G}^w_\sigma}(w)$, V\ref{V7}($w',\sigma',r,i$) is satisfied.
\item\label{claim:requirement-monotonicity-vii} If V\ref{V8}($w,r$) is satisfied, then for $w' \sqsupseteq w$, V\ref{V8}($w',r$) is satisfied.
\end{romanenumerate} 
\end{claim}
\begin{claimproof}
To (\ref{claim:requirement-monotonicity-i}): Observe that once the requirements R\ref{R1} and R\ref{R2} are satisfied for $r$ and an element from $\dom(r)$, they remain satisfied relative to any extension of $r$, because the respective inequalities still hold.
\medskip
\\
To (\ref{claim:requirement-monotonicity-ii}): If V\ref{V2}($w'$) is violated for some $x \in w$, then $x \notin C$ and V\ref{V2}($w$) would also be violated for this $x$.
\medskip
\\
To (\ref{claim:requirement-monotonicity-iii}): V\ref{V3}($w$) requires the absence of specific words in $w$. Since $w'$ extends $w$ only by words $> \ls{w}$, words from $\Sigma^{\leq \ls{w}}$ can not be part of a violation of V\ref{V3}($w'$).
\medskip
\\
To (\ref{claim:requirement-monotonicity-iv}):  If V\ref{V4}($w',\sigma'$) is violated for some $c(k,y,b) \in w$ with $s_i \coloneqq |c(k,y,b)|$, then by $\sigma \in \mathcal{S}^w$ and $\sigma \sqsubseteq \sigma'$, $p_k(|\sigma(k,y,b)|) = p_k(|\sigma'(k,y,b)|) \geq s_{i+2}$. Hence, V\ref{V4}($w,\sigma$) would also be violated for this $c(k,y,b)$. 
\medskip
\\
To (\ref{claim:requirement-monotonicity-v}): In general, V\ref{V5} is about definite computations and the existence of some word in the oracle. We show that both properties are not changed when extending $w$ to $w'$. Observe that for $c(k,y,b)$ coming from V\ref{V5}($w,r$), $F_k^w(\hhat{y}_{F_k^w})$ is definite. Hence, for all $x < \hhat{y}_{F_k^w}$, $F_k^w(x)$ is also definite. Thus, these computations do not change relative to $w' \sqsupseteq w$. From this follows $\hhat{y}_{F_k^{w'}} = \hhat{y}_{F_k^w}$ and $F_k^{w'}(\hhat{y}_{F_k^{w'}})$ is also definite. By V\ref{V5}($w,r$), $c(k,y,b) \in w \subseteq w'$ and $p_j(|c(k,y,b)|) < |\hhat{y}_{F_k^{w}}| = |\hhat{y}_{F_k^{w'}}|$ hold. Hence, V\ref{V5}($w',r$) is satisfied.
\medskip
\\
To (\ref{claim:requirement-monotonicity-vi}): Assume that V\ref{V7}($w',\sigma',r,i$) is violated. Then there is some nonempty set $X$ containing words of length $s_i$ and satisfying the properties V\ref{V7a}($w',\sigma',r,i$) to V\ref{V7e}($w',\sigma',r,i$). We show that in this case V\ref{V7}($w,\sigma,r,i$) is also violated witnessed by the set $X$: 
\begin{itemize}
\item Observe that $X$ satisfies V\ref{V7a}($w,\sigma,r,i$), because it is independent of $(w,\sigma)$.
\item $X$ satisfies V\ref{V7c}($w,\sigma,r,i$) and V\ref{V7d}($w,\sigma,r,i$), because $w^{\leq s_i} = w'^{\leq s_i}$.
\item By $N_{\mathcal{G}^w_\sigma} \subseteq N_{\smash{\mathcal{G}^{w'}_{\smash{\sigma'}}}}$ and $X \cap N_{\smash{\mathcal{G}^{w'}_{\smash{\sigma'}}}}(w') = \emptyset$, V\ref{V7b}($w,\sigma,r,i$) is satisfied.
\item Let $c(k,y,b) \in X$. Then by V\ref{V7a}($w,\sigma,r,i$), there is some $j$ with $r(k,j) > 0$ and $|c(k,y,b)| > r(k,j)$. By R\ref{R2}($r$) and subsequently Claim \ref{claim:stages2}.\ref{claim:stages2-vii}, if $c(k,y,b) \in X$, then $p_k(p_j(s_i)) < s_{i+1}$. By assumption, $X$ satisfies V\ref{V7e}($w',\sigma',r,i$). So for $c(k,y,b) \in X$, $\hhat{y}_{\smash{F_k^{w' \cup X}}}$ exists with $|\hhat{y}_{\smash{F_k^{w' \cup X}}}| \leq p_j(s_i) < p_k(p_j(s_i) < s_{i+1}$ for above $j$. Since $w^{\smash{<s_{i+1}}} = w'^{\smash{<s_{i+1}}}$, we obtain $\hhat{y}_{\smash{F_k^{w \cup X}}} = \hhat{y}_{\smash{F_k^{w' \cup X}}}$, because $F_k^{\smash{w' \cup X}}(\hhat{y}_{\smash{F_k^{w' \cup X}}})$ can only ask queries of length $p_k(p_j(s_i)) < s_{i+1}$, which are answered the same relative to $w \cup X$. Observe that $\hhat{y}_{\smash{F_k^{w \cup X}}}$ is indeed equal to $\hhat{y}_{\smash{F_k^{w' \cup X}}}$, because for inputs $x < \hhat{y}_{\smash{F_k^{w' \cup X}}}$, $F_k^{\smash{w \cup X}}(x) = F_k^{\smash{w' \cup X}}(x)$ by the same argument that the maximum query length is less than $s_{i+1}$. Hence, $\hhat{y}_{F_k^{w \cup X}}$ exists with $|\hhat{y}_{F_k^{w \cup X}}| \leq p_j(s_i)$. So V\ref{V7e}($w,\sigma,r,i$) is satisfied.
\end{itemize}
Thus, all properties V\ref{V7a}($w,\sigma,r,i$) to V\ref{V7e}($w,\sigma,r,i$) are satisfied for $X$, showing that V\ref{V7}($w,\sigma,r,i$) is violated, a contradiction. 
\medskip
\\
To (\ref{claim:requirement-monotonicity-vii}): Since by V\ref{V8}($w$), $\ls{w'} \geq \ls{w} \geq \max\{r(k) \mid k \in \dom(r)\}$, V\ref{V8}($w'$) is also satisfied.
\end{claimproof}
The requirements R\ref{R3}, V\ref{V1} and V\ref{V6} are not local. For R\ref{R3}($r$), we must make sure that we never extend $r$   via $(k,j') \mapsto a \in \N^+$ when $r(k,j) > 0$ with $j \neq j'$ already holds. For V\ref{V1}($w,\sigma$), whenever we add words to $w$, we must make sure that $\sigma$ remains a proof selector relative to $w$, which depends on many different computations. The dependency graph $\mathcal{G}_{\sigma}^w$ captures these dependencies, thus helping to extend $w$ only with suitable words. 

In contrast, V\ref{V6}($w,r$) is much more difficult to keep satisfied. Intuitively, whenever an $F_k$-proof for some word $y$ emerges, V\ref{V6} requires a shortest code word for $k$ and $y$ in the oracle. The problem is that an $F_k$-proof for $y$ can emerge when the oracle is already defined far beyond the stage containing shortest code words for $k$ and $y$. Adding such a shortest code word afterwards would ruin an arbitrary large part of the already defined oracle, because many longer code words may depend on such a short word. This prevents us from making progress in the oracle construction.\footnote{Usually, these setbacks are only problematic if they set us to the same stage infinitely often. Here, one could argue that on every stage we can add only finitely many words and every setback is repaired by adding an additional word. Hence, only finitely many setbacks per stage should be possible. However, V\ref{V4} allows the distance between a code word and its associated proof to be more than one stage (allowing this distance is necessary for V\ref{V5}). This means that a setback to stage $s_i$ can invalidate code words of length $s_i$, namely those whose associated proofs have length $\approx s_{i+1}$ and thus can recognize the setback. Hence, stage $s_i$ is also ruined, so in fact a setback to stage $s_{i-1}$ is necessary. The same argument leads to a setback to stage $s_{i-2}$ and so on. In this way, setbacks can infinitely often ruin the whole oracle.}
As we will show later in a long proof, we can fully prevent this ``delayed'' emergence of $F_k$-proofs by achieving V\ref{V5} when possible and keeping V\ref{V7} satisfied. Greatly simplified, if an $F_k$-proof for $y$ emerges ``late'', we can add a code word $c(k,y,b)$ one stage below this proof. Either the $F_k$-proof remains ``long'' and we achieve V\ref{V5}. Or $F_k$ develops a significantly shorter proof, but then V\ref{V7} could not have been satisfied before, witnessed by $c(k,y,b)$.

The following lemma shows that there is a generic way to extend some valid pair ($w,\sigma $) such that it remains valid.
\begin{lemma}\label{lemma:remain-valid}
Let $(w,\sigma)$ be a valid pair and $X \subseteq C$ be a set of code words. Define $w' \coloneqq w \cup X$ and $\sigma ' \coloneqq \sigma \cup \{(k,y,b) \mapsto \hhat{y}_{F_k^{w'}} \mid c(k,y,b) \in X\}$. If the following statements hold for all $c(k,y,b) \in X$, then $(w',\sigma')$ is valid. 
\begin{romanenumerate}
\item\label{lemma:remain-valid-i} $c(k,y,b) \notin w$.
\item\label{lemma:remain-valid-ii} $c(k,y,b') \notin w'$ for all $b' \in \Sigma^{|b|} \setminus \{b\}$.
\item\label{lemma:remain-valid-iii} $c(k,y,b) \notin N_{\mathcal{G}^w_{\sigma }}(w)$.
\item\label{lemma:remain-valid-iv} $\hhat{y}_{F_k^{w'}}$ exists and $p_k(|\hhat{y}_{F_k^{w'}}|) < s_{i+2}$ where $s_i \coloneqq |c(k,y,b)|$.
\end{romanenumerate}
\end{lemma}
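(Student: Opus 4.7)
The plan is to check each of V\ref{V1}($w',\sigma'$), V\ref{V2}($w'$), V\ref{V3}($w'$), and V\ref{V4}($w',\sigma'$) separately, distinguishing in each case between the ``old'' code words of $w$ (handled by the validity of $(w,\sigma)$, often via Claim~\ref{claim:requirement-monotonicity}) and the ``new'' code words contributed by $X$ (handled by one of the hypotheses (i)--(iv)). Hypothesis (i) guarantees that the two parts of the definition of $\sigma'$ do not clash, so $\sigma'$ is a well-defined function with $\dom(\sigma') = \{(k,y,b) \mid c(k,y,b) \in w'\}$, since $\dom(\sigma)$ and $\{(k,y,b)\mid c(k,y,b)\in X\}$ are disjoint.

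Three of the four requirements fall out quickly. V\ref{V2}($w'$) holds because $w$ contains only code words and markers by V\ref{V2}($w$) and $X \subseteq C$. For V\ref{V3}($w'$), fix $(k,i,y)$: if $X \cap C^i_{k,y} = \emptyset$, then $w' \cap C^i_{k,y} = w \cap C^i_{k,y}$ and V\ref{V3}($w$) applies; otherwise any $c(k,y,b) \in X \cap C^i_{k,y}$ together with hypothesis (ii) forces $w' \cap C^i_{k,y} = \{c(k,y,b)\}$ (this also prevents $X$ from containing two code words with the same $(k,y,i)$). For V\ref{V4}($w',\sigma'$), for $c(k,y,b) \in w$ we have $\sigma'(k,y,b) = \sigma(k,y,b)$, so the bound is inherited from V\ref{V4}($w,\sigma$); for $c(k,y,b) \in X$ we have $\sigma'(k,y,b) = \hat{y}_{F_k^{w'}}$, and hypothesis (iv) directly provides both the existence and the required bound $p_k(|\hat{y}_{F_k^{w'}}|) < s_{i+2}$.

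The main obstacle is V\ref{V1}($w',\sigma'$): verifying that $F_k^{w'}(\sigma'(k,y,b)) = y$ for every $(k,y,b) \in \dom(\sigma')$. For the new triples with $c(k,y,b) \in X$ this holds by the very definition of $\hat{y}_{F_k^{w'}}$, whose existence is supplied by (iv). The delicate part is the old triples $(k,y,b) \in \dom(\sigma)$, where I must argue that the computation $F_k^w(\sigma(k,y,b))$ is not disturbed by enlarging the oracle to $w'$. By the definition of the dependency graph one has $Q(F_k^w(\sigma(k,y,b))) \subseteq N_{\mathcal{G}^w_\sigma}(w)$, and hypothesis (iii), applied to every element of $X$, yields $X \cap N_{\mathcal{G}^w_\sigma}(w) = \emptyset$, so none of the newly added oracle words is queried by this computation. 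Iterating Observation~\ref{obs:oracle-questions} over the elements of $X = w' \setminus w$ (using hypothesis (i) for disjointness) then gives $F_k^{w'}(\sigma(k,y,b)) = F_k^w(\sigma(k,y,b)) = y$, closing V\ref{V1} and hence the lemma.
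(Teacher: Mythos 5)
Your proof is correct and follows essentially the same route as the paper: well-definedness of $\sigma'$ via hypothesis (i), trivial checks for V\ref{V2}--V\ref{V4} by splitting into old words of $w$ and new words of $X$, and the key step for V\ref{V1} using $Q(F_k^w(\sigma(k,y,b))) \subseteq N_{\mathcal{G}^w_\sigma}(w)$ together with hypothesis (iii) and Observation~\ref{obs:oracle-questions} to keep the old computations undisturbed. The only cosmetic difference is that you argue directly rather than citing Observation~\ref{obs:well-defined} and Claim~\ref{claim:requirement-monotonicity}.\ref{claim:requirement-monotonicity-ii}, which the paper invokes for these same facts.
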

\begin{proof}
To V\ref{V1}($w',\sigma'$): We have to show that (1) $\sigma'$ is well-defined, (2) $\dom(\sigma') = \{(k,y,b) \mid c(k,y,b) \in w'\}$, and (3) $F_k^{w'}(\sigma'(k,y,b)) = y$ for $(k,y,b) \in \dom(\sigma')$.

By V\ref{V1}($w,\sigma$), $\sigma \in \mathcal{S}^w$ and by (\ref{lemma:remain-valid-i}), $X \cap w = \emptyset$. Then Observation \ref{obs:well-defined} gives that $\dom(\sigma) \cap \{(k,y,b) \mid c(k,y,b) \in X\} = \emptyset$. By (\ref{lemma:remain-valid-iv}), for $c(k,y,b) \in X$, $\hhat{y}_{F_k^{w'}}$ exists. Hence, $\sigma'$ is well-defined. For the domain of $\sigma'$ it holds that
\begin{align*}
\dom(\sigma') &= \dom(\sigma) \cup \{(k,y,b) \mid c(k,y,b) \in X\}\\
 &= \{(k,y,b) \mid c(k,y,b) \in w\} \cup \{(k,y,b) \mid c(k,y,b) \in X\}\\
 &= \{(k,y,b) \mid c(k,y,b) \in w \cup X\} = \{(k,y,b) \mid c(k,y,b) \in w'\}.
\end{align*}
If $c(k,y,b) \in X$, then $F_k^{w'}(\sigma '(k,y,b)) = F_k^{w'}(\hhat{y}_{F_k^{w'}}) = y$ by the definition of $\sigma'$. If $c(k,y,b) \in w$, then $\sigma'(k,y,b) = \sigma (k,y,b)$. Together with $\sigma \in \mathcal{S}^w$, we have 
\[F_k^w(\sigma'(k,y,b)) = F_k^w(\sigma (k,y,b))=y.\]
Furthermore, $Q(F_k^w(\sigma(k,y,b))) \subseteq N_{\mathcal{G}^w_\sigma}(w)$ by the definition of the edge set of $\mathcal{G}^w_\sigma$ (Definition \ref{def:dependency-graph}). Since $X \cap N_{\mathcal{G}^w_\sigma}(w) = \emptyset$ (requirement (\ref{lemma:remain-valid-iii})), also 
\[X \cap Q(F_k^w(\sigma(k,y,b))) = X \cap Q(F_k^w(\sigma' (k,y,b))) = \emptyset.\]
By Observation \ref{obs:oracle-questions} it follows that $F_k^{w'}(\sigma'(k,y,b)) = F_k^w(\sigma '(k,y,b)) = y$.
\medskip
\\
to V\ref{V2}($w'$): V\ref{V2}($w'$) is not violated, because all words in $X$ are code words and for all words in $w$ Claim \ref{claim:requirement-monotonicity}.\ref{claim:requirement-monotonicity-ii} holds.
\medskip
\\
To V\ref{V3}($w'$): V\ref{V3}($w'$) is not violated, because the words in $X$ can not be part of a violation by requirement (\ref{lemma:remain-valid-ii}) and any violation of V\ref{V3}($w'$) must involve a word from $X$, because V\ref{V3}($w$) holds. 
\medskip
\\
To V\ref{V4}($w',\sigma'$): V\ref{V4}($w'$) is not violated, because Claim \ref{claim:requirement-monotonicity}.\ref{claim:requirement-monotonicity-iv} rules out a violation involving any $x \in w$ and for $c(k,y,b) \in X$, it holds $p_k(|\sigma' (k,y,b)|) = p_k(|\hhat{y}_{F_k^{w'}}|) < s_{i+2}$ by requirement (\ref{lemma:remain-valid-iv}) and the definition of $\sigma'$.
\end{proof}
We conclude this section by showing that if V\ref{V7} is violated, we can satisfy V\ref{V7} by extending the oracle via the largest set $X$ witnessing the violation.
\begin{lemma}\label{lemma:V8-extension}
Let $r$ and $(w,\sigma)$ be valid and $i \in \N^+$. Let $Y$ be a set of greatest cardinality satisfying all properties of V\ref{V7}($w,\sigma,r,i$). Then V\ref{V7}($w',\sigma',r,i$) is satisfied where $w' \coloneqq w \cup Y$ and $\sigma' \coloneqq \sigma \cup \{(k,y,b) \mapsto \hhat{y}_{F_k^{w'}} \mid c(k,y,b) \in Y\}$.
\end{lemma}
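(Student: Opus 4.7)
The plan is to argue by contradiction. Suppose V\ref{V7}($w',\sigma',r,i$) is violated, so there is a nonempty $X$ satisfying V\ref{V7a}--V\ref{V7e} with respect to $(w',\sigma',r,i)$. I will show that $Y \cup X$ then satisfies V\ref{V7a}--V\ref{V7e} with respect to $(w,\sigma,r,i)$, contradicting the maximality of $Y$. Note first that $X \cap Y = \emptyset$ since $X \cap w' = \emptyset$ and $Y \subseteq w'$, so $|Y \cup X| > |Y|$ whenever $X$ is nonempty. (If $Y = \emptyset$, then V\ref{V7}($w,\sigma,r,i$) was already satisfied and the supposed $X$ directly contradicts maximality; so from now on the argument covers both situations uniformly.)

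V\ref{V7a} is immediate, because the defining set only depends on $r$ and the code word structure, not on $(w,\sigma)$. Properties V\ref{V7c} and V\ref{V7d} follow by combining the respective properties for $Y$ (w.r.t.\ $w$) and $X$ (w.r.t.\ $w' = w \cup Y$), since $w' = w \cup Y$. For V\ref{V7b} I apply Claim \ref{claim:graph-monotonicity} to $(w',\sigma') \supseteq (w,\sigma)$: its hypothesis that $w$ and $w'$ agree on $N_{\mathcal{G}^w_\sigma}(w)$ holds because $w' \setminus w \subseteq Y$ and V\ref{V7b} for $Y$ gives $Y \cap N_{\mathcal{G}^w_\sigma}(w) = \emptyset$. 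Hence $N_{\mathcal{G}^{w'}_{\sigma'}}(w') \supseteq N_{\mathcal{G}^w_\sigma}(w)$, which together with $X \cap N_{\mathcal{G}^{w'}_{\sigma'}}(w') = \emptyset$ (V\ref{V7b} for $X$) and $Y \cap N_{\mathcal{G}^w_\sigma}(w) = \emptyset$ yields $(Y \cup X) \cap N_{\mathcal{G}^w_\sigma}(w) = \emptyset$.

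The main obstacle is V\ref{V7e}, where one must verify that proofs witnessed for $Y$ under the smaller oracle $w'$ remain valid under the larger oracle $w' \cup X$. For $c(k,y,b) \in X$ this is immediate from V\ref{V7e} applied to $X$, since $w' \cup X = w \cup (Y \cup X)$. For $c(k,y,b) \in Y$, V\ref{V7e} for $Y$ supplies $\hat{y} := \hat{y}_{F_k^{w'}}$ with $|\hat{y}| \leq p_j(s_i)$ (for the $j$ determined by $r$). By the definition of $\sigma'$ we have $\sigma'(k,y,b) = \hat{y}$, so by Definition \ref{def:dependency-graph} every query of $F_k^{w'}(\hat{y})$ lies in $N_{\mathcal{G}^{w'}_{\sigma'}}(w')$. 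Since $X \cap N_{\mathcal{G}^{w'}_{\sigma'}}(w') = \emptyset$ by V\ref{V7b} for $X$, no query of $F_k^{w'}(\hat{y})$ belongs to $X$; by iterated application of Observation \ref{obs:oracle-questions} one obtains $F_k^{w' \cup X}(\hat{y}) = F_k^{w'}(\hat{y}) = y$. Therefore $\hat{y}_{F_k^{w'\cup X}}$ is defined and, by the quasi-lexicographic identification of $\Sigma^*$ with $\N$, satisfies $|\hat{y}_{F_k^{w'\cup X}}| \leq |\hat{y}| \leq p_j(s_i)$. Thus $Y \cup X$ meets every clause of V\ref{V7} at stage $s_i$ relative to $(w,\sigma,r)$, contradicting the maximality of $Y$.
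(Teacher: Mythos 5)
Your proof is correct and matches the paper's argument essentially step for step: the contradiction setup, the observation that $X$ and $Y$ are disjoint, the treatment of V\ref{V7a}--V\ref{V7d} by combination, the use of Claim~\ref{claim:graph-monotonicity} for V\ref{V7b}, and the key argument for V\ref{V7e} via $\sigma'(k,y,b) = \hat{y}_{F_k^{w'}}$, the dependency graph, and Observation~\ref{obs:oracle-questions}. The paper additionally records explicitly (via Observation~\ref{obs:well-defined}, using V\ref{V1}($w,\sigma$) and $Y \cap w = \emptyset$) that $\sigma'$ is a well-defined extension of $\sigma$ before using it; you rely on this implicitly, so it would be worth adding one sentence, but it is a minor omission rather than a gap.
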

\begin{proof}
Assume V\ref{V7}($w',\sigma',r,i$) is violated, then there is a nonempty set $X$ satisfying all properties of V\ref{V7}($w',\sigma',r,i$). In the following, we show that (1) $\sigma'$ is well-defined, (2) $Y \cup X$ has bigger cardinality than $Y$, and (3) $Y \cup X$ satisfies all properties of V\ref{V7}($w,\sigma,r,i$).

By V\ref{V1}($w,\sigma$), $\sigma \in \mathcal{S}^w$ and by V\ref{V7c}($w,\sigma,r,i$), $Y \cap w = \emptyset$. Then Observation \ref{obs:well-defined} gives that $\dom(\sigma) \cap \{(k,y,b) \mid c(k,y,b) \in Y\} = \emptyset$. By V\ref{V7e}($w,\sigma,r,i$), for $c(k,y,b) \in Y$, $\hhat{y}_{\smash{F_k^{w \cup Y}}} = \hhat{y}_{\smash{F_k^{w'}}}$ exists. Hence, $\sigma'$ is well-defined.

By V\ref{V7c}($w',\sigma',r,i$), $X$ is chosen such that $\emptyset = X \cap w' = X \cap (w \cup Y)$. By $\card{X} > 0$, we have $\card{(Y \cup X)} > \card{Y}$. 

To get a contradictiom to the maximality of $Y$, it suffices to show that $Y \cup X$ satisfies the properties V\ref{V7a}($w,\sigma,r,i$) to V\ref{V7e}($w,\sigma,r,i$).
\medskip
\\
To V\ref{V7a}($w,\sigma,r,i$): This property is independent of $(w,\sigma)$. Since it holds for $Y$ and $X$ separately, it also holds for $Y \cup X$. 
\medskip
\\
To V\ref{V7b}($w,\sigma,r,i$): Since $Y \cap N_{\mathcal{G}^w_\sigma}(w) = \emptyset$, it follows that $w$ and $w'$ agree on $N_{\mathcal{G}_\sigma^w}(w)$. From this, $w' \supseteq w$, and $\sigma' \sqsupseteq \sigma$, we can invoke Claim \ref{claim:graph-monotonicity} to get $N_{\mathcal{G}^w_\sigma}(w) \subseteq N_{\smash{\mathcal{G}^{w'}_{\smash{\sigma'}}}}(w')$. Since $X \cap N_{\smash{\mathcal{G}^{w'}_{\smash{\sigma'}}}}(w') = \emptyset$, also $(Y \cup X) \cap N_{\mathcal{G}^w_\sigma}(w) = \emptyset$.
\medskip
\\
To V\ref{V7c}($w,\sigma,r,i$): $Y \cap w = \emptyset$, $X \cap w' = X \cap (w \cup Y) \supseteq X \cap w = \emptyset$ and hence, $(Y \cup X) \cap w = \emptyset$.
\medskip
\\
To V\ref{V7d}($w,\sigma,r,i$): By V\ref{V7d}($w',\sigma',r,i$) for all $k \in \N^+$ and $y \in \Sigma^*$ it holds that $ \card{(X \cup Y \cup w) \cap C_{k,y}^i} = \card{(X \cup w') \cap C_{k,y}^i} \leq 1$.
\medskip
\\
To V\ref{V7e}($w,\sigma,r,i$): By assumption, V\ref{V7e}($w',\sigma',r,i$) holds for $X$. Note that for $c(k,y,b) \in X$, V\ref{V7e}($w',\sigma',r,i$) for $X$ requires the same properties as V\ref{V7e}($w,\sigma,r,i$) for $Y \cup X$, because $w' \cup X = w \cup Y \cup X$. Hence, for $c(k,y,b) \in X$, V\ref{V7e}($w,\sigma,r,i$) holds for $Y \cup X$.

Consider $c(k,y,b) \in Y$. Since $Y$ satisfies V\ref{V7e}($w,\sigma,r,i$), it holds that $\hhat{y}_{F_k^{w'}}$ exists with $|\hhat{y}_{F_k^{w'}}| \leq p_j(s_i)$ for $j$ with $r(k,j) > 0$. The definition of $\sigma'$ gives $\sigma'(k,y,b) = \hhat{y}_{F_k^{w'}}$. By the definition of dependency graphs (Definition \ref{def:dependency-graph}) it holds that $Q(F_k^{w'}(\sigma'(k,y,b))) \subseteq N_{\smash{\mathcal{G}^{w'}_{\smash{\sigma'}}}}(w')$. Finally, by V\ref{V7b}($w',\sigma',r,i$) for $X$, we have that $X \cap Q(F_k^{w'}(\sigma'(k,y,b))) \subseteq X \cap N_{\smash{\mathcal{G}^{w'}_{\smash{\sigma'}}}}(w') = \emptyset$. Hence, we get 
\[F_k^{w\cup Y \cup X}(\hhat{y}_{F_k^{w'}}) = F_k^{w \cup Y \cup X}(\sigma'(k,y,b)) = F_k^{w' \cup X}(\sigma'(k,y,b)) \overset{\text{\ref{obs:oracle-questions}}}{=} F_k^{w'}(\sigma'(k,y,b)) = y.\]
and therefore $\hhat{y}_{F_k^{w\cup Y \cup X}}$ exists with
\[|\hhat{y}_{F_k^{w\cup Y \cup X}}| \leq |\hhat{y}_{F_k^{w'}}| \leq p_j(s_i) \text{ for } j \text{ with } r(k,j)>0.\]
Thus, for $c(k,y,b) \in Y$, V\ref{V7e}($w,\sigma,r,i$) holds for $Y \cup X$.
\end{proof}

\subsection{Oracle Construction}\label{sec:oracle}
We want to construct an oracle which satisfies the following theorem:

\begin{theorem}\label{thm:thick-oracle-1}
There exists an oracle $O$ such that for all $k \in \N^+$ one of the following holds:
\begin{romanenumerate}
\item $F_k^O$ is not optimal for $\ran(F_k^O)$.
\item $\ran(F_k^O) \in \UTIME^O (h^c)$ for a constant $c \in \N$.
\end{romanenumerate}
\end{theorem}

\noindent We will take care of the following set of tasks:
\[\{\tau _k \mid k \in \N^+\} \cup \{\tau _{k,j} \mid k,j \in \N^+\}\]
Let $\mathcal{T}$ be an enumeration of these tasks with the property that $\tau _k$ appears earlier than $\tau _{k,j}$.

Recall that $(w,\sigma ,r) \in \Sigma^* \times \mathcal{S} \times \mathcal{R}$ is a valid triple, when $r$ is valid and ($w,\sigma $) is $r$-valid. The oracle construction inductively defines a sequence $\{(w_s,\sigma_s,r_s)\}_{s\in \N}$ of valid triples. The $s$-th triple is defined in step $s$ of the oracle construction.

In every step we treat the smallest task in the order specified by $\mathcal{T}$, and after treating a task we remove it and possibly other higher tasks from $\mathcal{T}$. (In every step, there exists a task that we will treat, as we never remove \emph{all} remaining tasks from $\mathcal{T}$.)

In step $s=0$ we define $(w_0,\sigma_0,r_0)$ with the nowhere defined function $r_0 \in \mathcal{R}$, the nowhere defined function $s_0 \in \mathcal{S}^\emptyset$, and the oracle $w_0 = \varepsilon$. We will show that $(w_0,\sigma_0,r_0)$ is a valid triple.

In step $s>0$ we define $(w_s,\sigma_s,r_s)$ such that $w_{s} \sqsupsetneq w_{s-1}$, $\sigma _{s} \sqsupseteq \sigma _{s-1}$, $r_{s} \sqsupsetneq r_{s-1}$, $(w_s,\sigma_s,r_s)$ is valid, and the earliest task $\tau $ in $\mathcal{T}$ is treated and removed. The oracle construction  depends on whether $\tau _k$ or $\tau _{k,j}$ is the next task in $\mathcal{T}$.

\begin{itemize}
\item {\bf Task} $\tau _k$: Define $r_s(k) \coloneqq m$ with $m \geq s_{i+\floor{\log i}} + k + 2$ for $i \in \N$ with $\floor{\log i} > 2k+2$ and $i \geq i_0(k) + \hat{m}$. Extend $(w_{s-1},\sigma_{s-1})$ to $(v,\rho)$ via Lemma \ref{lemma:valid-extension} such that $(v,\rho) \sqsupsetneq (w_{s-1}, \sigma_{s-1})$ and $\ls{v} \geq m$. Define $w_s \coloneqq v$, $\sigma_s \coloneqq \rho$.

{(Meaning: We make sure that before coding for or diagonalizing against $F_k$, the combinatorial properties of R\ref{R1} hold for $\ls{w_s}$.)}

\item {\bf Task} $\tau _{k,j}$: Let $r^0 \coloneqq r_{s-1} \cup \{(k,j) \mapsto 0\}$. If there exists an $r^0$-valid pair ($v,\rho$) such that $v \sqsupsetneq w_{s-1}$ and $\rho \sqsupseteq \sigma_{s-1}$, then define $r_s \coloneqq r^0$, $w_s \coloneqq v$ and $\sigma _s \coloneqq \rho$.
 
Otherwise, extend ($w_{s-1},\sigma_{s-1}$) to ($v, \rho$) via Lemma \ref{lemma:valid-extension} such that $(v,\rho) \sqsupsetneq (w_{s-1}, \sigma_{s-1})$. Define $w_s \coloneqq v$ and $\sigma _s \coloneqq \rho$. Define $r_s \coloneqq r_{s-1} \cup \{(k,j) \mapsto m\}$ with $m \in \N^+$ sufficiently large such that $m > \ls{v} + \max \ran(r_{s-1}) +  s_{i_0(k \cdot j)}$. Remove all tasks $\tau _{k,j+1}, \tau_{k,j+2},\dots$ from $\mathcal{T}$.

{(Meaning: Try to ensure that $F_k$-proofs for a fact $y$ are more than $p_j$ longer compared to proofs based on code words in the oracle (i.e., $r_s(k,j)=0$ and satisfying V\ref{V5}). If this is not possible (i.e., $r_s(k,j) > 0$), intuitively, $F_k$ always has ``short'' proofs for a fact when encoding the fact by a code word. Thus, we promise to encode $F_k$-facts of length $>r_s(k,j)$ into the oracle via shortest code words (satisfying V\ref{V6}). We do this by encoding as much as possible on each stage (satisfying V\ref{V7}).)}
\end{itemize}

The remaining part of this section proves that $(w_s,\sigma_s,r_s)_{s\in\N}$ is a sequence of valid triples. We start by showing that $r_s$ is a valid requirement function for all $s \in \N$.
\begin{claim}\label{claim:r-always-valid}
$r_0 \sqsubseteq r_1 \sqsubseteq \dots$ and for every $s \in \N$, $r_s \in \mathcal{R}$ is valid.
\end{claim}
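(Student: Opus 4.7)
The plan is to proceed by induction on $s$. The chain $r_0 \sqsubseteq r_1 \sqsubseteq \cdots$ is immediate from the construction: in every stage we set $r_s$ by a single-value extension of $r_{s-1}$ (either $r_s \coloneqq r_{s-1} \cup \{k \mapsto m\}$ in task $\tau_k$, or $r_s \coloneqq r_{s-1} \cup \{(k,j) \mapsto \ast\}$ in task $\tau_{k,j}$). Also $\dom(r_s)$ is finite because $\dom(r_0) = \emptyset$ and each stage adds exactly one element.

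For validity, the base case is trivial since $r_0$ is nowhere defined. For the inductive step, assume $r_{s-1}$ is valid. Since $r_{s-1} \sqsubseteq r_s$, Claim \ref{claim:requirement-monotonicity}.\ref{claim:requirement-monotonicity-i} ensures that R\ref{R1}($r_s$) and R\ref{R2}($r_s$) are not violated for any key that was already present in $\dom(r_{s-1})$. Hence only the newly added value needs to be inspected, and I split into cases by the task treated in stage $s$.

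For task $\tau_k$, the only new value is $r_s(k)=m$, and the construction explicitly picks $m \geq s_{i+\floor{\log i}} + k + 2$ with $\floor{\log i} > 2k+2$ and $i \geq i_0(k)$, which is precisely R\ref{R1}($r_s$). No new pair $(k',j')$ is added, so R\ref{R2} and R\ref{R3} are not newly constrained. For task $\tau_{k,j}$ in the case $r_s(k,j)=0$, the premise of R\ref{R2} fails for this new pair and R\ref{R3} is vacuous since it only restricts pairs with positive value. In the remaining case $r_s(k,j)=m>0$, R\ref{R2} requires $k \in \dom(r_s)$ and $m > r_s(k) + s_{i_0(k\cdot j)}$; the enumeration property that $\tau_k$ appears before $\tau_{k,j}$ guarantees $k \in \dom(r_{s-1}) = \dom(r_s)$ (since $r_s(k)$ is not modified in this stage), and the construction chooses $m > \|v\| + \max\ran(r_{s-1}) + s_{i_0(k\cdot j)} \geq r_{s-1}(k) + s_{i_0(k\cdot j)} = r_s(k) + s_{i_0(k\cdot j)}$.

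The subtle point, and what I expect to be the main obstacle, is R\ref{R3}($r_s$) in the case $r_s(k,j) = m > 0$. I need to rule out that some earlier stage $s' < s$ has already set $r_{s'}(k,j') > 0$ for some $j' \neq j$. Here I invoke the enumeration order: the tasks $\tau_{k,j'}$ are listed in $\mathcal{T}$ in order of increasing $j'$ (this is the intended reading of ``remove all tasks $\tau_{k,j+1},\tau_{k,j+2},\dots$'', which names only strictly later entries). So any previously processed $\tau_{k,j'}$ satisfies $j' < j$. If such a $\tau_{k,j'}$ had resulted in $r_{s'}(k,j') > 0$, then in stage $s'$ the construction would have removed $\tau_{k,j'+1},\tau_{k,j'+2},\dots$ from $\mathcal{T}$, and in particular $\tau_{k,j}$, contradicting the fact that $\tau_{k,j}$ is still being processed at stage $s$. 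Therefore every earlier $\tau_{k,j'}$ produced $r_{s'}(k,j') = 0$, the new value is the unique positive image of $r_s$ at a pair $(k,\cdot)$, and R\ref{R3}($r_s$) holds. This completes the induction and yields the claim.
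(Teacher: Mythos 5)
Your proof is correct and follows essentially the same route as the paper: monotonicity (Claim \ref{claim:requirement-monotonicity}.\ref{claim:requirement-monotonicity-i}) handles keys already in $\dom(r_{s-1})$, the explicit choice of $m$ in the construction discharges R\ref{R1} and R\ref{R2}, and the removal of $\tau_{k,j+1},\tau_{k,j+2},\dots$ from $\mathcal{T}$ is exactly what the paper uses for R\ref{R3}. The only difference is cosmetic: you make explicit the (otherwise implicit) assumption that the enumeration $\mathcal{T}$ lists $\tau_{k,j'}$ in increasing order of $j'$ for fixed $k$, and you spell out $r_s(k)\leq\max\ran(r_{s-1})$, while the paper instead notes explicitly that each task index is fresh (never in $\dom(r_{s-1})$), a point you leave implicit when asserting $r_s$ is a ``single-value extension''; both omissions are immaterial and easily repaired from the other's phrasing.
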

\begin{claimproof}
The oracle construction considers every task at most once, because every task is removed from $\mathcal{T}$ after it was treated. Recall that $r_0 \in \mathcal{R}$ is the nowhere defined function. In step $s+1$, the domain of the requirement function $r_s$ for $s \in \N$ is extended only by the index of the treated task. This index is an element of $\N^+ \cup (\N^+ \times \N^+)$, is mapped to some element from $\N$ and can not already be contained in the domain of $r_{s}$, because otherwise the task would have been treated twice. Hence, $r_{s} \sqsubseteq r_{s+1} \in \mathcal{R}$.

The requirements R\ref{R1}($r_0$), R\ref{R2}($r_0$), and R\ref{R3}($r_0$) are trivially satisfied. Hence, $r_0$ is valid. Let $s \in \N$ and $r_{s}$ be valid.

By Claim \ref{claim:requirement-monotonicity}.\ref{claim:requirement-monotonicity-i}, $r_{s+1}$ does not violate R\ref{R1}($r_{s+1}$) and R\ref{R2}($r_{s+1}$) via some $k \in \dom(r_{s})$ and $(k,j) \in \dom(r_{s})$. Suppose the task $\tau _k$ is treated in step $s+1$. Then $r_{s+1}(k) = m$ satisfies all properties of R\ref{R1}($r_{s+1}$), because $m$ is chosen exactly as required by R\ref{R1}($r_{s+1}$). Suppose the task $\tau_{k,j}$ is treated in step $s+1$. If $r_{s+1}(k,j)=0$, then R\ref{R2}($r_{s+1}$) holds. Otherwise, the task $\tau_k$ appears before $\tau_{k,j}$ in $\mathcal{T}$, so $r_s(k)$ is defined and $r_s(k) \leq \max \ran(r_{s})$. Then $r_{s+1}(k,j)=m$ with $m > \max \ran (r_{s}) + s_{i_0(k \cdot j)} \geq r(k) + s_{i_0(k \cdot j)}$ as required by R\ref{R2}($r_{s+1}$).

Since R\ref{R3}($r_{s}$) is satisfied, only the treatment of the task $\tau_{k,j}$ with $r(k,j)>0$ in step $s+1$ can lead to a violation of R\ref{R3}($r_{s+1}$). But if there is some prior task $\tau_{k,j'}$ where $r(k,j') > 0$ is chosen, then the task $\tau_{k,j}$ would have been removed from $\mathcal{T}$ and $\tau_{k,j}$ would have never been treated. Hence, R\ref{R3}($r_{s+1}$) is satisfied.
\end{claimproof}
Next we will prove an argument that will help us to show that $(w_s,\sigma_s)$ is $r_s$-valid. The oracle construction defines $(w_s,\sigma_s)$ as an extension of $(w_{s-1},\sigma_{s-1})$ via Lemma \ref{lemma:valid-extension}. Lemma \ref{lemma:valid-extension} will give that $(w_s,\sigma_s)$ is $r_{s-1}$-valid. The following claim shows that this implies that $(w_s,\sigma_s)$ is also $r_s$-valid.
\begin{claim}\label{claim:valid-stepup}
If $(w_s,\sigma_s)$ is $r_{s-1}$-valid, then it is also $r_s$-valid.
\end{claim}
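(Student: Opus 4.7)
The plan is to use the fact that V\ref{V1}--V\ref{V4} do not mention $r$ at all, so they transfer from the $r_{s-1}$-validity hypothesis without comment; only V\ref{V5}--V\ref{V8} need scrutiny. Since $r_s$ extends $r_{s-1}$ by at most one new entry, determined by the task treated in stage $s$, I would split into three cases according to that task.

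If the task is $\tau_k$, then $r_s$ differs from $r_{s-1}$ only in its single-argument part via the entry $r_s(k)=m$. Requirements V\ref{V5}, V\ref{V6}, V\ref{V7} involve only pair-arguments $r(k',j')$, so they are unchanged; only V\ref{V8}($w_s,r_s$) requires a fresh check, and it is immediate from the construction, which enforces $||w_s||>m$. If the task is $\tau_{k,j}$ and the construction assigned $r_s(k,j)=0$, then $(w_s,\sigma_s)$ was produced $r_s$-valid by construction, and there is nothing to show. The substantive case is $\tau=\tau_{k,j}$ with $r_s(k,j)=m>0$: here V\ref{V5} and V\ref{V8} inherit directly from $r_{s-1}$-validity (neither receives any new instance), so everything reduces to verifying V\ref{V6} and V\ref{V7} for the new entry $(k,j)$.

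The main obstacle is showing that these two new instances are vacuous. To exploit this, I would use that the construction picks $m>||v||+\max\ran(r_{s-1})+s_{i_0(k\cdot j)}$, so in particular $m>||w_s||$. For V\ref{V6}, the premise requires some $y\in\Sigma^{>m}$ together with some $i\geq i(k,y)$ satisfying $s_i<||w_s||$; but from the definition of $i(k,y)$ one has $s_i\geq s_{i(k,y)}\geq t_{i(k,y)}\geq |y|+k+2>m>||w_s||$, a contradiction, so the premise fails and V\ref{V6} holds vacuously for $(k,j)$. For V\ref{V7}, any new candidate $c(k,y,b)\in C_{k,y}^i$ contributed by the new pair would require $y\in\Sigma^{>m}$ and $s_i<||w_s||<m$; but $t_i<s_i<m$ together with the code-word length constraint $|y|+k+2\leq t_i$ forces $|y|<m$, contradicting $|y|>m$. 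Hence the candidate set relevant to V\ref{V7}($w_s,\sigma_s,r_s$) coincides with the one for V\ref{V7}($w_s,\sigma_s,r_{s-1}$), which is excluded by $r_{s-1}$-validity, completing the argument.
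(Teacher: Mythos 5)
Your case analysis matches the paper's almost exactly: $\tau_k$ only affects V\ref{V8} and is handled by the construction enforcing $||w_s||>m$; $\tau_{k,j}$ with $r_s(k,j)=0$ is trivially $r_s$-valid by construction; and for $\tau_{k,j}$ with $r_s(k,j)=m>0$, the key observation that $m>||w_s||$ renders the new entry vacuous in V\ref{V6} and V\ref{V7a} is exactly the paper's. Your chains $s_i \geq s_{i(k,y)}\geq t_{i(k,y)}\geq |y|+k+2>m>||w_s||$ for V\ref{V6} and $|y|+k+2\leq t_i<s_i<||w_s||<m$ for V\ref{V7a} are both correct and slightly more explicit than the paper's phrasing.

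The one spot where your argument is thinner than the paper's is V\ref{V7e}. You show that the V\ref{V7a}-candidate set is unchanged (no $c(k,\cdot,\cdot)$ justified by the new pair can exist), and then conclude immediately; but V\ref{V7e} separately references the value $j$ with $r(k',j)>0$ for each $c(k',y',b')\in X$, so you should also say why this reference is unaffected by adding $r_s(k,j)=m$. It does follow from what you showed — since no code word with first coordinate $k$ can be in any admissible $X$, V\ref{V7e} never consults the new entry — but the paper makes this explicit by invoking R\ref{R3}($r_s$) to guarantee the $j$ in V\ref{V7e} is unique per $k'$. Adding one sentence about V\ref{V7e} would close this small gap; otherwise the proposal is correct and follows the paper's route.
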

\begin{claimproof}
Since $(w_s,\sigma_s)$ is $r_{s-1}$-valid, if $(w_s,\sigma_s$) is not $r_s$-valid, there must be a violated requirement that involves $r_s(k)$ or $r_s(k,j)$ defined in step $s$. 

If the task $\tau_k$ is treated in step $s$, then $r_s(k)$ was defined and hence, only V\ref{V8}($w_s,r_s$) can be violated (V\ref{V1}--V\ref{V7} do not depend on $r$ or only on values $r(k',j')$). Since $(w_s,\sigma_s)$ is $r_{s-1}$-valid, it holds that $\ls{w_s} \geq \max\{r_{s-1}(k') \mid k' \in \dom(r_{s-1})\}$. Furthermore, when treating the task $\tau_k$, we ensure that $\ls{w_s} \geq r_s(k)$. Hence, $(w_s,\sigma_s)$ is $r_s$-valid.

If the task $\tau_{k,j}$ is treated in step $s$, we distinguish two cases. If $r_s(k,j) = 0$, then $(w_s,\sigma_s)$ was chosen to be $r_s$-valid. Otherwise, $m \coloneqq r_s(k,j) > \ls{w_s}$. Since V\ref{V6}, V\ref{V7a} and V\ref{V7e} refer to $r_s(k,j) > 0$, we analyze them:
\begin{itemize}
\item The requirement V\ref{V6}($w_s,r_s$) can only be violated for $r_s(k,j)$ when $\ls{w_s} \geq m$, because V\ref{V6} requires $s_{i(k,y)} \leq \ls{w_s}$ with $y \in \Sigma^{>m}$. Since $m > \ls{w_s}$, V\ref{V6}($w_s,r_s$) can not be violated. 
\item Similarly, for V\ref{V7a}($w_s,\sigma_s,r_s$), the requirement-function $r_s$ can only make a difference to $r_{s-1}$ when $\ls{w_s} \geq m$, which is not the case. 
\item Since V\ref{V7a}($w_s,\sigma_s,r_{s-1}$) is equal to V\ref{V7a}($w_s,\sigma_s,r_s$), any set $X$ that satisfies the latter also satisfies the former. Furthermore, any set $X$ that satisfies V\ref{V7b}, V\ref{V7c}, and V\ref{V7d} for ($w_s,\sigma_s,r_s$), also satisfies these properties for ($w_s,\sigma_s,r_{s-1}$), because $r_s$ does not appear in these requirements. 

Let $X$ be an arbitrary set satisfying V\ref{V7}($w_s,\sigma_s,r_s$). Then $X$ also satisfies V\ref{V7a}--V\ref{V7d}($w_s,\sigma_s,r_{s-1}$). By V\ref{V7a}($w_s,\sigma_s,r_{s-1}$), for $c(k,y,b) \in X$ there is some $j$ with $r_{s-1}(k,j) > 0$. Since R\ref{R3}($r_s$) holds, there is no $j' \neq j$ with $r_s(k,j') > 0$. Thus, for V\ref{V7e}($w_s,\sigma_s,r_{s-1}$), a transition to $r_s$ does not change the requirement. Hence, V\ref{V7e}($w_s,\sigma_s,r_s$) is also equal to V\ref{V7e}($w_s,\sigma_s,r_{s-1}$). Therefore, V\ref{V7}($w_s,\sigma_s,r_s$) is satisfied, because it is equivalent to V\ref{V7}($w_s,\sigma_s,r_{s-1}$), which is satisfied.
\end{itemize}
This shows that V\ref{V6}($w_s,r_s$) and V\ref{V7}($w_s,\sigma_s,r_s$) are satisfied and $(w_s,\sigma_s)$ is $r_s$-valid. 
\end{claimproof}
Finally, we show that Lemma \ref{lemma:valid-extension} extends an $r_s$-valid pair by one stage such that the pair remains $r_s$-valid. This, together with the previous two claims, shows that the oracle construction gives a sequence of valid triples. The following proof is quite long and technical, because, as already mentioned, V\ref{V6} is a property that is difficult to preserve.
\begin{lemma}\label{lemma:valid-extension}
Let $(w_0,\sigma_0,r_0),(w_1,\sigma_1,r_1),\dots ,(w_{s},\sigma_s,r_{s})$ be valid triples from the oracle construction. Let $(w,\sigma) \sqsupseteq (w_s,\sigma_s)$ be $r_s$-valid and $\ls{w} = s_{i-1}$. Then there is some $r_s$-valid pair $(v,\rho)$ with $v \sqsupsetneq w$ and $\rho \sqsupseteq \sigma$. More precisely, $(v,\rho)$ is defined according to Algorithm \ref{alg:2} which basically adds words according to V\ref{V7}:

\begin{algorithm}
\caption{prod}\label{alg:2}
\begin{algorithmic}[1]
\State \textbf{Input:} $(w,\sigma,r,d) \in \Sigma^* \times \mathcal{S} \times \mathcal{R} \times \N$\label{alg2:1}
\State let $v_{d-1} \coloneqq w$ and $\rho_{d-1} \coloneqq \sigma$\label{alg2:2}
\For{$\ell$ from $d-1$ to $i-1$ inclusive}\label{alg2:3}
\State define $\mathcal{X}_\ell$ such that $X \in \mathcal{X}_\ell$ if \label{alg2:4}
\Indent
\State $X \subseteq \{c(k,y,b) \in C_{\smash{k,y}}^{\smash{\ell+1}} \mid \exists j \mbox{ with } r(k,j)=m>0,\ k \in \N^+,\ y \in \Sigma^{>m}\}$,\label{alg2:5} 
\State $X \cap N_{\smash{\mathcal{G}_{\smash{\rho_\ell}}^{\smash{v_\ell}}}}(v_\ell) = \emptyset$,\label{alg2:6}
\State $X \cap v_\ell = \emptyset$,\label{alg2:7}
\State $\card{((X \cup v_\ell) \cap C_{k,y}^{\ell+1}) } \leq 1$ for $k \in \N^+$ and $y \in \Sigma^*$,\label{alg2:8}
\State if $c(k,y,b) \in X$, then $\hhat{y}_{\smash{F_{\smash{k}}^{v_\ell \cup X}}}$ exists with $|\hhat{y}_{\smash{F_{\smash{k}}^{v_\ell \cup X}}}| \leq p_j(s_{\ell+1})$ for $j$ with $r(k,j)>0$\label{alg2:9}
\EndIndent
\State let $Y_\ell \in \mathcal{X}_\ell$ be of greatest cardinality\label{alg2:10}
\State define $v_{\ell+ 1} \coloneqq v_\ell \cup Y_\ell$\label{alg2:11}
\State define $\rho_{\ell+1} \coloneqq \rho_\ell \cup \{(k,y,b) \mapsto \hhat{y}_{\smash{F_k^{\smash{v_{\ell+1}}}}} \mid c(k,y,b) \in Y_\ell\}$\label{alg2:12}
\EndFor\label{alg2:13}
\State \Return $(v_{i},\rho_{i})$\label{alg2:14}
\end{algorithmic}
\end{algorithm}
\end{lemma}
\begin{proof}
Let $(v,\rho) \coloneqq \text{prod}(w,\sigma,r_s,i)$, i.e., the loop of $\text{prod}(w,\sigma,r_s,i)$ is executed one time, and let $Y \coloneqq Y_{i-1}$ where $Y_{i-1}$ is the set defined in line \ref{alg2:10} in the only iteration of the loop. By Observation \ref{obs:well-defined}, $\sigma \in \mathcal{S}^w$ (follows from V\ref{V1}($w,\sigma$)), line \ref{alg2:7} and line \ref{alg2:9}, $\rho$ is well-defined. It holds that $v \sqsupsetneq w$, because $w$ gets extended by words of length $s_i$, $\rho \sqsupseteq \sigma$ and $Y \subseteq C$ is a set of code words. We show via several claims that all requirements V\ref{V1} to V\ref{V8} are satisfied for $(v,\rho,r_s)$. All claims have short proofs, except the claim for V\ref{V6}($v,r_s$).
\begin{claim}\label{claim:small-polys}
If $c(k,y,b) \in Y$, then $p_k(p_j(s_i)) < s_{i+1}$ for the unique $j$ (cf.~R\ref{R3}($r_s$)) such that $r_s(k,j) > 0$.
\end{claim}
\begin{claimproof}
Let $r_s(k,j) = m$. Observe that $s_i = |c(k,y,b)| > m$, because $y \in \Sigma^{>m}$ (line \ref{alg2:5}). Since $r_s$ is valid and $m > 0$, by R\ref{R2}($r_s$) and Claim \ref{claim:stages2}.\ref{claim:stages2-vii}, $p_{k\cdot j}(s_i) = p_k(p_j(s_i)) < s_{i+1}$.
\end{claimproof}
\begin{observation}\label{obs:w-v-agree2}
$v^{<s_i} = w^{<s_i}$. 
\end{observation}
\begin{observation}\label{obs:w-ws-agree}
The oracles $v$ and $w$ agree on $N_{\mathcal{G}^{w}_{\sigma}}(w)$ by line \ref{alg2:6}.
\end{observation}
\begin{observation}\label{obs:graph-mono-sinlge}
$N_{\mathcal{G}^{v}_{\rho}}(v) \supseteq N_{\mathcal{G}^{w}_{\sigma}}(w)$ by Claim \ref{claim:graph-monotonicity}.
\end{observation}
\begin{claim}\label{claim:V2-V3-V5-V6}
The requirements V\ref{V1}($v,\rho$), V\ref{V2}($v$), V\ref{V3}($v$), and V\ref{V4}($v,\rho$) are satisfied.
\end{claim}
\begin{claimproof}
We have that $(w,\sigma)$ is valid and by Lemma \ref{lemma:remain-valid} invoked with $(w,\sigma)$ and $Y$, also $(v,\rho)$ is valid. Observe that \ref{lemma:remain-valid}.\ref{lemma:remain-valid-i} holds by line \ref{alg2:7}, \ref{lemma:remain-valid}.\ref{lemma:remain-valid-ii} holds by line \ref{alg2:8}, \ref{lemma:remain-valid}.\ref{lemma:remain-valid-iii} holds by line \ref{alg2:6}, and \ref{lemma:remain-valid}.\ref{lemma:remain-valid-iv} holds by Claim \ref{claim:small-polys} and line \ref{alg2:9}.
\end{claimproof}
\begin{claim}\label{claim:V5-V8}
The requirements V\ref{V5}($v,r_s$) and V\ref{V8}($v,r_s$) are satisfied.
\end{claim}
\begin{claimproof}
This holds by Claim \ref{claim:requirement-monotonicity}.\ref{claim:requirement-monotonicity-v} and \ref{claim:requirement-monotonicity}.\ref{claim:requirement-monotonicity-vii}.
\end{claimproof}
\begin{claim}\label{claim:V7}
The requirement V\ref{V7}($v,\rho,r_s$) is satisfied.
\end{claim}
\begin{claimproof}
V\ref{V7}($w,\sigma,r_s$) is satisfied (i.e., V\ref{V7}($w,\sigma,r_s,1$), \dots , V\ref{V7}($w,\sigma,r_s,i-1$) are satisfied), $(v,\rho) \supseteq (w,\sigma)$, $N_{\mathcal{G}^v_\rho}(v) \supseteq N_{\mathcal{G}^w_\sigma}(w)$ (Observation \ref{obs:graph-mono-sinlge}), and $v^{<s_i} = w^{<s_i}$ (Observation \ref{obs:w-v-agree2}). Then by Claim \ref{claim:requirement-monotonicity}.\ref{claim:requirement-monotonicity-vi}, V\ref{V7}($v,\rho,r_s,1$), \dots, V\ref{V7}($v,\rho,r_s,i-1$) are satisfied. It remains to show that V\ref{V7}($v,\rho,r_s,i$) is satisfied.

Observe that $Y$ is of greatest cardinality satisfying all properties of V\ref{V7}($w,\sigma,r_s,i$), because the lines \ref{alg2:5} to \ref{alg2:9} on input prod($w,\sigma,r_s,i$) are identical to the requirements V\ref{V7a}($w,\sigma,r_s,i$) to V\ref{V7e}($w,\sigma,r_s,i$).
Furthermore, $(w,\sigma)$ is valid and extended to $(v,\rho)$ via $Y$ exactly as described in Lemma \ref{lemma:V8-extension}. Then Lemma \ref{lemma:V8-extension} gives that V\ref{V7}($v,\rho,r_s,i$) is satisfied.
\end{claimproof}

The remainder of this proof focuses on the requirement V\ref{V6}($v,r_s$). We will use a proof by contradiction and assume that V\ref{V6}($v,r_s$) is violated for some tuple $(k,j)$. Under this assumption, we can extend $(v,\rho)$ to ($v',\rho'$), such that the oracle construction should have chosen this extension through which $r_{s_{k,j}}(k,j) = 0$ would have been possible, with $s_{k,j}$ being the step that treated the task $\tau_{k,j}$. However, when proving the $r_{s_{k,j}-1}$-validity of $(v',\rho')$, we have to prove again that V\ref{V6}($v',r_{s_{k,j}-1}$) is satisfied. Hence, we have to prove recursively that V\ref{V6} is satisfied. To properly state the invariants for this recursion, we need further definitions.

For $(k,j) \in \supp(r_s)$, let $s_{k,j}$ be the step in the oracle construction that treated the task $\tau _{k,j}$. Let $\varphi \colon \N^2 \cap \supp(r_s) \to \{1, \dots , \card{\N^2 \cap \supp(r_s)}\}$ be defined as the function that maps $(k,j)$ to the position\footnote{The positions start at $1$.} of $s_{k,j}$ in the list $[s_{k,j} \mid (k,j) \in \supp(r_s)]$ that is sorted in descending order. 
\begin{claim}\label{claim:wskj-defined-for-stage}
Let $(k,j) \in \supp(r_s)$. Then $\ls{w_{s_{k,j}}} \leq s_{i-\varphi(k,j)}$.
\end{claim}
\begin{claimproof}
By assumption (cf.~Lemma \ref{lemma:valid-extension}), $w \sqsupseteq w_s$ is defined up to stage $s_{i-1}$. There are at least $\varphi(k,j)-1$ steps of the oracle construction from step $s_{k,j}$ to step $s$. Hence, $w_{s_{k,j}}$ is defined for at least $\varphi(k,j)-1$ stages less than $w_s$, because for each task, the oracle gets extended by at least one full stage. Therefore, $\ls{w_{s_{k,j}}} \leq s_{i-1-(\varphi(k,j)-1)} = s_{i-\varphi(k,j)}$.
\end{claimproof}
The proof of the following Claim \ref{claim:V9-recursive} is quite long. We prove that we can either resolve the violation of V\ref{V6} or we can propagate this defect onto a previously treated tuple $(k',j')$. Since there are only finitely many tuples treated before $(k,j)$, we can repeatedly invoke this claim and eventually resolve the violation of V\ref{V6}. If the reader wants to finish the proof of Lemma \ref{lemma:valid-extension} first, it continues in the paragraph before Claim \ref{claim:V6}.
\begin{claim}\label{claim:V9-recursive}
Let $(k,j) \in \dom (\varphi)$ and $c \coloneqq \varphi(k,j)$. Let $v \supseteq w$ with $\ls{v} = s_{i}$ and $v$ and $w$ agree on $\Sigma^{\leq s_{i-c}} \cup N_{\mathcal{G}^{w}_{\sigma}}(w)$. Let $\rho \sqsupseteq \sigma$. Let $(v,\rho)$ be $r_{s_{k,j}}$-valid with the exception of V\ref{V6}($v,r_{s_{k,j}}$) for $(k,j)$.

Then there is some $v' \supseteq v$ with $\ls{v'} = s_i$, $v'$ and $w$ agree on $\Sigma^{\leq s_{i-c-1}} \cup N_{\mathcal{G}^{w}_{\sigma}}(w)$, and some $\rho' \sqsupseteq \rho$ such that $(v',\rho')$ is $r_{s_{k,j}-1} \cup \{(k,j) \mapsto 0\}$-valid except V\ref{V6}($v',r_{s_{k,j}-1} \cup \{(k,j) \mapsto 0\}$), which may be violated.
\end{claim}
\emph{Proof sketch:} If V\ref{V6}($v,\smash{r_{\smash{s_{k,j}}}}$) is violated for $(k,j)$, then we have an $F_k$-proof $\tilde{y}$ for some $y$, but $C^{\smash{i(k,y)}}_{k,y} \cap v = \emptyset$. So we somehow missed to code for $y$. We consider the following two possibilities: either $\tilde{y}$ is short or long compared to $i(k,y)$.

If $\tilde{y}$ is short, we get a contradiction to the satisfaction of V\ref{V7}, because V\ref{V7} requires that we encode for all words that have a short $F_k$-proof. This even holds if $\tilde{y}$ is long and a short $F_k$-proof arises after encoding.

The remaining case deals with a long $\tilde{y}$ and the shortest $F_k$-proof remains long, even after adding encodings for $k$ and $y$. Here, we can use the gap between $i(k,y)$ and a shortest $F_k$-proof to construct an oracle that is $r_{s_{k,j}-1} \cup \{(k,j) \mapsto 0\}$-valid (except V\ref{V6}). In particular, this oracle satisfies V\ref{V5} for $(k,j)$. The proof is therefore as follows:
\begin{enumerate}
    \item Show that the shortest $F_k^v$-proof $\tilde{y}$ for $y$ is long (Claims \ref{claim:small-polynomial}, \ref{claim:smallest-code-word-length-small}, \ref{claim:l<i}, \ref{claim:bounded-haty}, \ref{claim:short-cw-long-proof}).
    \item Show that there is a significantly shorter code word $c(k,y,b)$ than $\tilde{y}$ that we can add to the oracle without conflicting the validity of the oracle (Claims \ref{claim:noduplicatecw}, \ref{claim:cw-choice}, \ref{claim:v-rho-valid}).
    \item Construct an $r_{s_{k,j}-1} \cup \{(k,j) \mapsto 0\}$-valid (except V\ref{V6}) oracle $v'$ containing $c(k,y,b)$ and having some additional properties stated in Claim \ref{claim:V9-recursive} (Observation \ref{obs:prod1-iterative}, Claims \ref{claim:wc-graphvalid}, \ref{claim:wc-V9-valid}, \ref{claim:wc-V6-valid}, \ref{claim:wc-V8-valid}).
\end{enumerate}

\begin{claimproof}
We introduce the abbreviations $r \coloneqq r_{s_{k,j}}$ and $r^0 \coloneqq r_{s_{k,j}-1} \cup \{(k,j) \mapsto 0\}$. 
Since $(v,\rho)$ is $r_{s_{k,j}}$-valid except V\ref{V6}($v,r_{s_{k,j}}$) for $(k,j)$, by Claim \ref{claim:hereditary-r}, it holds that $(v,\rho)$ is $r_{s_{k,j}-1}$-valid.
\begin{observation}\label{obs:vrho-rskj-valid}
$(v,\rho)$ is $r_{s_{k,j}-1}$-valid.
\end{observation}
From the violation of V\ref{V6}($v,r$) for $(k,j)$ it follows that $m \coloneqq r(k,j) > 0$ and there is some $y$ such that 
\begin{romanenumerate}
\item\label{claim:supposition-i} $y \in \Sigma^{>m}$,
\item\label{claim:supposition-ii} $s_{i(k,y)} \leq s_i$,
\item\label{claim:supposition-iii} $\hhat{y}_{F_k^v}$ exists and $|\hhat{y}_{F_k^v}| \leq p_j(s_i)$,
\item\label{claim:supposition-iv} and $C_{k,y}^{\smash{i(k,y)}} \cap v = \emptyset$.
\end{romanenumerate}
From now on, we fix $y$ as the lexicographic smallest word having these properties. We define $\tilde{y} \coloneqq \hhat{y}_{F_k^v}$, i.e., $\tilde{y}$ is the smallest $F_k^v$-proof for $y$.
\subparagraph{1. The shortest proof for y is long.}
The following claims show that $\tilde{y}$ is rather long and that code words $c(k,y,b)$ that are significantly shorter than $\tilde{y}$ exist. 
\begin{claim}\label{claim:small-polynomial}
For $\ell \geq i(k,y)$ it holds that $p_k(p_j(s_\ell)) < s_{\ell+1}$.
\end{claim}
\begin{claimproof}
We have $r(k,j) = m < |y| < s_{i(k,y)} \leq s_{\ell}$ and $r$ is valid. Together with R\ref{R2}($r$) and subsequently Claim \ref{claim:stages2}.\ref{claim:stages2-vii}, $p_k(p_j(s_{\ell})) < s_{\ell+1}$.
\end{claimproof}

\begin{claim}\label{claim:smallest-code-word-length-small}
If $i(k,y) \geq i-c$, then $p_j(s_{i(k,y)}) < |\tilde{y}|$.
\end{claim}
\begin{claimproof}
Suppose $p_j(s_{i(k,y)}) \geq |\tilde{y}|$. We show that in this case V\ref{V7}($v,\rho,r,i(k,y)$) is not satisfied. In particular, $X = \{c(k,y,b)\}$ for some specific $b \in \Sigma^{t_{i(k,y) + \floor{\log (i(k,y))}}}$ (i.e., $|c(k,y,b)| = i(k,y)$) satisfies all requirements of V\ref{V7}($v,\rho,r,i(k,y)$). This is a contradiction to the assumption of Claim \ref{claim:V9-recursive} that V\ref{V7}($v,\rho,r,i(k,y)$) is satisfied. 

First, we bound the number of dependencies $F_k^v(\tilde{y})$ has, i.e.,
\[\card{Q(F_k^v(\tilde{y}))} \overset{\text{(\ref{claim:supposition-iii})}}{\leq} p_k(p_j(s_i)) \overset{\text{(\ref{claim:supposition-ii}),\ref{claim:small-polynomial}}}{<} s_{i+1}.\]
Next, we bound the number of dependencies the oracle has. Observe that $C_{k,y}^{i(k,y)} \not = \emptyset$, $(v,\rho)$ is valid, V\ref{V8}($v,r_s$) is satisfied, and $\ls{v} = s_i$. Then we can apply Lemma \ref{lemma:graph-structure} with $i$ for $\ell$, $r_s$ for $r$, $v$ for $w$, $\rho$ for $\sigma$, and 
\[i(k,y) \geq i - c \geq i - \card{\N^2 \cap \supp(r_s)} > i - 2\max\{k' \mid k' \in \dom(r_s)\}\] 
for $j$ to obtain $\card{C_{k,y}^{\smash{i(k,y)}} \setminus N_{\mathcal{G}^v_{\rho}}(v)} \geq s_{i+2}$.  Thus, there exists some $c(k,y,b) \in (C_{k,y}^{\smash{i(k,y)}} \setminus N_{\mathcal{G}^v_{\rho}}(v)) \setminus Q(F_k^v(\tilde{y}))$. 

Let $X \coloneqq \{c(k,y,b)\}$. Then $F_k^v(\tilde{y}) = F_k^{v\cup X}(\tilde{y}) = y$ by Observation \ref{obs:oracle-questions}. So $\hhat{y}_{\smash{F_k^{v \cup X}}}$ exists with $|\hhat{y}_{\smash{F_k^{v \cup X}}}| \leq |\tilde{y}| \leq p_j(s_i)$ by (\ref{claim:supposition-iii}), which shows that V\ref{V7e}($v,\rho,r,i(k,y)$) is satisfied. Further V\ref{V7a}($v,\rho,r,i(k,y)$) is satisfied by (\ref{claim:supposition-i}) and the choice of $c(k,y,b)$, V\ref{V7b}($v,\rho,r,i(k,y)$) is satisfied by the choice of $c(k,y,b)$, V\ref{V7c}($v,\rho,r,i(k,y)$) is satisfied by (\ref{claim:supposition-iv}), and V\ref{V7d}($v,\rho,r,i(k,y)$) is satisfied by (\ref{claim:supposition-iv}) and that V\ref{V3}($v$) is satisfied.
Hence, $X$ satisfies all requirements of V\ref{V7}($v,\rho,r,i(k,y)$) and thus V\ref{V7}($v,\rho,r,i(k,y)$) is not satisfied, giving the desired contradiction.
\end{claimproof}
\begin{claim}\label{claim:l<i}
$i(k,y) < i$.
\end{claim}
\begin{claimproof}
By (\ref{claim:supposition-iii}), $|\tilde{y}| \leq p_j(s_i)$. If $i(k,y) < i-c$, then $i(k,y) < i$. Otherwise, by Claim \ref{claim:smallest-code-word-length-small}, $p_j(s_{i(k,y)}) < |\tilde{y}| \leq p_j(s_i)$. Hence, $i(k,y) < i$.
\end{claimproof}
\begin{claim}\label{claim:bounded-haty}
If $i(k,y) < i-c$, then $p_j(s_{i-c}) < |\tilde{y}|$.
\end{claim}
\begin{claimproof}
Assume $p_j(s_{i-c}) \geq |\tilde{y}|$, then $F_k^{v}(\tilde{y}) = F_k^{w}(\tilde{y})=y$, because $v$ and $w$ agree on $\Sigma^{< s_{i-c+1}}$ and their running time, bounded by $p_k(p_j(s_{i-c}))$, does not suffice to ask words of length $\geq s_{i-c+1}$ by Claim \ref{claim:small-polynomial}. But then V\ref{V6}($w,r_s$) is violated for $(k,j)$, because $r_s(k,j)=r(k,j)=m>0$, $y \in \Sigma^{>m}$ by (\ref{claim:supposition-i}), $\hhat{y}_{F_k^{w}}$ exists, $s_{i(k,y)} < s_{i-c} \leq s_{i-1} = \ls{w}$, and $|\hhat{y}_{F_k^{w}}| \leq |\tilde{y}| \leq p_j(s_{i-c})$ by assumption, but $C_{k,y}^{\smash{i(k,y)}} \cap w = \emptyset$ by (\ref{claim:supposition-iv}) and $w \subseteq v$. Hence, V\ref{V6}($w,r_s$) is violated, a contradiction that $(w,\sigma)$ is $r_s$-valid as assumed in Lemma \ref{lemma:valid-extension}.
\end{claimproof}
\begin{claim}\label{claim:short-cw-long-proof}
$\max\{p_j(s_{i(k,y)}),p_j(s_{i-c)}\} < |\tilde{y}|$.
\end{claim}
\begin{claimproof}
If $i(k,y) < i-c$, then $p_j(s_{i(k,y)}) < p_j(s_{i-c}) < |\tilde{y}|$ by Claim \ref{claim:bounded-haty}. Otherwise $p_j(s_{i-c}) \leq p_j(s_{i(k,y)}) < |\tilde{y}|$ by Claim \ref{claim:smallest-code-word-length-small}.
\end{claimproof}
\subparagraph{2. An independent short code word for (k,y) exists.}
Let $d$ be the biggest number for which $p_j(s_d) < |\tilde{y}|$ holds. By Claim \ref{claim:short-cw-long-proof} and (\ref{claim:supposition-iii}) it holds that $\max\{i-c,i(k,y)\} \leq d \leq i-1$. Consequently, $C_{k,y}^{d} \not = \emptyset$.

The following claims show that there is a code word $c(k,y,b)$ of length $s_d$ that, when added to $v$, will not have a conflict with the validity of $v$.
\begin{claim}\label{claim:noduplicatecw}
$C_{k,y}^{d} \cap v = \emptyset$.
\end{claim}
\begin{claimproof}
Suppose there is a code word $c(k,y,b) \in C_{k,y}^{d} \cap v$. We argue that $(v,\rho)$ is $r^0$-valid. By Observation \ref{obs:vrho-rskj-valid}, $(v,\rho)$ is $r_{s_{k,j}-1}$-valid. By Claims \ref{claim:small-polynomial} and \ref{claim:l<i}, $p_k(p_j(s_i)) < s_{i+1}$. By (\ref{claim:supposition-iii}), $F_k^v(\tilde{y})=y$ is definite, because $p_k(|\tilde{y}|) \leq p_k(p_j(s_i)) < s_{i+1}$. By the choice of $d$, $p_j(|c(k,y,b)|) = p_j(s_d) < |\tilde{y}|$. Then the assumption that $c(k,y,b) \in v$ gives that V\ref{V5}($v,r^0$) is satisfied for $(k,j)$ and hence $(v,\rho)$ is $r^0$-valid. 

By Claim \ref{claim:wskj-defined-for-stage}, $\ls{w_{s_{k,j}}} \leq s_{i-c}$. Hence, $\ls{w_{s_{k,j}-1}} \leq s_{i-c-1}$. Since $w_{s_{k,j}-1} \sqsubseteq w_s \sqsubseteq w$ (cf.~Lemma \ref{lemma:valid-extension}) and $v$ and $w$ agree on $\Sigma^{\leq s_{i-c}}$ (cf.~Claim \ref{claim:V9-recursive}), also $w_{s_{k,j}-1} \sqsubsetneq w_{s_{k,j}} \sqsubseteq v$. Thus, at step $s_{k,j}$, the oracle construction would have chosen $r^0$ with the $r^0$-valid extension $(v,\rho) \sqsupsetneq (w_{s_{k,j}-1},\sigma_{s_{k,j}-1})$ instead of $r$, a contradiction.
\end{claimproof}
\begin{claim}\label{claim:cw-choice}
$C_{k,y}^{d} \setminus (N_{\mathcal{G}^v_{\rho}}(v) \cup Q(F_k^v(\tilde{y}))) \neq \emptyset$.
\end{claim}
\begin{claimproof}
By Observation \ref{obs:vrho-rskj-valid}, we have that $(v,\rho)$ is $r_{s_{k,j}-1}$-valid. Hence, $(v,\rho)$ is valid. Furthermore, V\ref{V8}($w_s,r_s$) is satisfied (cf.~Lemma \ref{lemma:valid-extension}). Since $v \supseteq w_s$, V\ref{V8}($v,r_s$) is also satisfied. Furthermore, $\ls{v} = s_i$ and we have already seen that $C_{k,y}^{d} \neq \emptyset$. This lets us apply Lemma \ref{lemma:graph-structure} with $i$ for $\ell$, $r_s$ for $r$, $v$ for $w$, $\rho$ for $\sigma$, and 
\[d \geq i - c  \geq i - \card{\N^2 \cap \supp(r_s)} > i - 2\max\{k' \mid k' \in \dom(r_s)\}\] 
for $j$ which shows that there are $s_{i+2}$ choices for $c(k,y,b) \in C_{k,y}^{d} \setminus N_{\mathcal{G}^v_{\rho}}(v)$. Since by (\ref{claim:supposition-iii}), $|\tilde{y}| \leq p_j(s_i)$, we have $\card{Q(F_k^v(\tilde{y}))} \leq p_k(p_j(s_i)) < s_{i+1}$. So $C_{k,y}^{d} \setminus (N_{\mathcal{G}^v_{\rho}}(v) \cup Q(F_k^v(\tilde{y}))) \neq \emptyset$.
\end{claimproof}
From now on let us fix some code word $c(k,y,b)$ that is chosen according to Claim \ref{claim:cw-choice}, i.e.,
\[c(k,y,b) \in C_{k,y}^{d} \setminus (N_{\mathcal{G}^v_{\rho}}(v) \cup Q(F_k^v(\tilde{y}))).\]
Let
\[v_{d-1} \coloneqq v \cup \{c(k,y,b)\} \text{ and } \rho_{d-1} \coloneqq \rho \cup \{(k,y,b) \mapsto \hhat{y}_{\smash{F_k^{\smash{v_{d-1}}}}}\}.\]
Note that $\hhat{y}_{\smash{F_k^{\smash{v_{d-1}}}}}$ exists, because $c(k,y,b) \notin Q(F_k^v(\tilde{y}))$ and Observation \ref{obs:oracle-questions} gives 
\[F_k^{v \cup \{c(k,y,b)\}}(\tilde{y}) = F_k^v(\tilde{y}).\]
From this, Observation \ref{obs:well-defined}, $\rho \in \mathcal{S}^v$ (follows from V\ref{V1}($v,\rho$)), and $c(k,y,b) \notin v$, it follows that $\rho_{d-1}$ is well-defined. Furthermore,  $v_{d-1} \supsetneq v$, and $\rho_{d-1} \sqsupsetneq \rho$.
\begin{claim}\label{claim:v-rho-valid}
The pair $(v_{d-1},\rho_{d-1})$ is valid.
\end{claim}
\begin{claimproof}
By Observation \ref{obs:vrho-rskj-valid}, $(v,\rho)$ is valid. In order to apply Lemma \ref{lemma:remain-valid} to $(v,\rho)$ and $\{c(k,y,b)\}$, let us verify that the assumptions \ref{lemma:remain-valid}.\ref{lemma:remain-valid-i}--\ref{lemma:remain-valid-iv} hold. For $c(k,y,b)$ holds \ref{lemma:remain-valid}.\ref{lemma:remain-valid-i} and \ref{lemma:remain-valid}.\ref{lemma:remain-valid-ii} by Claim \ref{claim:noduplicatecw} and $\card{\{c(k,y,b)\}} = 1$, and \ref{lemma:remain-valid}.\ref{lemma:remain-valid-iii} by Claim \ref{claim:cw-choice} and the choice of $c(k,y,b)$.

To \ref{lemma:remain-valid}.\ref{lemma:remain-valid-iv}: By Claim \ref{claim:cw-choice}, $c(k,y,b) \notin Q(F_k^v(\tilde{y}))$. Hence, $F_k^{v_{d-1}}(\tilde{y}) = y$. Thus,
\[|\hhat{y}_{\smash{F_k^{\smash{v_{d-1}}}}}| \leq |\tilde{y}| \leq p_j(s_{d+1}) \leq p_k(p_j(s_{d+1})) < s_{d+2}\]
where the second inequality holds by the choice of $d$ and the last inequality by Claim \ref{claim:small-polynomial} and $d \geq i(k,y)$. Therefore, we can apply Lemma \ref{lemma:remain-valid} and obtain that $(v_{d-1},\rho_{d-1})$ is valid.
\end{claimproof}
\subparagraph{3. Construct the final oracle.}
Recall the current situation: We have $p_j(s_{i(k,y)}) < |\tilde{y}| \leq p_j(s_{i})$, i.e., there is a long shortest $F_k^v$-proof whose underlying computation is definite. We added $c(k,y,b)$ to $v$ in order to satisfy $r^0(k,j)=0$. However, we must take two things into account: First, by adding a word of length $s_d$ to $v$, the stages from $s_d$ on may recognize this and are not ``maximal'' anymore, so we may have to add more code words on these stages to satisfy V\ref{V7}. Second, $r^0(k,j)=0$ is only satisfied via $c(k,y,b)$ when no significantly shorter proof than $\tilde{y}$ appears while we add further words to $v$. As we will prove in the following, it suffices to create the final pair $(v',\rho')$ by iteratively satisfying V\ref{V7} with respect to $r^0$ for the stages $s_d$ to $s_i$. 

Let 
\[(v',\rho') \coloneqq \text{prod}(v_{d-1},\rho_{d-1},r^0,d)\]
and let $(v_i,\rho_i), \dots ,(v_{d},\rho_{d})$ as well as $Y_{i-1}, \dots , Y_{d-1}$ be defined according to this procedure call. Observe that $(v_{d-1},\rho_{d-1})$ from the procedure call is the same pair as in the current proof and that $(v',\rho') = (v_i,\rho_i)$. We state several simple observations.
\begin{observation}\label{obs:prod1-iterative}
\begin{romanenumerate}
\item\label{obs:prod1-iterative-i} $Y_{i-1}, \dots ,Y_{d-1} \subseteq C$.
\item\label{obs:prod1-iterative-ii} $\rho_{i}, \dots , \rho_{d-1}$ are well-defined.
\item\label{obs:prod1-iterative-iii} $\rho_i \sqsupseteq \rho_{i-1} \sqsupseteq \dots \sqsupseteq \rho_{d-1} \sqsupseteq \rho \sqsupseteq \sigma \sqsupseteq \sigma_s$.
\item\label{obs:prod1-iterative-iv} $v_i \supseteq v_{i-1} \supseteq \dots \supseteq v_{d-1} \supseteq v \supseteq w \supseteq w_s$.
\item\label{obs:prod1-iterative-v} $v_{\ell+1}$ and $v_{\ell}$ agree on $N_{\smash{\mathcal{G}^{\smash{v_\ell}}_{\rho_\ell}}}(v_\ell)$ for $d-1 \leq \ell < i$.
\item\label{obs:prod1-iterative-vi} $N_{\smash{\mathcal{G}^{\smash{v_{\ell+1}}}_{\rho_{\ell+1}}}}(v_{\ell+1}) \supseteq N_{\smash{\mathcal{G}^{\smash{v_{\ell}}}_{\rho_{\ell}}}}(v_{\ell})$ for $d-1 \leq \ell < i$.
\item\label{obs:prod1-iterative-vii} $N_{\smash{\mathcal{G}^{\smash{v_i}}_{\rho_i}}}(v_i) \supseteq N_{\smash{\mathcal{G}^{\smash{v_{i-1}}}_{\rho_{i-1}}}}(v_{i-1}) \supseteq \dots \supseteq N_{\smash{\mathcal{G}^{\smash{v_{d-1}}}_{\rho_{d-1}}}}(v_{d-1}) \supseteq N_{\mathcal{G}^v_\rho}(v) \supseteq N_{\mathcal{G}^w_\sigma}(w)$.
\item\label{obs:prod1-iterative-viii} For $c(k',y',b') \in Y_{\ell}$ with $d-1 \leq \ell < i$ it holds $p_{k'}(p_{j'}(s_{\ell+1})) < s_{\ell + 2}$ for $j'$ such that $r^0(k',j')>0$.
\item\label{obs:prod1-iterative-ix} $v_i, \dots , v_{d-1}, v, w$ agree on $N_{\mathcal{G}^w_\sigma}(w)$.
\item\label{obs:prod1-iterative-x} $\max\{\hhat{y}_{\smash{F_k^{\smash{v_i}}}}, \hhat{y}_{\smash{F_k^{\smash{v_{i-1}}}}}, \dots , \hhat{y}_{\smash{F_k^{\smash{v_{d-1}}}}}\} \leq \hhat{y}_{F_{k}^v} = \tilde{y}$ 
\item\label{obs:prod1-iterative-xi} $v_i$ and $v_\ell$ agree on $\Sigma^{\smash{\leq s_\ell}}$ for $d-1 \leq \ell \leq i$.
\item\label{obs:prod1-iterative-xii} $v_i$ and $v$ agree on $\Sigma^{\smash{\leq s_{d-1}}}$.
\item\label{obs:prod1-iterative-xiii} $v_i$ and $w$ agree on $\Sigma^{\smash{\leq s_{i-c-1}}}$.
\item\label{obs:prod1-iterative-xiv} $\ls{v_i} = s_i$.
\end{romanenumerate}
\end{observation}
\begin{proof}
To (\ref{obs:prod1-iterative-i}): Follows from line \ref{alg2:5}.
\medskip
\\
To (\ref{obs:prod1-iterative-ii}): We have already proven that $\rho_{d-1}$ is well-defined. By Observation \ref{obs:well-defined}, $\rho_{d-1} \in \mathcal{S}^{v_{d-1}}$ (cf.~Claim \ref{claim:v-rho-valid}), line \ref{alg2:7} and line \ref{alg2:9}, we have that  
\[\{(k,y,b) \mid c(k,y,b) \in Y_\ell \text{ for }  i-1 \leq \ell < d\} \cap \dom(\rho_{d-1}) = \emptyset,\]
$Y_{i-1}, \dots , Y_{d-1}$ are pairwise disjoint and the respective newly defined values exist. Hence, $\rho_i, \dots, \rho_{d}$ are also well-defined.
\medskip
\\
To (\ref{obs:prod1-iterative-iii}) and (\ref{obs:prod1-iterative-iv}): This holds because we only add words to these oracles, or respectively only increase the domains of the proof selectors and $(v,\rho) \supseteq (w,\sigma) \supseteq (w_s,\sigma_s)$ holds by the assumptions of Claim \ref{claim:V9-recursive} and Lemma \ref{lemma:valid-extension}.
\medskip
\\
To (\ref{obs:prod1-iterative-v}): Follows from line \ref{alg2:6}.
\medskip
\\
To (\ref{obs:prod1-iterative-vi}): Follows from (\ref{obs:prod1-iterative-iii}), (\ref{obs:prod1-iterative-iv}), (\ref{obs:prod1-iterative-v}) and Claim \ref{claim:graph-monotonicity}. 
\medskip
\\
To (\ref{obs:prod1-iterative-vii}): By the definition of $v_{d-1}$, $v_{d-1}$ and $v$ agree on $N_{\mathcal{G}^v_\rho}(v)$. By the assumptions of Claim \ref{claim:V9-recursive}, $v$ and $w$ agree on $N_{\mathcal{G}^{w}_{\sigma}}(w)$. Then (\ref{obs:prod1-iterative-iii}), (\ref{obs:prod1-iterative-iv}), and Claim \ref{claim:graph-monotonicity} gives $N_{\smash{\mathcal{G}^{\smash{v_{d-1}}}_{\rho_{d-1}}}}(v_{d-1}) \supseteq N_{\mathcal{G}^v_\rho}(v) \supseteq N_{\mathcal{G}^{w}_{\sigma}}(w)$. Together with (\ref{obs:prod1-iterative-vi}), the whole chain follows.
\medskip
\\
To (\ref{obs:prod1-iterative-viii}): We have $r^0(k',j') = m < |y'| < |c(k',y',b')| = s_{\ell+1}$ (line \ref{alg2:5}). Together with R\ref{R2}($r^0$) and Claim \ref{claim:stages2}.\ref{claim:stages2-vii}, $p_{k'}(p_{j'}(s_{\ell + 1})) < s_{\ell+2}$.
\medskip
\\
To (\ref{obs:prod1-iterative-ix}): By the definition of $v_{d-1}$, $v_{d-1}$ and $v$ agree on $N_{\mathcal{G}^v_\rho}(v)$. By the assumptions of Claim \ref{claim:V9-recursive}, $v$ and $w$ agree on $N_{\mathcal{G}^{w}_{\sigma}}(w)$. Together with (\ref{obs:prod1-iterative-v}) and (\ref{obs:prod1-iterative-vii}), all these sets agree at least on $N_{\mathcal{G}^{w}_{\sigma}}(w)$.
\medskip
\\
To (\ref{obs:prod1-iterative-x}): Since $c(k,y,b) \notin Q(F_k^v(\tilde{y}))$, $F_k^{\smash{v_{d-1}}}(\tilde{y})=y$ and thus, $\hhat{y}_{\smash{F_k^{\smash{v_{d-1}}}}} \leq \tilde{y} = \hhat{y}_{F_k^v}$. Furthermore, $\rho_{d-1}(k,y,b) = \hhat{y}_{\smash{F_k^{\smash{v_{d-1}}}}}$. Hence, by the definition of dependency graphs, $Q(F_k^{\smash{v_{d-1}}} (\rho_{d-1} (k,y,b))) = Q(F_k^{\smash{v_{d-1}}} (\hhat{y}_{\smash{F_k^{\smash{v_{d-1}}}}})) \subseteq N_{\smash{\mathcal{G}^{\smash{v_{d-1}}}_{\rho_{d-1}}}}(v_{d-1})$. By (\ref{obs:prod1-iterative-vii}) and line \ref{alg2:6}, $F_k^{\smash{v_{\ell}}}(\hhat{y}_{\smash{F_k^{\smash{v_{d-1}}}}}) = y$ for all $d \leq \ell \leq i$. This gives $\hhat{y}_{\smash{F_k^{\smash{v_{\ell}}}}} \leq \hhat{y}_{\smash{F_k^{\smash{v_{d-1}}}}} \leq \tilde{y}$ for all $d \leq \ell \leq i$ and the observation follows.
\medskip
\\
To (\ref{obs:prod1-iterative-xi}): $v_i$ and $v_{d-1}$ agree on $\Sigma^{\leq s_{d-1}}$, because Algorithm \ref{alg:2} adds only words of length $\geq s_d$ to $v_{d-1}$. After $v_\ell$ is defined in iteration $\ell - 1$, Algorithm \ref{alg:2} only adds words of length $\geq s_{\ell+1}$ until $v_i$ is defined. Hence, $v_\ell$ and $v_i$ agree on $\Sigma^{\leq s_\ell}$ for $d \leq \ell \leq i$.
\medskip
\\
To (\ref{obs:prod1-iterative-xii}): By (\ref{obs:prod1-iterative-xi}), $v_i$ and $v_{d-1}$ agree on $\Sigma^{\leq s_{d-1}}$. Since $v$ and $v_{d-1}$ differ only by $c(k,y,b)$ which is of length $s_d$, they also agree on $\Sigma^{\leq s_{d-1}}$. Hence, $v$ and $v_i$ agree on $\Sigma^{\leq s_{d-1}}$.
\medskip
\\
To (\ref{obs:prod1-iterative-xiii}): By assumption of Claim \ref{claim:V9-recursive}, $v$ and $w$ agree on $\Sigma^{\leq s_{i-c}}$. By (\ref{obs:prod1-iterative-xii}) and $d \geq i-c$, $v_i$ and $v$ agree on $\Sigma^{\leq s_{d-1}} \supseteq \Sigma^{\leq s_{i-c-1}}$. Hence, $v_i$ and $w$ agree on $\Sigma^{\leq s_{i-c-1}}$.
\medskip
\\
To (\ref{obs:prod1-iterative-xiv}): Follows from $\ls{v} =s_i$ (cf.~Claim \ref{claim:V9-recursive}), $d < i$ and that Algorithm \ref{alg:2} adds only words of length $\leq s_i$.
\end{proof}
To complete the proof of Claim \ref{claim:V9-recursive}, it remains to show that $(v_i,\rho')$ is $r^0$-valid with the exception of V\ref{V6}($v_i,r^0$), which might be violated. 
\begin{claim}\label{claim:wc-graphvalid}
The pairs $(v_i,\rho_i), \dots ,(v_{d-1},\rho_{d-1})$ are valid.
\end{claim}
\begin{claimproof}
We proof this inductively. By Claim \ref{claim:v-rho-valid}, $(v_{d-1}, \rho_{d-1})$ is valid. Let $(v_\ell,\rho_\ell)$ for $d-1 \leq \ell < i$ be valid. In order to apply Lemma \ref{lemma:remain-valid} to $(v_\ell,\rho_\ell)$ and $Y_\ell$, let us verify that the assumptions \ref{lemma:remain-valid}.\ref{lemma:remain-valid-i}--\ref{lemma:remain-valid-iv} hold. Observe that \ref{lemma:remain-valid}.\ref{lemma:remain-valid-i} holds by line \ref{alg2:7}, \ref{lemma:remain-valid}.\ref{lemma:remain-valid-ii} holds by line \ref{alg2:8}, and \ref{lemma:remain-valid}.\ref{lemma:remain-valid-iii} holds by line \ref{alg2:6}. 

To \ref{lemma:remain-valid}.\ref{lemma:remain-valid-iv}: Let $c(k',y',b') \in Y_{\ell}$. By Observation \ref{obs:prod1-iterative}.\ref{obs:prod1-iterative-viii}, $p_{k'}(p_{j'}(s_{\ell + 1})) < s_{\ell+2}$ for $j'$ with $r^0(k',j')>0$. Hence, by line \ref{alg2:9} and $p_{k'}(p_{j'}(s_{\ell+1})) < s_{\ell+2}$, \ref{lemma:remain-valid}.\ref{lemma:remain-valid-iv} holds. We obtain that $(v_{\ell+1},\rho_{\ell+1})$ is valid.
\end{claimproof}
\begin{claim}\label{claim:wc-V9-valid}
The requirement V\ref{V8}($v_i,r^0$) is satisfied.
\end{claim}
\begin{claimproof}
From the $r_s$-validity of $(w,\sigma)$ (cf.~Lemma \ref{lemma:valid-extension}) it follows that
\[\ls{v_i} \geq \ls{w} \geq \max\{r_s(k) \mid k \in \dom(r_s)\} \geq \max\{r^0(k) \mid k' \in \dom(r^0)\}.\tag*{\claimqedhere}\]
\end{claimproof}
\begin{claim}\label{claim:wc-V6-valid}
The requirement V\ref{V5}($v_i,r^0$) is satisfied.
\end{claim}
\begin{claimproof}
Consider V\ref{V5}($v_i,r^0$) for some tuple $(k',j') \not = (k,j)$ with $r^0(k',j')=0$. Then $(k',j')$ was treated before $(k,j)$ in the oracle construction. Note that by Claim \ref{claim:wskj-defined-for-stage}, $\ls{w_{s_{k,j}}} \leq s_{i-c}$, so $\ls{w_{\smash{s_{k',j'}}}} \leq s_{i-c-1}$. Observe that $w_s \sqsupseteq w_{\smash{s_{k',j'}}}$. Moreover, $v_i$ and $w$ agree on $\Sigma^{< s_{i-c}}$, because $v$ and $w$ agree on $\Sigma^{\leq s_{i-c}}$ (cf.~Claim \ref{claim:V9-recursive}) and we added only words of length $ \geq s_d \geq s_{i-c}$ to $v$. Hence also $v_i \sqsupseteq w_{\smash{s_{k',j'}}}$. 

Since V\ref{V5}($w_{s_{k',j'}},r_{s_{k',j'}}$) is satisfied, there is some $c(k',x,b') \in w_{s_{k',j'}}$ with $F_{k'}^{\smash{w_{s_{k',j'}}}}(\hat{x}) = x$ defined\footnote{For readability we abbreviate $\hhat{x}_{F_{k'}^{w_{s_{k',j'}}}}$ with $\hat{x}$.} and definite and $p_{j'}(|c(k',x,b')|) < |\hat{x}|$. The definite computation stays unchanged relative to $v_i \sqsupseteq w_{s_{k',j'}}$, $c(k',x,b') \in v_i$, and $p_{j'}(|c(k',x,b')|) < |\hhat{x}_{\smash{F_{k'}^{v_i}}}|$, because $\hhat{x}_{\smash{F_{k'}^{v_i}}}$ can not get shorter than $\hat{x}$, since the computations $F_{k'}^{\smash{w_{s_{k',j'}}}}(x')$ are also definite for $x' < \hat{x}$. Hence, V\ref{V5}($v_i,r^0$) is satisfied for $(k',j')$.

It remains to show that V\ref{V5}($v_i,r^0$) is satisfied for $(k,j)$. More precisely, we show that $c(k,y,b) \in v_i$ satisfies all requirements of V\ref{V5}($v_i,r^0$). We do this by analyzing V\ref{V7}($v,\rho,r,d$). Recall that V\ref{V7}($v,\rho,r,d$) is satisfied, because $s_d < s_i$ and $(v,\rho)$ is $r$-valid with the exception of V\ref{V6}($v,r$) for $(k,j)$ (cf.~Claim \ref{claim:V9-recursive}). Let
\[X \coloneqq \{c(k,y,b)\} \cup Y_{d-1}.\]
Then
\begin{align}\label{align:wc-V6-valid}
v \cup X = v_{d-1} \cup Y_{d-1},
\end{align}
because $v_{d-1} = v \cup \{c(k,y,b)\}$. Below we show that $X$ satisfies all requirements V\ref{V7a}($v,\rho,r,d$) to V\ref{V7d}($v,\rho,r,d$). Hence, V\ref{V7e}($v,\rho,r,d$) has to be violated, from which we will conclude that V\ref{V5}($v_i,r^0$) is satisfied.
\medskip
\\
To V\ref{V7a}($v,\rho,r,d$): By $r(k,j)>0$, (\ref{claim:supposition-i}), and $c(k,y,b) \in C^d_{k,y}$, $c(k,y,b)$ satisfies the properties of V\ref{V7a}($v,\rho,r,d$). 

Observe that $r^0$ differs from $r$ only for $(k,j)$ where $r^0(k,j)=0$. Hence, whenever $r^0(k',j') > 0$, then $r^0(k',j') = r(k',j')$. Together with line \ref{alg2:5}, all words in $Y_{d-1}$ also satisfy the properties of V\ref{V7a}($v,\rho,r,d$).
\medskip
\\
To V\ref{V7b}($v,\rho,r,d$): $c(k,y,b)$ was chosen such that $c(k,y,b) \notin N_{\mathcal{G}_\rho^v}(v)$. By Observation \ref{obs:prod1-iterative}.\ref{obs:prod1-iterative-vii} and line \ref{alg2:6}, it holds that $Y_{d-1} \cap N_{\mathcal{G}^{v}_{\rho}}(v) \subseteq Y_{d-1} \cap N_{\smash{\mathcal{G}^{\smash{v_{d-1}}}_{\rho_{d-1}}}}(v_{d-1})  = \emptyset$. Hence, $X \cap N_{\mathcal{G}^{v}_{\rho}}(v) = \emptyset$. 
\medskip
\\
To V\ref{V7c}($v,\rho,r,d$): By Claim \ref{claim:noduplicatecw}, $c(k,y,b) \notin v$. By line \ref{alg2:7} and Observation \ref{obs:prod1-iterative}.\ref{obs:prod1-iterative-iv}. $Y_{d-1} \cap v \subseteq Y_{d-1} \cap v_{d-1} = \emptyset$. Hence, $X \cap v = \emptyset$.
\medskip
\\
To V\ref{V7d}($v,\rho,r,d$): We have for all $k' \in \N^+$ and $y' \in \Sigma^*$
\begin{align*}
\card{(v \cup X) \cap C^d_{k',y'}} &\overset{\text{(\ref{align:wc-V6-valid})}}{=} \card{(v_{d-1} \cup Y_{d-1})  \cap C^d_{k',y'}} \overset{\text{line \ref{alg2:8}}}{\leq} 1.
\end{align*}
\noindent To V\ref{V7e}($v,\rho,r,d$): Since V\ref{V7}($v,\rho,r,d$) is satisfied (cf.~Claim \ref{claim:V9-recursive} and $d<i$) and $X$ satisfies V\ref{V7a}($v,\rho,r,d$) to V\ref{V7d}($v,\rho,r,d$), it follows that V\ref{V7e}($v,\rho,r,d$) must be violated for $X$. We show that V\ref{V7e}($v,\rho,r,d$) is satisfied for all words in $Y_{d-1}$, which implies that V\ref{V7e}($v,\rho,r,d$) is violated for $c(k,y,b)$.

Let $c(k',x,b') \in Y_{d-1}$, then line \ref{alg2:9} gives that $\hat{x}$ exists\footnote{For readability, we abbreviate $\hhat{x}_{\smash{F_{k'}^{v_{d-1} \cup Y_{d-1}}}}$ with $\hat{x}$} with $|\hat{x}| \leq p_{j'}(s_{d})$ for $j'$ with $r^0(k',j')>0$. Then both of the following holds:
\begin{align*}
&r(k',j') > 0\\
&F_{k'}^{v \cup X}(\hat{x}) = F_{k'}^{v_{d-1} \cup Y_{d-1}}(\hat{x}) = x \text{ by equation (\ref{align:wc-V6-valid}).}
\end{align*}
So V\ref{V7e}($v,\rho,r,d$) holds for all words in $Y_{d-1}$. Therefore V\ref{V7e}($v,\rho,r,d$) must be violated for $c(k,y,b) \in X$, i.e., it does not hold that $F_{k}^{v \cup X}(\hhat{y}_{F_k^{v \cup X}}) = y$ with $|\hhat{y}_{F_k^{v \cup X}}| \leq p_j(|c(k,y,b)|)$. Hence, 
\begin{align}\label{align:large-or-not-defined}
|\hhat{y}_{F_k^{v \cup X}}| > p_j(|c(k,y,b)|) \mbox{ or }|\hhat{y}_{F_k^{v \cup X}}| \mbox{ does not exist.}
\end{align}

This finishes the analysis of V\ref{V7}($v,\rho,r,d$). Now, we use equation (\ref{align:large-or-not-defined}) to show that V\ref{V5}($v_i,r^0$) is satisfied. By Claim \ref{claim:wc-graphvalid}, it holds that $\rho_i \in \mathcal{S}^{v_i}$. Since $c(k,y,b) \in v_i$, $F_k^{\smash{v_i}}(\rho_i(k,y,b)) = y$. Hence, $\hhat{y}_{\smash{F_k^{\smash{v_i}}}}$ exists. 

Suppose that $|\hhat{y}_{\smash{F_k^{\smash{v_i}}}}| \leq p_j(|c(k,y,b)|)$, then $F_k^{\smash{v_i}}(\hhat{y}_{F_k^{v_i}})$ can only ask queries of length $<s_{d+1}$, since $p_k(p_j(s_d)) < s_{d+1}$ by Claim \ref{claim:small-polynomial}. But then 
\begin{align*}
F_{k}^{v \cup X}(\hhat{y}_{F_k^{v_i}}) &= F_{k}^{(v \cup X) \cap \Sigma^{<s_{d+1}}}(\hhat{y}_{F_k^{v_i}}) \overset{\text{(\ref{align:wc-V6-valid})}}{=} F_k^{(v_{d-1} \cup Y_{d-1}) \cap \Sigma^{<s_{d+1}}}(\hhat{y}_{F_k^{v_i}})\\
 &\overset{\text{line \ref{alg2:11}}}{=} F_k^{v_d \cap \Sigma^{<s_{d+1}}}(\hhat{y}_{F_k^{v_i}})  \overset{\text{\ref{obs:prod1-iterative}.\ref{obs:prod1-iterative-xi}}}{=} F_k^{v_i \cap \Sigma^{<s_{d+1}}}(\hhat{y}_{F_k^{v_i}}) = F_k^{v_i}(\hhat{y}_{F_k^{v_i}}) = y 
\end{align*}
and hence $F_{k}^{v \cup X}(\hhat{y}_{\smash{F_k^{\smash{v \cup X}}}}) = y$ where $|\hhat{y}_{\smash{F_k^{\smash{v \cup X}}}}| \leq |\hhat{y}_{\smash{F_k^{v_i}}}| \leq p_j(|c(k,y,b)|)$ by supposition. This contradicts equation (\ref{align:large-or-not-defined}), so $|\hhat{y}_{\smash{F_k^{\smash{v_i}}}}| > p_j(c|(k,y,b)|)$ holds. 

Finally, $F_k^{v_i}(\hhat{y}_{\smash{F_k^{\smash{v_i}}}})$ is definite, because
\[p_k(|\hhat{y}_{F_k^{v_i}}|) \overset{\text{\ref{obs:prod1-iterative}.\ref{obs:prod1-iterative-x}}}{\leq} p_k(|\hhat{y}_{F_k^v}|) \overset{\text{(\ref{claim:supposition-iii})}}{\leq} p_k(p_j(s_i)) \overset{\text{\ref{claim:small-polynomial}}}{<} s_{i+1}.\] 
Then, V\ref{V5}($v_i,r^0$) is satisfied for $(k,j)$, because $F_k^{v_i}(\hhat{y}_{F_k^{v_i}}) = y$ is defined and definite, $c(k,y,b) \in v_i$, and $p_j(|c(k,y,b)|) < |\hhat{y}_{F_k^{v_i}}|$ as proved above.
\end{claimproof}
\begin{claim}\label{claim:wc-V8-valid}
The requirement V\ref{V7}($v_i,\rho_i,r^0$) is satisfied.
\end{claim}
\begin{claimproof}
We want to invoke Claim \ref{claim:requirement-monotonicity}.\ref{claim:requirement-monotonicity-vi} for the stages $< s_d$. 
It holds that V\ref{V7}($v,\rho,r,d'$) is satisfied for $d' < d$ (cf.~Claim \ref{claim:V9-recursive}), $(v_i,\rho_i) \supseteq (v, \rho)$ by Observation \ref{obs:prod1-iterative}.\ref{obs:prod1-iterative-iii} and \ref{obs:prod1-iterative}.\ref{obs:prod1-iterative-iv}, $v_i $ and $v$ agree on all $\Sigma^{<s_{d'+1}}$ for $d' < d$ by \ref{obs:prod1-iterative}.\ref{obs:prod1-iterative-xii}, and $N_{\smash{\mathcal{G}^{\smash{v_i}}_{\rho_i}}}(v_i) \supseteq N_{\mathcal{G}^{v}_{\rho}}(v)$ by Observation \ref{obs:prod1-iterative}.\ref{obs:prod1-iterative-vii}. Thus by Claim \ref{claim:requirement-monotonicity}.\ref{claim:requirement-monotonicity-vi}, it holds that V\ref{V7}($v_i,\rho_i,r,d'$) is satisfied for all $d' < d$. By the arguments used in the proof of Claim \ref{claim:hereditary-r}, it holds that also V\ref{V7}($v_i,\rho_i,r^0,d'$) is satisfied for $d' < d$.

Let $\ell \in \N^+$ such that $d \leq \ell \leq i$. We show that V\ref{V7}($v_i,\rho_i,r^0,\ell$) is satisfied. We will invoke Lemma \ref{lemma:V8-extension} to show that V\ref{V7}($v_\ell,\rho_\ell,r^0,\ell$) is satisfied and use Claim \ref{claim:requirement-monotonicity}.\ref{claim:requirement-monotonicity-vi} to transfer this to $(v_i,\rho_i)$. 

Considering prod($v_{d-1},\rho_{d-1},r^0,d$), the lines \ref{alg2:5} to \ref{alg2:9} in the iteration of the loop with value $\ell - 1$ are identical to V\ref{V7a}($v_{\ell-1},\rho_{\ell-1},r^0,\ell$) to V\ref{V7e}($v_{\ell-1},\rho_{\ell-1},r^0,\ell$). Hence, for $(v_{\ell-1},\rho_{\ell-1})$, $Y_{\ell-1}$ is of greatest cardinality satisfying all properties of V\ref{V7}($v_{\ell-1},\rho_{\ell-1},r^0,\ell$). Furthermore, $(v_{\ell-1},\rho_{\ell-1})$ is valid by Claim \ref{claim:wc-graphvalid} and extends to $(v_\ell,\rho_\ell)$ via $Y_{\ell-1}$ exactly as in Lemma \ref{lemma:V8-extension}. Then Lemma \ref{lemma:V8-extension} gives that V\ref{V7}($v_\ell,\rho_\ell,r^0,\ell$) is satisfied. 

Since additionally $(v_i,\rho_i) \supseteq (v_\ell, \rho_\ell)$ by Observation \ref{obs:prod1-iterative}.\ref{obs:prod1-iterative-iii} and \ref{obs:prod1-iterative}.\ref{obs:prod1-iterative-iv}, $v_i$ and $v_\ell$ agree on $\Sigma^{< s_{\ell+1}}$ by Observation \ref{obs:prod1-iterative}.\ref{obs:prod1-iterative-xi}, and $N_{\smash{\mathcal{G}^{\smash{v_i}}_{\rho_i}}}(v_i) \supseteq N_{\smash{\mathcal{G}^{\smash{v_\ell}}_{\rho_\ell}}}(v_\ell)$ by Observation \ref{obs:prod1-iterative}.\ref{obs:prod1-iterative-vii}, Claim \ref{claim:requirement-monotonicity}.\ref{claim:requirement-monotonicity-vi} gives that V\ref{V7}($v_i,\rho_i,r^0,\ell$) is satisfied.

In total, V\ref{V7}($v_i,\rho_i,r^0,i'$) is satisfied for all stages $s_{i'} \leq s_i$. Hence, V\ref{V7}($v_i,\rho_i,r^0$) is satisfied.
\end{claimproof}

Now we got all pieces to finalize the proof of Claim \ref{claim:V9-recursive}. We have $v' = v_i \supseteq v$ (Observation \ref{obs:prod1-iterative}.\ref{obs:prod1-iterative-iv}) with $\ls{v'} = s_i$ (Observation \ref{obs:prod1-iterative-xiv}), $v'$ and $w$ agree on $\Sigma^{\leq s_{i-c-1}} \cup N_{\mathcal{G}^{w}_{\sigma}}(w)$ (Observation \ref{obs:prod1-iterative}.\ref{obs:prod1-iterative-xii} and \ref{obs:prod1-iterative}.\ref{obs:prod1-iterative-ix}), $\rho' = \rho_i \sqsupseteq \rho$ (Observation \ref{obs:prod1-iterative}.\ref{obs:prod1-iterative-iii}), and $(v',\rho')$ is $r^0$-valid (Claims \ref{claim:wc-graphvalid}, \ref{claim:wc-V9-valid}, \ref{claim:wc-V6-valid}, \ref{claim:wc-V8-valid}) except V\ref{V6}($v',r^0$), which may be violated.
\end{claimproof}

We continue the proof of Lemma \ref{lemma:valid-extension}. Recall that we want to prove that $(v,\rho)$ is $r_s$-valid and we have already shown that V\ref{V1}, V\ref{V2}, V\ref{V3}, V\ref{V4}, V\ref{V5}, V\ref{V7} and V\ref{V8} are satisfied with respect to $(v,\rho,r_s)$ (Claims \ref{claim:V2-V3-V5-V6}, \ref{claim:V5-V8}, \ref{claim:V7}). Thus, the following remains to show.
\begin{claim}\label{claim:V6}
The requirement V\ref{V6}($v,r_s$) is satisfied.
\end{claim}
\emph{Proof sketch:} Using induction, we show that we can repeatedly invoke Claim \ref{claim:V9-recursive} and derive a contradiction to the construction of $r_s$. Intuitively, when invoking Claim \ref{claim:V9-recursive} with an erroneous pair, the claim either ``repairs'' the error or propagates it to a previously treated tuple. Since the domain of $r_s$ is finite, this will eventually lead to a contradiction, at latest when arguing for the first treated tuple in $\supp(r_s)$. So as long as V\ref{V6} is not satisfied, we show: 
\begin{enumerate}
\item $(v^0,\rho^0) \coloneqq (v,\rho)$ satisfies the prerequisites of Claim \ref{claim:V9-recursive} and lets us invoke this claim, giving a pair $(v^1,\rho^1)$ with the respective properties described in Claim \ref{claim:V9-recursive}.
\item When given $(v^n, \rho^n)$ by the $n$-th invocation of Claim \ref{claim:V9-recursive}, we can invoke this claim with $(v^n,\rho^n$), obtaining $(v^{n+1}, \rho^{n+1})$ with the respective properties described in Claim \ref{claim:V9-recursive}.
\end{enumerate}
\begin{claimproof}
Assume V\ref{V6}($v,r_s$) is violated. Then let $(k,j)$ be the first tuple according to $\mathcal{T}$ (i.e., the tuple that is treated first in the oracle construction) for which V\ref{V6}($v,r_s$) is violated. Let $(v^0,\rho^0) \coloneqq (v,\rho)$ and $(k^0,j^0) \coloneqq (k,j)$.
\medskip
\\
\textbf{Base case:} We have $(k^0,j^0) \in \dom(\varphi)$ and define $c^0 \coloneqq \varphi(k^0,j^0) \geq 1$. Observe that $v^0 \supseteq w$ and $v^0$ and $w$ agree on $\Sigma^{\leq s_{i-1}} \supseteq \Sigma^{\leq s_{i-c^0}}$ (Observation \ref{obs:w-v-agree2}) and on $N_{\mathcal{G}^{w}_{\sigma}}(w)$ (Observation \ref{obs:w-ws-agree}). It holds that $\ls{v^0} = s_{i}$ (by $(v^0,\rho^0) \coloneqq \text{prod}(w,\sigma,r_s,i)$) and $\rho^0  \sqsupseteq \sigma_s$. Also, $(v^0,\rho^0)$ is $r_s$-valid (Claims \ref{claim:V2-V3-V5-V6}, \ref{claim:V5-V8}, \ref{claim:V7}) with the exception of V\ref{V6}($v^0,r_s$), where $(k^0,j^0)$ is the first tuple according to $\mathcal{T}$ for which V\ref{V6}($v^0,r_s$) is violated (as assumed in this claim). Hence, $(v^0,\rho^0)$ is $r_{s_{\smash{k^0,j^0}}}$-valid (Claim \ref{claim:hereditary-r}) with the exception of V\ref{V6}($v^0,r_{s_{\smash{k^0,j^0}}}$) for $(k^0,j^0)$. Then we can invoke Claim \ref{claim:V9-recursive} with $(v^0,\rho^0)$ and $(k^0,j^0)$ and get a pair $(v^1,\rho^1)$ with the respective properties described in Claim \ref{claim:V9-recursive}.
\medskip
\\
\textbf{Induction step:} Let $(v^{n-1},\rho^{n-1})$, $(k^{n-1},j^{n-1})$ and $c^{n-1} \coloneqq \varphi(k^{n-1},j^{n-1})$ be the input and $(v^n,\rho^n)$ be the output of the $n$-th iterated invocation of Claim \ref{claim:V9-recursive}. The induction hypothesis gives the following properties for $(v^{n-1},\rho^{n-1})$ and $(v^n,\rho^n$):
\begin{romanenumerate}
\item\label{claim:V6-i} $v^n \supseteq v^{n-1} \supseteq w$.
\item\label{claim:V6-ii} $\ls{v^n} = \ls{v^{n-1}} = s_i$.
\item\label{claim:V6-iii} $v^n$ and $w$ agree on $\Sigma^{\leq s_{i-c^{n-1}-1}} \cup N_{\mathcal{G}^{w}_{\sigma}}(w)$.
\item\label{claim:V6-iv} $\rho^n \sqsupseteq \rho^{n-1} \sqsupseteq \sigma$.
\item\label{claim:V6-v} $(v^n, \rho^n)$ is $r_{s_{\smash{k^{n-1},j^{n-1}}}-1} \cup \{(k^{n-1},j^{n-1}) \mapsto 0\}$-valid except for V\ref{V6}($v^n,r_{s_{\smash{k^{n-1},j^{n-1}}}-1} \cup \{(k^{n-1},j^{n-1}) \mapsto 0\}$), which may be violated.
\end{romanenumerate}

If $(v^n, \rho^n)$ is indeed $r_{s_{\smash{k^{n-1},j^{n-1}}}-1} \cup \{(k^{n-1},j^{n-1}) \mapsto 0\}$-valid, we get a contradiction to our oracle construction, i.e., to the construction of $r_s$. By Claim \ref{claim:wskj-defined-for-stage}, $\ls{w_{s_{\smash{k^{n-1},j^{n-1}}}}} \leq s_{i-c^{n-1}}$, so $\ls{w_{s_{\smash{k^{n-1},j^{n-1}}}-1}} \leq s_{i-c^{n-1}-1}$.
Since $w \sqsupseteq w_s \sqsupseteq w_{s_{\smash{k^{n-1},j^{n-1}}}-1}$, and $v^n$ and $w$ agree on $\Sigma^{\leq s_{i-c^{n-1}-1}}$ by (\ref{claim:V6-iii}), it follows that $v^n \sqsupseteq w_{s_{\smash{k^{n-1},j^{n-1}}}-1}$. Thus, when treating the task $\tau _{k^{n-1},j^{n-1}}$, the oracle construction would have chosen
\begin{align*}
r_{s_{\smash{k^{n-1},j^{n-1}}}} &\coloneqq r_{s_{\smash{k^{n-1},j^{n-1}}}-1} \cup \{(k^{n-1},j^{n-1}) \mapsto 0\},\\
w_{s_{\smash{k^{n-1},j^{n-1}}}} &\coloneqq v^n \sqsupsetneq w_{s_{\smash{k^{n-1},j^{n-1}}}-1},\\
\sigma_{s_{\smash{k^{n-1},j^{n-1}}}} &\coloneqq \rho^n \sqsupseteq \sigma_{s_{\smash{k^{n-1},j^{n-1}}}-1},
\end{align*}
a contradiction to the definition of $r_s$ with $r_s(k^{n-1},j^{n-1}) > 0$.

Otherwise, $(v^n,\rho^n)$ is not $r_{s_{\smash{k^{n-1},j^{n-1}}}-1} \cup \{(k^{n-1},j^{n-1}) \mapsto 0\}$-valid, because the requirement V\ref{V6}($v^n,r_{s_{\smash{k^{n-1},j^{n-1}}}-1} \cup \{(k^{n-1},j^{n-1}) \mapsto 0\}$) is violated. Let $(k^n,j^n)$ be the first violating tuple according to $\mathcal{T}$. We can invoke Claim~\ref{claim:V9-recursive} for $(v^n,\rho^n)$. Observe that
\begin{itemize}
\item $c^n \coloneqq \varphi(k^n,j^n) > c^{n-1}$,
\item $v^n \supseteq v^{n-1} \supseteq w$ by (\ref{claim:V6-i}),
\item $\ls{v^n} = s_i$ by (\ref{claim:V6-ii}),
\item $v^n$ and $w$ agree on $\Sigma^{\leq s_{i - c^{n-1} - 1}} \supseteq \Sigma^{\leq s_{i-c^n}}$ and on $N_{\mathcal{G}^{w}_{\sigma}}(w)$ by (\ref{claim:V6-iii}),
\item $\rho^n \sqsupseteq \rho^{n-1} \sqsupseteq \sigma$ by (\ref{claim:V6-iv}),
\item $(v^n, \rho^n)$ is $r_{s_{k^n,j^n}}$-valid with the exception of V\ref{V6}($v^n,r_{s_{k^n,j^n}}$) for $(k^n,j^n)$ by (\ref{claim:V6-v}), Claim \ref{claim:hereditary-r} and $\tau_{k^n,j^n}$ being treated before $\tau_{k^{n-1},j^{n-1}}$.
\end{itemize}
Then Claim \ref{claim:V9-recursive} gives $(v^{n+1},\rho^{n+1})$ with the properties described there.
\medskip
\\
At some point, the induction has to stop via a contradiction. Recall that for all $n$ it holds that $\tau_{k^n,j^n}$ was treated before $\tau_{k^{n-1},j^{n-1}}$, $r_s(k^n,j^n) > 0$. Let $z \coloneqq \card{\{(k,j) \mid r_s(k,j)>0\}} \in \N$.
So, at latest after the $z$-th step of the induction, V\ref{V6}($v^{z},r_{s_{\smash{k^{z-1},j^{z-1}}}-1} \cup \{(k^{z-1},j^{z-1}) \mapsto 0\}$) can not be violated, because there is no tuple $(k^{z},j^{z})$ appearing before $(k^{z-1},j^{z-1})$ in $\mathcal{T}$ with $r_s(k^{z},j^{z}) > 0$. Consequently, the assumption is false and V\ref{V6}($v,r_s$) is not violated. This proves Claim \ref{claim:V6}.
\end{claimproof}
The Claims \ref{claim:V2-V3-V5-V6}, \ref{claim:V5-V8}, \ref{claim:V7}, and \ref{claim:V6} show that $(v,\rho)$ is $r_s$-valid. Since $v \sqsupsetneq w$ is completely defined for the stage $s_i$ and $\rho \sqsupseteq \sigma$, we have proven Lemma \ref{lemma:valid-extension}.
\end{proof}
\begin{proposition}\label{prop:valid-triples}
$(w_s,\sigma_s,r_s)_{s \in \N}$ is a sequence of valid triples and for all $s \in \N^+$ it holds that $(w_s, \sigma_s) \sqsupsetneq (w_{s-1},\sigma_{s-1})$ and $r_s \sqsupseteq r_{s-1}$.
\end{proposition}
\begin{proof}
We prove this inductively. The triple $(w_0,\sigma_0,r_0)$ is valid. By Claim \ref{claim:r-always-valid}, $r_0$ is valid. Furthermore V\ref{V5}($w_0,r_0$), V\ref{V6}($w_0,r_0$), V\ref{V7}($w_0,\sigma_0,r_0$), and V\ref{V8}($w_0,r_0$) hold trivially for a nowhere defined $r_0$. V\ref{V1}($w_0,\sigma_0$) is satisfied because $\sigma_0 \in \mathcal{S}^{\emptyset}$. V\ref{V3}($w_0$) and V\ref{V4}($w_0,\sigma_0$) are satisfied, because $w_0$ does not contain any code words. V\ref{V2}($w_0$) is satisfied, because $w_0$ is empty.

Let $s \in \N$ be arbitrary and $(w_{s-1},\sigma_{s-1},r_{s-1})$ be a valid triple. By Claim \ref{claim:r-always-valid}, $r_s$ is valid and $r_s \sqsupseteq r_{s-1}$ for $s \in \N^+$. Either $(w_s,\sigma_s)$ is chosen such that $(w_s,\sigma_s) \sqsupsetneq (w_{s-1},\sigma_{s-1})$ is $r_s$-valid and we are done. Otherwise, independent of the task, $(w_s,\sigma_s)$ is the result of a possibly repeated invocation of Lemma \ref{lemma:valid-extension}. Observe that the invocation of Lemma \ref{lemma:valid-extension} with $(w_{s-1},\sigma_{s-1}, r_{s-1})$ gives an $r_{s-1}$-valid pair $(v_1,\rho_1) \sqsupsetneq (w_{s-1},\sigma_{s-1})$. If necessary, we can invoke Lemma \ref{lemma:valid-extension} again with $(v_1,\rho_1,r_{s-1})$ and get an $r_{s-1}$-valid pair $(v_2,\rho_2) \sqsupsetneq (v_1,\rho_1)$. We can repeat this as often as necessary. Hence, we can extend $(w_{s-1},\sigma_{s-1})$ to $(w_s,\sigma_s)$ as described in the oracle construction, even when repeated invocations of Lemma \ref{lemma:valid-extension} are necessary. From this it follows that $(w_s,\sigma_s)$ is $r_{s-1}$-valid. By Claim \ref{claim:valid-stepup}, $(w_s,\sigma_s)$ is even $r_s$-valid. Hence, $(w_s,\sigma_s,r_s)$ is a valid triple. Consequently, $(w_s,\sigma_s,r_s)_{s\in \N}$ is a sequence of valid triples.
\end{proof}

Define $O \coloneqq \bigcup _{s \in \N} w_s$, $\sigma \coloneqq \bigcup _{s \in \N} \sigma_s$, and $r \coloneqq \bigcup _{r \in \N} r_s$. All three of them are well-defined and $O$ is fully defined, because by Proposition \ref{prop:valid-triples}, $(w_s, \sigma_s) \sqsupsetneq (w_{s-1}, \sigma_{s-1})$ and $r_s \sqsupseteq r_{s-1}$ for all $s \in \N^+$. 

\subsection{Properties of the Oracle} In this section, we prove the properties that hold relative to $O$. We start by proving a helpful claim about $O$ for later proofs.
\begin{claim}\label{claim:cw-trust}
If $c(k,y,b) \in C \cap O$, then $y \in \ran(F_k^O)$.
\end{claim}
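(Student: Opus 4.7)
The plan is to pick an $s$ with $c(k,y,b) \in w_s$, use V\ref{V1}($w_s,\sigma_s$) to obtain a witnessing computation $F_k^{w_s}(\sigma_s(k,y,b)) = y$, and then push the validity forward along the chain $(w_s,\sigma_s,r_s) \sqsubsetneq (w_{s+1},\sigma_{s+1},r_{s+1}) \sqsubsetneq \cdots$ until the computation becomes definite, at which point Observation~\ref{obs:definite} lifts it to $O$.

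First I would fix some stage $s$ with $c(k,y,b) \in w_s$. Let $u \coloneqq \sigma_s(k,y,b)$, which is defined because V\ref{V1}($w_s,\sigma_s$) gives $\sigma_s \in \mathcal{S}^{w_s}$ and $c(k,y,b) \in w_s$. Then $F_k^{w_s}(u) = y$ holds, and V\ref{V4}($w_s,\sigma_s$) yields the runtime bound $\runtime(F_k^{w_s}(u)) \leq p_k(|u|)$; in particular, $p_k(|u|) < s_{i+2}$ where $s_i \coloneqq |c(k,y,b)|$.

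Next I would invoke Theorem~\ref{thm:valid-triples} to pass to some $s' \geq s$ with $||w_{s'}|| > p_k(|u|)$; such $s'$ exists because every step of the construction strictly extends the oracle by at least one full stage, so $||w_{s'}|| \to \infty$. Since $\sigma_{s'} \sqsupseteq \sigma_s$, we have $\sigma_{s'}(k,y,b) = u$, and since $(w_{s'},\sigma_{s'})$ is $r_{s'}$-valid with $c(k,y,b) \in w_{s'}$, V\ref{V1}($w_{s'},\sigma_{s'}$) gives $F_k^{w_{s'}}(u) = y$. By the choice of $s'$, all oracle queries of this computation have length at most $p_k(|u|) < ||w_{s'}||$, so $F_k^{w_{s'}}(u)$ is definite.

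Finally, $O \sqsupseteq w_{s'}$, so by Observation~\ref{obs:definite}, $F_k^O(u) = F_k^{w_{s'}}(u) = y$, proving $y \in \ran(F_k^O)$. The argument is purely a limit/monotonicity argument; no obstacle arises, since all the delicate combinatorial work has already been absorbed into maintaining V\ref{V1} and V\ref{V4} throughout the construction.
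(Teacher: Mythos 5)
Your argument is correct and matches the paper's proof in all essential respects: both extract the witnessing computation via V1, bound its length via V4 so that the computation becomes definite, and then lift it to $O$ by monotonicity of definite computations. The paper simply skips your intermediate step by choosing $s$ with $\|w_s\| > s_{i+2}$ from the outset (so the computation is definite already at stage $s$), rather than first fixing an arbitrary $s$ with $c(k,y,b) \in w_s$ and then advancing to $s'$; this is a presentational shortcut, not a different argument.
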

\begin{claimproof}
Let $s_i \coloneqq |c(k,y,b)|$ and $s \in \N$, such that $\ls{w_s} \geq s_{i+2}$. Then it holds that $c(k,y,b) \in w_s$. Since $(w_s,\sigma_s,r_s)$ is valid, V\ref{V1}($w_s,\sigma_s$) and V\ref{V4}($w_s,\sigma_s$) are satisfied. By V\ref{V1}($w_s,\sigma_s$), $F_k^{w_s}(\sigma_s(k,y,b)) = y$ and by V\ref{V4}($w_s,\sigma_s$), $p_k(|\sigma_s(k,y,b)|) < s_{i+2}$. Hence, $F_k^{w_s}(\sigma_s(k,y,b))=y$ is definite and thus stays unchanged relative to $O \sqsupseteq w_s$. So $y \in \ran(F_k^O)$.
\end{claimproof}
In Theorem \ref{thm:thick-oracle-2}, we want to show that several $F_k^O$ are not optimal proof systems. For this, we define witness proof systems $G_k^O$ as
\begin{align*}
G_k^O(x) = \begin{cases}
y, &\text{if } x = 0c(k,y,b) \text{ and } c(k,y,b) \in O\\
F_k^O(x'), &\text{if } x=1x' \\
F_k^O(x), & \text{otherwise}
\end{cases}
\end{align*}
We show that $G_k^O$ is a proof system for $\ran(F_k^O)$. 
\begin{claim}\label{claim:GpsF}
$G_k^O$ is a proof system for $\ran(F_k^O)$.
\end{claim}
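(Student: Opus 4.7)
The plan is to verify the two defining properties of a proof system in the sense of Cook--Reckhow relative to $O$: namely that $G_k^O \in \FP^O$ and that $\ran(G_k^O) = \ran(F_k^O)$.

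For the $\FP^O$-claim I would first argue that the three cases in the definition of $G_k^O$ can be distinguished in polynomial time: by inspecting the first bit one decides between the second case and the remaining two, and on input of the form $0x''$ one tries to parse $x''$ as a code word $c(k,y,b)$ of the shape $0^k 1 0^j 1 y b$ with exactly the chosen index $k$ and $|b| = t_{i+\floor{\log i}}$ for the appropriate $i$ determined by $|x''|$; this parsing is purely syntactic and runs in time polynomial in $|x|$. A single oracle query then decides membership of $c(k,y,b)$ in $O$. The remaining two cases consist of invoking $F_k^O$ on at most $|x|$ bits of input, which takes time $p_k$ by the standard enumeration. So $G_k^O$ runs in polynomial time with oracle access to $O$.

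For the range equality I would prove the two inclusions separately. The inclusion $\ran(F_k^O) \subseteq \ran(G_k^O)$ is immediate from the second branch: for every $x'$ with $F_k^O(x') = y$ we have $G_k^O(1x') = F_k^O(x') = y$, so every element of $\ran(F_k^O)$ is an output of $G_k^O$. For the other direction $\ran(G_k^O) \subseteq \ran(F_k^O)$, I would do a case distinction on which branch produced an output $y = G_k^O(x)$. In the second and third branch the output is literally of the form $F_k^O(\cdot)$, hence already lies in $\ran(F_k^O)$. The first branch is the only interesting one: here $x = 0 c(k,y,b)$ with $c(k,y,b) \in O$, and Claim \ref{claim:cw-trust} applied to this code word gives precisely $y \in \ran(F_k^O)$.

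There is no real obstacle here, since the structural content has been packaged into Claim \ref{claim:cw-trust}, which guarantees that every code word in $O$ corresponds to an actual $F_k^O$-computation that stays correct once $O$ is fully defined. The polynomial-time bound on $G_k^O$ is mechanical, and the two range inclusions reduce either to the trivial second branch or to a single invocation of Claim \ref{claim:cw-trust}. Thus $G_k^O \in \FP^O$ with $\ran(G_k^O) = \ran(F_k^O)$, which is the definition of $G_k^O$ being a proof system for $\ran(F_k^O)$ relative to $O$.
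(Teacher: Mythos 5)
Your proof is correct and follows essentially the same approach as the paper: the range inclusion $\ran(F_k^O) \subseteq \ran(G_k^O)$ is immediate from the second branch, the reverse inclusion reduces in the only nontrivial branch to Claim~\ref{claim:cw-trust}, and the runtime is polynomial because all case distinctions and the simulation of $F_k^O$ are mechanical. The only cosmetic difference is how the first branch is detected: you parse $x''$ syntactically as a code word of the required shape, whereas the paper invokes V\ref{V2} to argue that any non-marker member of $O$ is automatically a code word after one oracle query; both routes are valid and lead to the same polynomial bound.
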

\begin{claimproof}
The inclusion $\ran(G_k^O) \supseteq \ran(F_k^O)$ can be seen by line 2 of the definition of $G_k^O$. The inclusion $\ran(G_k^O) \subseteq \ran(F_k^O)$ can not be violated by lines 2 and 3 of the definition of $G_k^O$. This also holds for line 1, because by V\ref{V2} and Claim \ref{claim:cw-trust}, if $c(k,y,b) \in O$, then $c(k,y,b) \in C$ and $y \in \ran(F_k^O)$.

The runtime of $G_k^O$ is polynomial. Checking for the case in line 1 can be done in polynomial time by asking $x$ without the leading bit as oracle query. Checking for the other cases is trivial. Computing the output takes either linear time for line 1 or one has to simulate at most $p_k(|x|)$ steps of $F_k^O$.
\end{claimproof}
Now we prove Theorem \ref{thm:thick-oracle-1}.
\begingroup
\def\thetheorem{\ref{thm:thick-oracle-1}}
\begin{theorem}\label{thm:thick-oracle-2}
There exists an oracle $O$ such that for all $k \in \N^+$ one of the following holds:
\begin{romanenumerate}
\item\label{thm:thick-oracle-2-i} $F_k^O$ is not optimal for $\ran(F_k^O)$.
\item\label{thm:thick-oracle-2-ii} $\ran(F_k^O) \in \UTIME^O (h^c)$ for a constant $c \in \N$.
\end{romanenumerate}
\end{theorem}
\addtocounter{theorem}{-1}
\endgroup
\begin{proof}
Let $k \in \N^+$ be arbitrary and $O$, $\sigma$, and $r$ as constructed at the end of section \ref{sec:oracle}. Let $F_k^O$ be optimal for $\ran(F_k^O)$. Then $F_k^O$ simulates $G_k^O$ via a simulation polynomial $p_j$, i.e., for all $y$, the shortest $F_k^O$-proof of $y$ is at most $p_j$ longer than its shortest $G_k^O$-proof. Intuitively, in this case it was not possible to diagonalize against $(k,\ell)$ for every $\ell \in \N^+$, because otherwise $G_k^O$ would infinitely often have short proofs using code words whereas $F_k^O$ has long proofs. The following claim makes this intuition precise.
\begin{claim}\label{claim:r>0}
There is some $\ell \leq j+1$ with $r(k,\ell) > 0$.
\end{claim}
\begin{claimproof}
Suppose otherwise, then the task $\tau _{k,j+1}$ gets treated at some step $s$ in the oracle construction and $r(k,j+1) = 0$. Since the pair $(w_s,\sigma_s)$ is $r_s$-valid, V\ref{V5}($w_s,r_s$) is satisfied. This means that there is some $y$ such that $F_k^{w_s}(\hhat{y}_{\smash{F_k^{\smash{w_s}}}}) = y$ is defined and definite, $c(k,y,b) \in w_s$ for some $b$, and $p_{j+1}(|c(k,y,b)|) < |\hhat{y}_{\smash{F_k^{\smash{w_s}}}}|$. Since $O \sqsupseteq w_s$, all of this also holds relative to $O$, especially that $\hhat{y}_{\smash{F_k^{\smash{w_s}}}} = \hhat{y}_{\smash{F_k^{\smash{O}}}}$, because due to the definiteness, no shorter proof for $y$ can emerge relative to $O$. Hence, $F_k^O$ does not simulate $G_k^O$ via the simulation polynomial $p_j$: it holds that $G_k^O(0c(k,y,b)) = y$, but
\[p_j(|c(k,y,b)|+1) \leq p_{j+1}(|c(k,y,b)|) < |\hhat{y}_{F_k^O}|.\] 
This makes a simulation of $G_k^O$ via the polynomial $p_j$ impossible.
\end{claimproof}
Let $\ell$ be the unique value such that $m \coloneqq r(k,\ell) > 0$ (cf.~R\ref{R3}). Then, intuitively, every fact $y \in \ran(F_k^O)$ of length $>m$ is encoded into $O$ via shortest code words. Hence for $|y| > m$, a $G_k^O$-proof for $y$ is approximately as long as the shortest code word for $k,y$. The next claim shows this.
\begin{claim}\label{claim:G-has-short-proofs}
For every $y \in \ran (F_k^O)^{>m}$ there is some $b$ such that $c(k,y,b) \in O \cap C^{i(k,y)}_{k,y}$.
\end{claim}
\begin{claimproof}
Let $y \in \ran (F_k^O)^{>m}$ be witnessed by $x$ and let $s_i > p_k(|x|) + s_{i(k,y)}$, i.e., $F_k^O(x)=y$ can not ask queries of length $s_i$ and $O^{< s_i}$ defines the membership of the shortest code words for $y,k$. Let $s \in \N$ such that $s_i \leq \ls{w_s} $. Since $(w_s,\sigma_s)$ is $r_s$-valid, V\ref{V6}($w_s,r_s$) holds. Observe that $r(k,\ell) = m > 0$, $y \in \Sigma^{>m}$, $s_{i(k,y)}  \leq s_i \leq \ls{w_s}$, $F_k^{w_s}(x) = F_k^{O}(x) = y$, and thus $|\hhat{y}_{\smash{F_k^{\smash{w_s}}}}| \leq |x| \leq p_k(|x|) \leq s_i \leq p_j(s_i)$. So, by V\ref{V6}($w_s,r_s$), there is some $b$ such that $c(k,y,b) \in w_s \cap C^{\smash{i(k,y)}}_{k,y} \subseteq O \cap C^{\smash{i(k,y)}}_{k,y}$.
\end{claimproof}
The following algorithm shows that $\ran (F_k^O) \in \UTIME^O (h^c)$ for a constant $c$.
\begin{algorithm}
\caption{decide-Fk}\label{alg3}
\begin{algorithmic}[1]
\State \textbf{Input:} $y \in \Sigma^*$ \label{alg3:1}
\If{$|y| \leq m$ or $i(k,y) <21$}\label{alg3:2}
\State \Return $1$ if $y \in \ran(F_k^O)$, \Return $0$ else\label{alg3:3}
\EndIf\label{alg3:4}
\State compute $i$ with $t_{i-1} < k + |y| + 2 \leq t_i$\label{alg3:5}
\State compute $s_i$\label{alg3:6}
\State query all $x =c(k,y,b) \in C^i_{k,y}$ non-deterministically\label{alg3:7}
\State \Return $1$ if $x \in O$, \Return $0$ else\label{alg3:8}
\end{algorithmic}
\end{algorithm}
\begin{claim}\label{claim:algo1-accepts}
Algorithm \ref{alg3} decides $\ran(F_k^O)$.
\end{claim}
\begin{claimproof}
For inputs $|y| \leq m$ with $i(k,y)<21$, the algorithm is correct. Otherwise $|y| > m$. Line \ref{alg3:6} computes $s_i = s_{i(k,y)}$, because line \ref{alg3:5} chooses $t_i$ as small as possible for $k,y$. Then, the algorithm accepts in line \ref{alg3:8} if and only if there exists some $b$ with $c(k,y,b) \in O \cap C^{i(k,y)}_{k,y}$. By Claim \ref{claim:cw-trust}, if $y$ gets accepted, then $y \in \ran(F_k^O)$. By Claim \ref{claim:G-has-short-proofs}, if $y \in \ran(F_k^O)$, then the algorithm accepts $y$ on at least one computation path. Hence, the algorithm decides $\ran(F_k^O)$.
\end{claimproof}
\begin{claim}\label{claim:algo1-time}
There is $c \in \N$ such that each computation path of algorithm \ref{alg3} takes $O(h^c)$ time.
\end{claim}
\begin{claimproof}
The lines \ref{alg3:2} and \ref{alg3:3} can be processed in polynomial time, because there are only finitely many $y$ with $|y| \leq m$ and $i(k,y) < 21$. So we can use a precoded table of constant size for step \ref{alg3:3}. Line \ref{alg3:5} can be computed in polynomial time in $|y|$ by $T \in \P$ (cf.~Claim \ref{claim:stages2}.\ref{claim:stages2-i}). Similarly, from $S \in \P$, line \ref{alg3:6} can be computed in polynomial time in $s_i$, so this step takes $\leq (s_i/2)^{c'}$ steps for a constant $c'$. Since $|y| > r(k,\ell) > r(k) \geq k+2$ (cf.~R\ref{R2}) and $i(k,y) \geq 21$, we can invoke Claim \ref{claim:smallest-cw-length-y} and get $h(|y|) \geq s_{i(k,y)}/2 = s_i/2$. So line \ref{alg3:6} takes $\leq h(|y|)^{c'}$ steps. The lines \ref{alg3:7} and \ref{alg3:8} can be processed in $(s_i/2)^{c''} \leq h(|y|)^{c''}$ steps for a constant $c''$, once $t_i$ and $s_i$ are known. In total, a computation path takes $O(h^{c})$ time for a constant $c$. 
\end{claimproof}
\begin{claim}\label{claim:inUtime}
There is $c \in \N$ such that $\ran(F_k^O) \in \UTIME^O(h^c)$.
\end{claim}
\begin{claimproof}
The Claims \ref{claim:algo1-accepts} and \ref{claim:algo1-time} show that there is a non-deterministic algorithm that decides $\ran(F_k^O)$ in time $O(h^c)$ for some $c \in \N$, giving $\ran(F_k^O) \in \NTIME^O(h^c)$. Consider any input $y$. If $y$ is accepted or rejected in line \ref{alg3:3}, we only have one computation path. In line \ref{alg3:8}, the algorithm accepts for every $b \in \Sigma^{t_{i+\floor{\log i}}}$ with $c(k,y,b) \in O$. Let $s$ be the first step with $\ls{w_s} \geq s_i$. Then V\ref{V3}($w_s$) gives that there is at most one $b \in \Sigma^{t_{i+\floor{\log i}}}$ with $c(k,y,b) \in w_s$. Since the stage $s_i$ is completely defined for $w_s$ and $O \sqsupseteq w_s$, this also holds for $O$ and hence, there is at most one computation path that accepts at line \ref{alg3:8}. Thus, $\ran(F_k^O) \in \UTIME^O(h^c)$.
\end{claimproof}
Summarized, either $F_k^O$ is not optimal for $\ran(F_k^O)$ and (\ref{thm:thick-oracle-2-i}) is satisfied or $F_k^O$ is optimal for $\ran(F_k^O)$ and (\ref{thm:thick-oracle-2-ii}) is satisfied by Claim \ref{claim:inUtime}.
\end{proof}
\begin{theorem}\label{thm:NQP}
There exists an oracle $O$ such that all $L \in \RE^O \setminus \NQP^O$ have no optimal proof system relative to $O$.
\end{theorem}
\begin{proof}
Choose $O$ according to Theorem \ref{thm:thick-oracle-2} and $L \in \RE^O \setminus \NQP^O$ arbitrary. Let $f \in \FP$ be an arbitary proof system for $L$. Since $\{F_k^O\}_{k \in \N^+}$ is a standard enumeration of $\FP^O$-machines, there is some $k \in \N^+$ such that $F_k^O(x) = f(x)$ for all $x \in \Sigma ^*$. Observe that for any constant $c \in \N$ it holds that
\[h^c = \left(2^{(\log n)^{40}}\right)^c = 2^{c \cdot (\log n)^{40}},\]
so $\NQP^O \supseteq \NTIME^O(h^c) \supseteq \UTIME^O(h^c)$. Since $\ran(F_k^O) \notin \NQP^O \supseteq \UTIME^O(h^c)$ for any constant $c$, by Theorem \ref{thm:thick-oracle-2}, $F_k^O$ is not optimal for $\ran(F_k^O)=L$.
\end{proof}
\begin{corollary}\label{cor:thick}
There exists an oracle $O$ relative to which all of the following holds:
\begin{enumerate}
\item\label{cor:thick-1} All $L \in \RE \setminus \NQP$ have no optimal proof system.
\item\label{cor:thick-2} All $L \in \RE \setminus \NQP$ have no p-optimal proof system.
\item\label{cor:thick-3} $\NP \neq \coNP$.
\item\label{cor:thick-4} $\NP$ has measure $0$ in $\E$ (i.e., the Measure Hypothesis \cite{lut97} is false).
\end{enumerate}
\end{corollary}
\begin{proof}
Choose $O$ like in Theorem \ref{thm:NQP}.
\medskip
\\
To \ref{cor:thick-1} and \ref{cor:thick-2}: Follows from Theorem \ref{thm:NQP} and that p-optimality implies optimality.
\medskip
\\
To \ref{cor:thick-3}: Messner \cite[Cor.~3.40]{mes00} shows relativizably that if $\NP = \coNP$, then for every time constructible function $t$, there is a set $L \in \RE \setminus \NTIME(t)$ that has an optimal proof system. Theorem~\ref{thm:NQP} shows that relative to $O$, if $\NTIME(t) \supseteq \NQP$, then all sets in $\RE \setminus \NTIME(t)$ have no optimal proof system. Hence $\NP \neq \coNP$. 
\medskip
\\
To \ref{cor:thick-4}: Chen and Flum \cite[Thm.~7.4]{cfm14} show relativizably that if the Measure Hypothesis holds, then there exist problems in $\NE \setminus \NP$ with optimal proof systems. They proof this statement by showing that there is a $Q \in \NE$ which is $\NTIME(2^n)$-immune and has an optimal proof system. Since $Q$ is infinite and $\NTIME(2^n)$-immune, $Q \notin \NTIME(2^n)$. Hence, Chen and Flum even show that if the Measure Hypothesis holds, then there exist problems in $\NE \setminus \NTIME(2^n) \subseteq \NE \setminus \NQP$ with optimal proof systems. Since Theorem \ref{thm:NQP} shows that relative to $O$, $\RE \setminus \NQP \supseteq \NE \setminus \NQP$ has no sets with optimal proof system, the Measure Hypothesis is false.
\end{proof}

\section{Conclusion and Open Questions}

We construct two oracles that give evidence for the inherent difficulty
of finding complex sets with optimal proof systems.
This explains why the search for these sets has so far been unsuccessful.

There is potential to improve Theorem~\ref{thm:thick-oracle-2}.\ref{thm:thick-oracle-2-ii}. We choose $t = n^{\log^* n}$ as a slowly growing superpolynomial function, which determines how fast the distance between consecutive stages grows and subsequently determines the runtime above which optimal proof systems do not exist. Different choices of $t$ or tighter estimates could improve this runtime bound. To what extent can this improve our results? More precisely, is there an oracle relative to which there are no optimal proof systems for sets outside $\NTIME(h')$ where
\begin{itemize}
    \item $h'$ grows slower than $h$ from section \ref{sec:thick-oracle}?
    \item $h'$ is an arbitrary superpolynomial function?
\end{itemize} 
Although we believe that our techniques can be used to tackle the questions above, it seems that they are not sufficient to construct an oracle relative to which no set outside $\NP$ has optimal proof systems, because our construction heavily relies on the fact that the distance between consecutive stages grows faster than any polynomial. 

In this paper we focused on optimal proof systems.
But what about similar questions for p-optimal proof systems?
In particular, we are interested in the following questions: Is there an oracle relative to which
\begin{itemize}
\item $\NP \setminus \P$ has no sets with p-optimal proof systems and $\P \neq \NP$?
\item $\RE \setminus \QP$ has no sets with p-optimal proof system?
\end{itemize}

Moreover, it would be interesting to find evidence for 
the difficulty
of questions about other variants of proof systems.
For example, the difficulty to find optimal heuristic proof systems
for complex sets \cite{hims12}.

\bibliography{Literatursammlung}

\end{document}